\documentclass[11pt]{article}

\usepackage[english]{babel}

\usepackage[a4paper,top=2.5cm,bottom=2.5cm,left=3cm,right=3cm,marginparwidth=1.75cm]{geometry}

\usepackage{amsmath}
\usepackage{graphicx}
\usepackage[colorlinks=true, allcolors=blue]{hyperref}
\usepackage[round]{natbib}
\usepackage{algorithm}
\usepackage{algpseudocode}
\usepackage{hyperref}
\usepackage{subcaption}
\usepackage{bm}

\usepackage[svgnames]{xcolor}
\definecolor{myblue}{RGB}{33, 150, 243}
\definecolor{myorange}{RGB}{255, 152, 0}

\newcommand{\la}[1]{\textcolor{magenta}{#1}}

\title{\textbf{Extending TCLUST to higher dimensions}}
\author{Lucía Trapote, Luis Ángel García-Escudero  and Agustín Mayo-Iscar\\
		University of Valladolid \\Department of Statistics and Operations Research, Valladolid, Spain}

\usepackage{amsfonts,amssymb,amsthm,mathrsfs, xcolor,listings}

\usepackage{algorithmicx}
\usepackage{algpseudocode}

\theoremstyle{plain}
\newtheorem{theorem}{Theorem}[section]

\newtheorem{lemma}{Lema}         
\numberwithin{lemma}{section} 
\newtheorem{proposition}[theorem]{Proposition}

\newtheorem{remark}{Remark}

\providecommand{\norm}[1]{\lVert#1\rVert}
\begin{document}
\maketitle

\begin{abstract}
Outliers are known to significantly distort the results of many commonly used clustering methods, often leading to unreliable partitions. To address this issue, several robust clustering approaches have been developed that not only reduce their influence but also facilitate the detection of meaningful outliers. This presentation focuses on robust clustering methods based on trimming, especially TCLUST, which extends the type of trimming used by MCD in one-population problems to the more general case of multiple and unknown clusters. While TCLUST performs well on low-dimensional data, it struggles with high-dimensional datasets due to the complexity of estimating a large number of parameters. The Robust Linear Grouping (RLG) method offers an alternative by assuming clusters lie near lower-dimensional subspaces, thereby combining clustering with dimensionality reduction. However, RLG has limitations when subspaces intersect and assumes overly simplistic isotropic orthogonal errors. A robust clustering method extending TCLUST will be presented, building on the High Dimensional Data Clustering (HDDC) approach by incorporating trimming and eigenvalue constraints. This new approach, called tHHDC, combines TCLUST and RLG, requiring careful modification and integration of both methodologies within that HDDC framework. A study of the theoretical properties of this approach, together with a feasible algorithm for its implementation, will be presented. The interest of the proposed methodology, along with the issue of selecting input parameters, will be illustrated through a simulation study and a real-data example.
\end{abstract}

\section{Introduction}\label{sec_1}

Outliers are a well-known challenge in cluster analysis, often leading to misleading or erroneous conclusions. Specifically, their presence can result in relevant and well-defined clusters being incorrectly merged or, conversely, in the detection of spurious clusters that do not correspond to any meaningful structure in the data. This motivates the interest in the use of robust clustering techniques 
\citep{garcia2010review,banerjee2012robust,ritter2014robust,garcia2016robustness,garcia2024robust}. 
These methods are specifically designed to withstand a certain proportion of outlying observations. 

In some cases, outliers can be interpreted as forming small, isolated clusters themselves, thereby suggesting an increase in the total number of clusters, $G$. However, inflating the number of clusters is not always a viable or effective strategy. It can be computationally infeasible and may undermine the interpretability of the clustering results. Additionally, clustered outliers represent one of the most difficult forms of contamination, 
as they can affect even robust statistical procedures. Consequently, adopting a unified approach 
that simultaneously addresses both clustering and outlier detection is a natural and effective strategy. This unified approach is also motivated by the close connection between cluster analysis and anomaly detection goals, since cluster analysis seeks to identify dense regions or ``crowds'' of data points, and anomaly detection focuses on identifying observations that lie far from these regions. 

Several robust clustering strategies have been proposed in the literature. One approach relies on heavy-tailed mixture models, such as mixtures of $t$-distributions or mixtures of contaminated Gaussian distributions \citep{peel2000robust,punzo2016robust}. Another strategy is noise modeling, where outliers are fitted through uniformly distributed noise components \citep{banfield1993model,coretto2016robust}. A third major direction consists of trimming-based approaches, where outliers are tried to be excluded from the clustering process \citep{cuesta1997trimmed,neykov2007robust,garcia2008general}.

In this paper, we focus on that trimming-based approach to robust clustering and, in particular in the TCLUST method \citep{garcia2008general}. Trimming methods are easy to interpret and statistically appealing: a predefined fraction $\alpha$ of the data is excluded, under the assumption that these observations are the most outlying ones, and standard statistical tools are subsequently applied to the remaining data. The trimming approach adopted is data-driven and is similar to approaches used in robust statistics with high-breakdown-point behavior already applied in regression and multivariate estimation, such as the Least Trimmed Squares (LTS) estimator \citep{rousseeuw1984least} and the Minimum Covariance Determinant (MCD) estimator \citep{rousseeuw1985multivariate}. The computational implementation of these techniques has been facilitated by the development of ``concentration steps'' or ``C-steps'', introduced by \cite{rousseeuw1999fast}.

The TCLUST method, applying trimming as well as appropriate constraints on the eigenvalues of the component scatter matrices, has shown good properties in terms of theoretical behavior, algorithmic performance, and robustness. We will review this TCLUST method in Section \ref{sec_21}. Unfortunately, these good properties become less satisfactory in practical applications when TCLUST is applied to high or even moderately high dimensions (for example, dimensions not smaller than 8 or 10).  This performance degradation of TCLUST as $p$ increases will be illustrated in Sections \ref{sec_4} and \ref{sec_5}. Two primary difficulties arise in high dimensions for TCLUST: first, the problem of initialization, as the algorithm heavily relies on a good random initialization for the concentration steps; and second, the high number of parameters to estimate, which grows rapidly with $p$ due to the consideration of full covariance structures. It is obvious that higher-dimensional problems are increasingly common in modern statistical practice, and therefore it seems reasonable to attempt to extend this good performance of TCLUST to higher dimensions.

Assuming that the observations are clustered around subspaces of lower dimension than the ambient space, it is reasonable to consider simultaneous robust clustering and dimensionality reduction. With this idea in mind, the robust linear clustering (RLG) method introduced in \cite{GarciaEscudero_et_al_2009_RobustLinearClustering} incorporates trimming into a subspace-based clustering approach. This procedure, which we will review in Section \ref{sec_21}, has some limitations due to its simplifying assumptions of isotropic orthogonal errors, not adequately modeling the projections onto the approximating spaces, and troubles with intersecting subspaces.

To extend the applicability of TCLUST to higher dimensions, we will present in Section \ref{sec_3} a compromise between TCLUST and RLG. This compromise builds on the HDDC methodology introduced in \cite{bouveyron2007high} by incorporating trimming and suitable constraints on the different dispersion-related parameters. After presenting the methodology in Section \ref{sec_31}, Section \ref{sec_32} introduces its population formulation, together with theoretical results on existence and consistency. A feasible algorithm for its practical implementation is presented in Section \ref{sec_33}. That algorithm requires certain modifications compared to the algorithms already proposed for RLG, TCLUST and HDDC. The methodology introduced in Section \ref{sec_3} involves the selection of a considerable number of tuning parameters. Section \ref{sec_4} provides guidance to help the reader understand the different roles these tuning parameters play. A procedure will also be introduced to determine, in a data-driven way, the intrinsic dimensions of the approximating subspaces. This procedure requires a modification of the objective function to include a model complexity penalty, similar to that considered in \cite{cerioli2018finding}. The practical interest of the proposal will be supported through simulations in Section \ref{sec_5} and by analyzing a well-known real dataset in Section \ref{sec_6}. Finally, Section \ref{sec_7} summarizes the main results obtained and outlines some open directions for future research.

\section{TCLUST and RLG}\label{sec_2}

\subsection{TCLUST}\label{sec_21}

Although the first robust clustering method considering a data-driven trimmed approach was the \emph{trimmed $k$-means} in \cite{cuesta1997trimmed}. Trimmed $k$-means, being an extension of the classical $k$-means method, inherits its preference for homoscedasticity and spherical clusters. To address this shortcoming, TCLUST was proposed in \cite{garcia2008general} as a more flexible, but still robust, alternative. TCLUST performs maximum likelihood estimation  by assuming that the data arise from $G$ Gaussian distributed components, while incorporating both trimming and regularization on the scatter matrix via eigenvalue ratio constraints.

If we consider a dataset $\lbrace x_1, \ldots, x_n \rbrace$ in  $\mathbb{R}^p$, TCLUST searches for $G$ centers $\mu_1, \dots, \mu_G$, $G$ symmetric positive definite scatter matrices $\Sigma_1, \dots, \Sigma_G$, and $G$ mixing weights $\pi_1, \dots, \pi_G$ satisfying $\sum_{g=1}^G \pi_g = 1$, along with a partition $\{R_0, R_1, \dots, R_G\}$ of the data, where $R_0$ contains $[ n \alpha ]$ observations considered as the potentially most outlying ones (for $\alpha \in [0,1)$ being the trimming level), and the rest of observations with indexes in $R_1$,..., $R_G$ are allocated to $G$ assumed Gaussian clusters. The goal is to maximize the \emph{classification trimmed log-likelihood}:
\begin{equation}\label{tclust}
    \sum_{g=1}^G \sum_{i \in R_g} \log\left[ \pi_g \, \phi(x_i; \mu_g, \Sigma_g) \right],
\end{equation}
where $\phi(x; \mu, \Sigma)$ denotes the density function of a $p$-variate normal distribution with mean $\mu\in \mathbb{R}^p$ and $\Sigma$ being a $p\times p$ scatter matrix. 
Another key feature of TCLUST is the consideration of the \emph{eigenvalues-ratio constraint}, which imposes that the scatter matrices of the clusters remain ``well-conditioned'' by bounding the ratio of the largest to the smallest eigenvalue across all clusters:
\begin{equation}\label{eigenavalus_ratio}
    \frac{\underset{g=1,\ldots,G}{\max}\:\:\underset{l=1,\dots,p}{\max} \lambda_l(\Sigma_g)}{\underset{g=1,\ldots,G}{\min}\:\:\underset{l=1,\dots,p}{\min} \lambda_l(\Sigma_g)} \leq c,
\end{equation}
where $\lambda_l(\Sigma)$ denotes the $l$-th largest eigenvalue of the matrix $\Sigma$, and $c \geq 1$ is a user-defined parameter. The inclusion of this constraint ensures that the maximization problem remains mathematically well-posed, because it avoids the unboundedness of the target function (\ref{tclust})  when a component concentrates around a single point, i.e. $\mu_g=x_i$, and its scatter matrix collapses, i.e. $|\Sigma_j|\downarrow 0$. Moreover, it provides robustness by preventing the algorithm from capturing so-called ``spurious'' components, formed by a few almost collinear observations, being almost degenerate and not very informative clusters  (with $|\Sigma_j| \simeq 0$).  If $c = 1$ and $\pi_1 = \dots = \pi_G$, the TCLUST method reduces to trimmed $K$-means.

In the particular case where the number of clusters is $G = 1$  (so that a single common covariance matrix is required) and the eigenvalue-ratio constraint parameter $c$ is large, the TCLUST approach becomes equivalent to the well-known Minimum Covariance Determinant (MCD) estimator introduced by \cite{rousseeuw1985multivariate}, which seeks the subset size $n-[n\alpha]$ with the smallest determinant of the sample covariance matrix. %The MCD estimator provides high breakdown point robustness and has become a cornerstone of robust multivariate statistics.

A closely related approach to TCLUST that generalizes the idea of trimming within a mixture model framework is based on the maximization of the \emph{trimmed mixture likelihood}. This methodology was initially proposed by \cite{neykov2007robust} and later extended by \cite{garcia2014robust} by incorporating the eigenvalues-ratio constraints. Instead of assigning each observation to a specific cluster during optimization, the trimmed mixture likelihood considers a full mixture density and trims a proportion $\alpha$ of the data points that contribute least to the overall mixture log-likelihood.
Formally, the goal in that trimmed mixture likelihood approach is to find $G$ weight parameters, $\pi_1, \dots, \pi_G$ satisfying $\sum_{g=1}^G\pi_g=1$, $G$ location parameters, $\mu_1, \dots, \mu_G$, and $G$ scatter matrices, $\Sigma_1, \dots, \Sigma_G$, as well as a subset $R \subset \{1, \dots, n\}$ with $\# R = n-[n\alpha]$, that maximize the trimmed log-likelihood:
\begin{equation}\label{mixtTCLUST}
   \sum_{i \in R} \log\left[ \sum_{g=1}^G \pi_g \, \phi(x_i; \mu_g, \Sigma_g) \right]. 
\end{equation}
The constraint (\ref{eigenavalus_ratio}) is also incorporated, again to get mathematically well-defined problems and prevent for detecting spurious solutions.

%In summary, both the MCD estimator (for $G = 1$) and trimmed mixture likelihoods (for $G > 1$) are closely related to TCLUST, and all share the common objective of robustly estimating parameters in the presence of contamination. Their differences lie primarily in how they allocate or weight observations and in the assumptions imposed on the data structure.

The TCLUST method exhibits good properties both in terms of asymptotic behavior and robustness. Under mild assumptions, existence of both empirical and population problem solutions can be guaranteed together with a consistency result \citep{garcia2008general}. 
Regarding influence behavior, it exhibits desirable theoretical properties. Specifically, it has been shown to possess a bounded influence function \citep{ruwet2012}, and good breakdown point properties for ``well-clusterized'' datasets \citep{ruwet2013}. Additionally, an efficient algorithm \citep{fritz2013fast} is also available for its application based on adapted C-steps (kind of trimmed classification/mixture EM steps). Notably, this type of algorithm can be applied with the \texttt{tclust()} function of the \texttt{tclust} package in \texttt{R} \citep{fritz2012} or through its implentation within the \texttt{FSDA} toolbox for \texttt{Matlab} \citep{cerioli2012}.

Higher-dimensional problems are increasingly common in modern statistical practice, particularly in areas such as bioinformatics, image analysis, and finance. Although TCLUST performs well in low-dimensional settings, it faces notable challenges as the number of variables $p$ increases. This will be illustrated in Sections \ref{sec_5} and \ref{sec_6}. Two primary difficulties arise in high dimensions when using TCLUST (hard and mixture versions): first, the problem of initialization, as the algorithm becomes more sensitive to starting values, increasing the risk of convergence to suboptimal solutions; and second, the high number of parameters to estimate, which grows rapidly with $p$ and can lead to computational inefficiency and overfitting, especially when dealing with full covariance structures.

Although the eigenvalue ratio constraint serves to ``regularize'' the target functions (\ref{tclust}) and (\ref{mixtTCLUST}) when $c$ is small and prevents eigenvalues in $\Sigma_g$ from being arbitrarily close to 0, unfortunately, a small $c$, i.e. $c$ close to 1, would also impose highly constrained cluster partitions, with almost spherical, equal-scatter, trimmed $k$-means-type structures, which can be unsatisfactory for many applications.

\subsection{RLG}\label{sec_22}

In high-dimensional settings, it is unrealistic to assume that the data are uniformly dispersed throughout the ambient space. Instead, it is common to assume that the observations lie close to lower-dimensional structures, such as affine subspaces or manifolds, which capture the intrinsic variability of the data. In clustering problems, this translates into the assumption that each cluster is concentrated around its own subspace, potentially with a different orientation and dimensionality. Such an assumption aligns with many practical applications, where different groups or classes of observations are governed by distinct generative mechanisms that vary along only a few relevant directions.

A common strategy for clustering and dimension reduction involves first applying Principal Component Analysis (PCA) to reduce dimensionality, followed by clustering on the transformed data. This sequential or ``tandem'' approach may distort the original cluster structure and lead to suboptimal results, as noted by \citet{Chang1983}. A more effective strategy is to perform clustering and dimension reduction simultaneously, for instance, by assuming that the observations are distributed around a union of $G$ affine subspaces, each associated with a different cluster. %To model this, several probabilistic frameworks have been proposed. One such framework is the mixture of probabilistic principal component analyzers (PPCA), introduced by \citet{TippingBishop1999}, where each cluster is modeled using a local PPCA structure. Another widely used model is the mixture of factor analyzers (MFA), which generalizes the PPCA mixture model by allowing for more flexible covariance structures; see \citet{GhahramaniHinton1997} and \citet{McLachlanPeel2000} for foundational developments in this direction.

With the aim of robustifying the process of detecting clustering around affine subspaces,  the \textit{Robust Linear Grouping}  (RLG) searches for $G$ affine subspaces $B_1,...,B_G$ with intrinsic dimensions $q_1,..., q_G$  and a partition $\{ R_0,R_1,...,R_G\}$ with $\#R_0=[n \alpha]$   minimizing
  $$
  \sum_{g= 1}^G\sum_{i\in R_g}\|x_i-\text{Pr}_{\mathcal{B}_g}(x_i)\|^2,
  $$
where  $\text{Pr}_{\mathcal{B}}(x)$ denotes the orthogonal projection of $x\in\mathbb{R}^p$ onto subspace $\mathcal{B}$. This method (a particular case of it where $q_g=p-1$ for $g=1,...,G$) was introduced in \cite{GarciaEscudero_et_al_2009_RobustLinearClustering}. A particular case of it, as a robustification by trimming of the classical PCA, was already introduced through the LTS-PCA method with $G=1$ \citep{maronna2005principal,croux2017robust}. Moreover, trimmed $k$-means can be also seen as a particular case when $q_1=...=g_G=0$, where affine subspaces are just centroids. The algorithm proposed for its implementation is analogously based on concentration steps. This method can be also applied throughout the \texttt{rlg()} function in the \texttt{tclust} package or by applying the \texttt{fsda} Matlab toolbox. 

As RLG is just based on orthogonal distances to the cluster-specific approximating affine subspaces, then it assumes isotropic orthogonal errors, and modeling certain heteroskedasticity for those errors. Note also that it does not take into account the particular position of the projected observation, and does not take into account that information for labeling outliers. In fact, RLG exhibits troubles with intersecting subspaces as can be seen in Figure \ref{figure1}. It can be observed that one of the clusters ``absorbs'' part of another, due to the fact that observations distant from the main cluster that clearly defines the approximating subspace eventually become incorporated as that subspace extends.

\begin{figure}[htbp]
        \centering
        \includegraphics[clip,trim=0.25cm 1cm 0.5cm 1cm,scale=0.7]{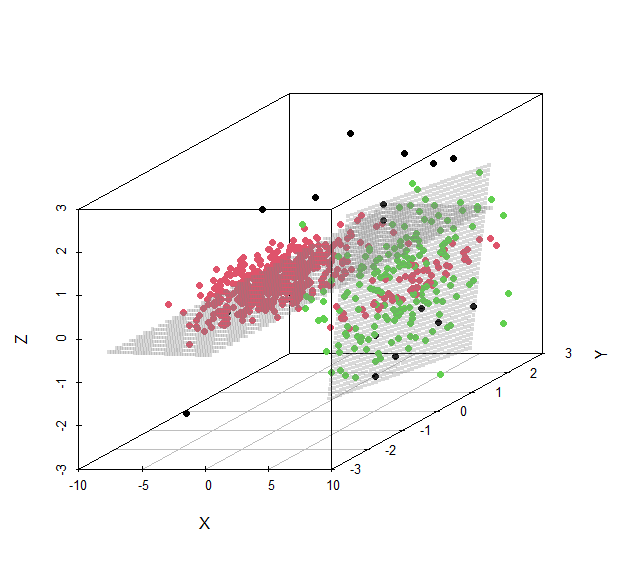}
        \caption{Troubles with intersecting subspaces  when using RLG with $q_1=q_2=2$. The trimmed observations are the black points.}
        \label{figure1}
    \end{figure} 

\section{Trimmed HDDC}\label{sec_3}

\subsection{Methodology}\label{sec_31}

 The robust clustering approaches associated to TCLUST presented in Section \ref{sec_21} require estimating full $\Sigma_g$ covariance matrices when $c>1$, which implies that the number of parameters grows quadratically with the data dimension. However, we can assume that high-dimensional data often concentrate near low-dimensional subspaces, as considered in Section \ref{sec_22}. That assumption is going to be adopted, but following the High-Dimensional Data Clustering (HDDC) approach introduced in \cite{bouveyron2007high}. Therefore, it is assumed that each class-conditional distribution follows a multivariate Gaussian with location vectors $\mu_1,...,\mu_G$ and scatter matrices $\Sigma_1,...,\Sigma_G$. However, a particular parsimonious structure is assumed for the $\Sigma_g$ covariance matrices. To present this parsimonious assumption, let $Q_g$ be the orthogonal matrix whose columns are the eigenvectors of $\Sigma_g$. The covariance matrix $\Sigma_g$ is assumed to satisfy
$\Delta_g = Q_g^T \Sigma_g Q_g$,
where $\Delta_g$ is a diagonal matrix which contains the sorted eigenvalues of $\Sigma_g$. To reduce the number of parameters, it is assumed that $\boldsymbol{\Delta}_g$ has a parsimonious structure
\begin{equation}\label{delta}
 \boldsymbol{\Delta}_g = 
\begin{bmatrix}
\lambda_{g1}   &         & 0       &         &         &         \\
         & \ddots  &         &         &         &         \\
0        &         & \lambda_{gq_g} &         &         &         \\
         &         &         & \lambda_g     &         & 0       \\
         &         &         &         & \ddots  &         \\
         &         &         & 0       &         & \lambda_g     
\end{bmatrix}
\in \mathbb{R}^{p \times p},   
\end{equation}
where $\lambda_{gl} > \lambda_g$ for $l \leq q_g$, and $q_g \in \lbrace 0, \ldots, p - 1 \rbrace$ is an unknown intrinsic dimension specific to the $g$-th group. Note that if $q_g=0$ then no $\lambda_{gl}$ values exist %there no exists $\lambda_{gl}$ values 
and, in that particular case, $\boldsymbol{\Sigma}_g=\boldsymbol{\Delta}_g=\lambda_g \mathbf{I}_p$, where $\mathbf{I}_p$ is the $p\times p$ identity matrix. A class-specific subspace $\mathcal{B}_g$ is defined as the affine subspace spanned by the $q_g$ eigenvectors associated with the $q_g$ largest eigenvalues $\lambda_{gl}$ of $\boldsymbol{\Sigma}_g$, and such that $\mu_g \in \mathcal{B}_g$. The orthogonal complement $\mathcal{B}_g^\bot$ satisfies $\mathbb{R}^p = \mathcal{B}_g \oplus \mathcal{B}_g^\bot$, with $\mu_g \in \mathcal{B}_g^\bot$ as well. In $\mathcal{B}_g^\bot$, the variance is modeled by a single parameter $\lambda_g$. $\mathcal{B}_g$ is referred to as the specific subspace of the $g$-th group and it is expected that  most of the data in that group lie near this subspace. Let $\text{Pr}_{\mathcal{B}_g}(x) = \tilde{Q}_g \tilde{Q}_g^T(x - \mu_g) + \mu_g$ and $\text{Pr}_{\mathcal{B}_g^\bot}(x) = \bar{Q}_g \bar{Q}_g^T(x - \mu_g) + \mu_g$ denote the projections of $x$ onto $\mathcal{B}_g$ and $\mathcal{B}_g^\bot$, respectively. Here, $\tilde{Q}_g \in \mathbb{R}^{p \times p}$ is the matrix formed by the first $q_g$ columns of $Q_g$, corresponding to the directions of greatest variance, and completed with $p - q_g$ zero columns to preserve dimensionality. Similarly, $\bar{Q}_g = Q_g - \tilde{Q}_g$ contains the remaining directions and satisfies $\bar{Q}_g^T \tilde{Q}_g = 0$. 

To enhance robustness to outliers, we also consider the use of trimming within that HDDC framework. Therefore, it is considered that a proportion $\alpha$ of observations are discarded by using trimmed likelihoods when maximizing  trimmed classification likelihoods (\ref{tclust}) or trimmed mixture likelihood (\ref{mixtTCLUST}), with the parsimonious parameterization of the $\Sigma_g$ matrices assumed in the HDDC approach, yielding the $\boldsymbol{\Delta}_g$  matrices shown in (\ref{delta}).

Moreover, two different sets of eigenvalue-ratio constraints are also considered in the maximizations of the trimmed likelihoods (\ref{tclust}) or  (\ref{mixtTCLUST}) by imposing:
    \begin{equation}\label{ER}
  \dfrac{\underset{g=1,\ldots,G}{\max}\:\:\underset{l=1,\ldots,q_g}{\max}\lambda_{gl}}{\underset{g=1,\ldots,G}{\min}\:\:\underset{l=1,\ldots,q_g}{\min}\lambda_{gl}}\leq c_1
\qquad\textup{and}\qquad  
\dfrac{\underset{g=1,\ldots,G}{\max}\lambda_{g}}{\underset{g=1,\ldots,G}{\min}\lambda_{g}}\leq c_2,
    \end{equation}
for two prefixed constants $c_1$ and $c_2$, both being greater than or equal to 1%greater or equal than 1
. We assume that $c_2 < \infty$, while $c_1$ may also be taken as $c_1=\infty$ if desired. Note that setting $c_1 = \infty$ allows the largest eigenvalues to remain unconstrained if desired, whereas $c_2 < \infty$ is always required to ensure the existence of solutions for both the associated empirical and population problems, as well as to guarantee consistency results. These restrictions on the eigenvalues may be seen as an extension of those in (\ref{eigenavalus_ratio}) and lead to mathematically well-defined constrained trimmed likelihood maximization problems (see Section \ref{sec_32}), promoting a higher degree of regularization so as to help avoid the detection of non-interesting spurious solutions.  The choice $c_1=1$ would force the largest eigenvalues to be the same in all the different groups and, analogously, with the smallest eigenvalues when $c_2=1$. However, the main advantage of this approach is the fact that the parameters $c_1$ and $c_2$ allow us to achieve a certain (controlled) freedom in how we want to handle the different scattering of the groups in $\mathcal{B}_g$ and $\mathcal{B}_g^\bot$.

We can summarize all the parameters involved for the constrained maximization of the trimmed likelihoods (\ref{tclust}) and (\ref{mixtTCLUST}) through a set of parameters that we simply denote as $\theta$ where \( \theta = (\pi_1, \ldots, \pi_G, \mu_1, \ldots, \mu_G, \Sigma_1, \ldots, \allowbreak \Sigma_G) \) includes weights \( \pi_g \in [0,1] \) with \( \sum_{g=1}^G \pi_g = 1 \), mean vectors \( \mu_g \in \mathbb{R}^p \), and symmetric positive definite covariance matrices \( \Sigma_g \in \mathbb{R}^{p \times p} \), that do satisfy the parsimonious assumptions for their eigenvalues yielding $\boldsymbol{\Delta}_g$  matrices of eigenvalues fulfilling the pair of eigenvalues-ratio constraints in (\ref{ER}). The set of parameters $\theta$ that obeys these conditions will be denoted as $\Theta_{c_1, c_2}$ throughout the manuscript.

For the proposed algorithms, that will be presented in Section \ref{sec_33}, we will see that their basic C-steps require to handle expressions of the form $\log[\pi_g  \phi(x_i; \mu_g, \Sigma_g)]$. As done in \cite{bouveyron2007high}, this term can be rewritten as 
\begin{align}\label{posterior}
\log[\pi_g \, \phi(x_i; \mu_g, \Sigma_g)] 
&= \log(\pi_g) - \dfrac{1}{2} \Big( 
\| \text{Pr}_{\mathcal{B}_g}(x_i) - \mu_g \|^2_{\mathcal{B}_g} 
+ \dfrac{1}{\lambda_g} \| x_i - \text{Pr}_{\mathcal{B}_g}(x_i) \|^2 \nonumber\\
&\quad + \sum_{l=1}^{q_g} \log(\lambda_{gl}) + (p - q_g)\log(\lambda_g) 
+ p\log(2\pi) \Big),
\end{align}
where $\| \text{Pr}_{\mathcal{B}_g}(x_i) - \mu_g \|^2_{\mathcal{B}_g}$ is the Mahalanobis distance (in a TCLUST-type of distance in $\mathcal{B}_g$)  of the projection of $x_i$ into the approximating subspace with respect to the ``centroid'' used to define that affine subspace. On the other hand, $\| x_i - \text{Pr}_{\mathcal{B}_g}(x_i) \|^2/\lambda_g$ is the orthogonal distance (in a RLG-type of distance), but not assuming isotropic orthogonal errors, because of the $\lambda_g$ terms. This view of $\log[\pi_g \, \phi(x_i; \mu_g, \Sigma_g)]$ is a convenient way to see that this approach serves indeed as a kind of compromise between TCLUST (within the approximating affine space) and RLG (in its orthogonal).

If $\mathcal{B}_g$ corresponds to the affine subspace passing through $\mu_g$ and spanned by the first $q_g$ columns of $Q_g$, which are denoted as $u_{g1},...,u_{gq_g}$, then
\begin{align}\label{bayes_hd_first}
\log[\pi_g \phi(x_i; \mu_g, \Sigma_g)] 
&= \log(\pi_g) - \dfrac{1}{2} \Bigg( 
\sum_{l=1}^{q_g} \dfrac{\langle x_i - \mu_g, u_{gl} \rangle^2}{\lambda_{gl}} \nonumber \\
&\quad + \dfrac{1}{\lambda_g} \left\| x_i - \mu_g - \sum_{l=1}^{q_g} 
\langle x_i - \mu_g, u_{gl} \rangle u_{gl} \right\|^2 \nonumber \\
&\quad + \sum_{l=1}^{q_g} \log(\lambda_{gl}) 
+ (p - q_g) \log(\lambda_g) + p \log(2\pi) \Bigg).
\end{align}
Note that only the first $q_g$ eigenvalues/eigenvectors corresponding to the $q_g$ largest eigenvalues in $\boldsymbol{\Delta}_g$, together with the smallest common eigenvalue $\lambda_g$, are required when computing expression (\ref{bayes_hd_first}). Therefore, computing all eigenvectors is not required. Even though a more reduced notation for the parameters in $\theta$ required to compute $\log[\pi_g \phi(x_i; \mu_g, \Sigma_g)]$ could be adopted, we retain the full parametrization of $\theta$ in $\Theta_{c_1,c_2}$ for clarity of exposition.

A graphical summary of the notation considered can be observed in Figure \ref{esquema}.
\begin{figure}[htbp]
    \includegraphics[width=\linewidth,trim=0 2cm 0 5cm,clip]{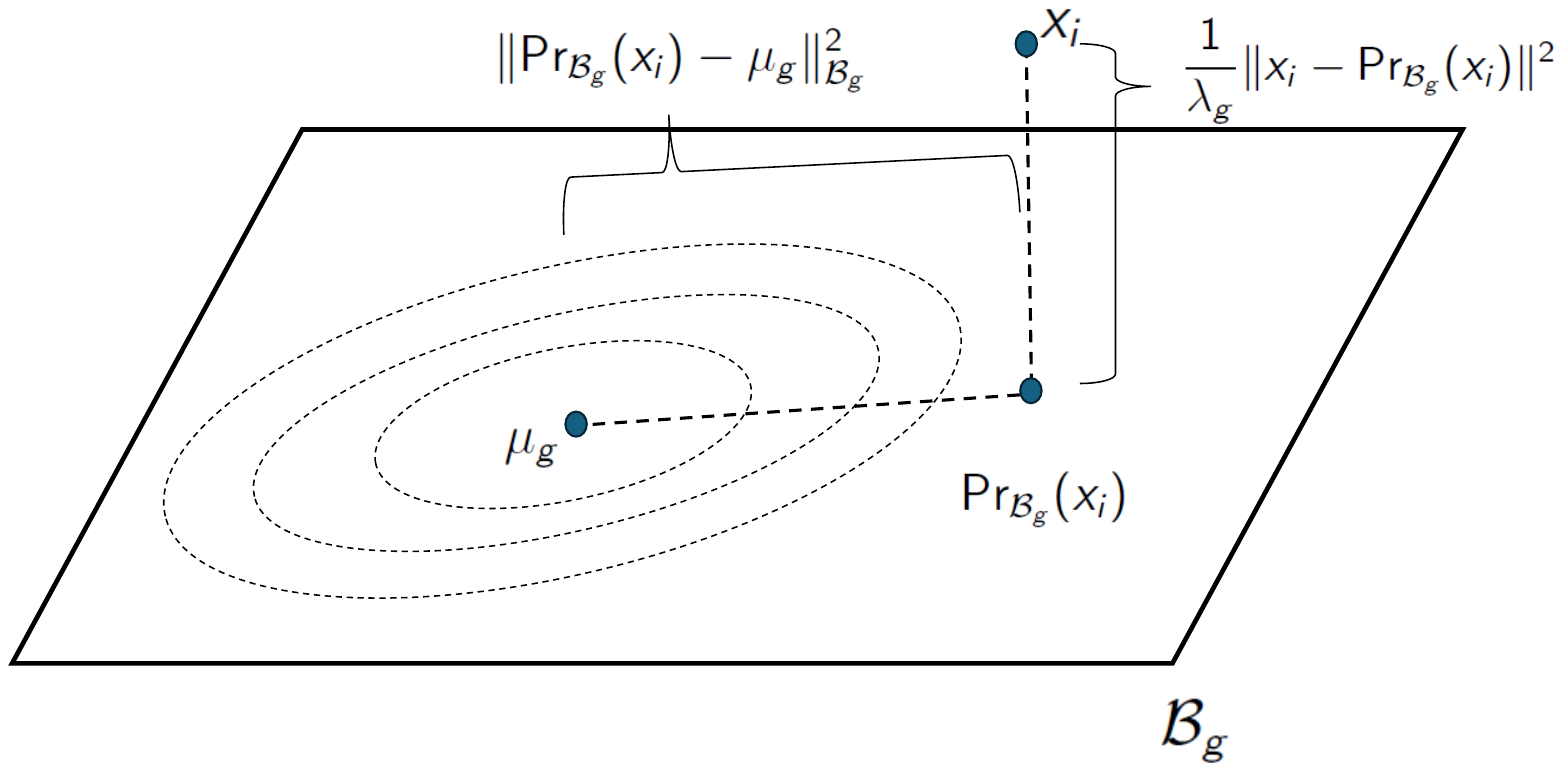}
    \caption{Summary of the type of distances considered (see text for explanation).}
    \label{esquema}
\end{figure} 

A methodology combining HDDC and trimming was briefly outlined in \cite{Bellas2012RobustClustering}, although in a more restricted setting and with a less comprehensive treatment.

\subsection{Existence and consistency results}\label{sec_32}

In this section, we will introduce population versions for the empirical problems given in Section \ref{sec_31} and analyze their theoretical properties. With that purpose, we introduce when dealing with the \textit{classification trimmed-likelihood} a set of assignment functions \( z_g: \mathbb{R}^p \rightarrow \{0,1\} \), for \( g = 0, 1, \ldots, G \). Given a probability measure $P$, we consider maximizing
\begin{equation}\label{problema_hard}
    \text{CL}(\theta,P):=\mathbb{E}_P \left[ \sum_{g=1}^G z_g(\cdot)\log\left(\pi_g\phi(\cdot; \mu_g, \Sigma_g)  \right) \right],
\end{equation}
with respect to assignment functions
$z_g : \mathbb{R}^p \rightarrow \{0,1\}$ satisfying $ \sum_{g=0}^G z_g(x) = 1$, for every $x\in\mathbb{R}^p$, and that $\mathbb{E}_P[z_0(\cdot)] = \alpha$. Additionally, the parameter $\theta\in\Theta_{c_1, c_2}$, with $\Theta_{c_1, c_2}$ as defined in Section \ref{sec_31}.

As $P$ is often unknown, if $P_n$ stands for the empirical measure from the result $X_1=x_1,...,X_n=x_n$ of an independent identically distributed (i.i.d) sample $X_1,...,X_n$, we can replace $P$ by $P_n$ (i.e., by $P_n=1/n\sum_{i=1}^n\delta_{\{ x_i\}}$) to recover the (empirical) classification trimmed HDDC problem introduced in Section \ref{sec_31}. Notice that, perhaps, $\mathbb{E}_{P_n}[z_0( \cdot)] = \alpha$ cannot be exactly achieved but this fact will not be important in our reasoning.

On the other hand, for the \textit{mixture} trimmed  HDDC problem, we may consider the constrained maximization of 
\begin{equation}\label{problema_mixt}
    \text{ML}(\theta,P):=\mathbb{E}_P \left[  z(\cdot)\log\left(\sum_{g=1}^G\pi_g\phi(\cdot; \mu_g, \Sigma_g)  \right) \right],
\end{equation}
with respect to assignment functions $z : \mathbb{R}^p \rightarrow \{0,1\}$
 such that $\mathbb{E}_P[z(\cdot)] = 1-\alpha$, and $\theta$ is constrained to be in $\Theta_{c_1, c_2}$. If $P_n$ stands for the empirical measure, we also recover the trimmed mixture likelihood in Section \ref{sec_31}.  

The assignments function can be obtained in both, trimmed classification and mixture likelihoods, from $\theta$ as follows. Given $\theta$, let us define the functions 
\begin{equation}\label{D_a}
  \left\{\begin{array}{ll}
D^H(x;\theta)&=\underset{{g=1,\ldots,G}}{\max}\lbrace D_g(x;\theta)\rbrace\\
     & \\
     D^M(x;\theta)&=\sum_{g=1}^G \exp(D_g(x;\theta)),
\end{array} \right.  
\end{equation}
where $D_g(x;\theta)=\log[\pi_g \, \phi(x; \mu_g, \Sigma_g)],$
is defined as in (\ref{posterior}). When considering hard assignment, the assignment functions can be defined as
\[
z_g(x):=z_g(x;\theta)=I\lbrace\lbrace x\::\: D^H(x;\theta)=D_g(x;\theta)\rbrace\cap\lbrace x\::\:D^H(x;\theta)\geq R(\theta;P)\rbrace \rbrace,
\]
with
$R(\theta;P)=\inf_{u}\lbrace H(u;\theta, P)\geq\alpha\rbrace$ for $H(u;\theta, P)= P(D^H(\cdot,\theta)\leq u)$. 
On the other hand, when considering the mixture approach, we have 
$$
z(x):=z(x;\theta)=I\lbrace x:D^M(x;\theta)\geq R(\theta;P) \rbrace,
$$
with now $H(u;\theta, P)= P(D^M(\cdot,\theta)\leq u)$.

To exclude in the subsequent analysis the probability distributions obviously unappropriated for the introduced approach, we will assume on the underlying distribution $P$ the very mild condition in (\ref{PR}). Failure to satisfy this condition would make the proposed methodology completely inappropriate for that $P$.
\begin{align}
& \text{The distribution } P \text{ is not concentrated on } G \text{ subspaces of dimensions } \nonumber\\
& q_1, \ldots, q_G \text{ after removing a probability mass equal to } \alpha \label{PR}
\end{align}

Under that mild assumption, we can state the following general existence result, that can be applied to both the population $P$ and the empirical $P=P_n$ problems when the $q_g$, $g=1,...,G$ intrinsic dimensions and the number of clusters $G$ are fixed beforehand:

\begin{proposition}\label{prop_existencia}
If (\ref{PR}) holds for the distribution $P$, then there exists some $\theta\in\Theta_{c_1, c_2}$ such that the maximum of (\ref{problema_hard}) and (\ref{problema_mixt}) when $\theta\in\Theta_{c_1, c_2}$ is attained whenever $\alpha \in (0,1)$ ($1\leq c_1 \leq \infty$ is allowed but $1\leq c_2 < \infty$ is required).    
\end{proposition}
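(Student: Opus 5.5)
We will follow the standard route used for TCLUST existence results \citep{garcia2008general,garcia2014robust}: take a maximizing sequence $\{\theta_n\}\subset\Theta_{c_1,c_2}$ with $\text{CL}(\theta_n,P)$ (respectively $\text{ML}(\theta_n,P)$) converging to the supremum $M$. For each $\theta$ we replace the assignment functions by the optimal ones $z_g(\cdot;\theta)$ (or $z(\cdot;\theta)$) defined through $D^H$, $D^M$ and the threshold $R(\theta;P)$. This is legitimate because, for fixed $\theta$, keeping the points with the largest $D^H$ (or $D^M$) values at level $1-\alpha$ is optimal. The plan is to show that $\{\theta_n\}$ lies in a compact subset of $\Theta_{c_1,c_2}$, possibly after discarding degenerate components that can be shown not to matter. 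We then extract a convergent subsequence and pass to the limit via upper semicontinuity of the trimmed objective.

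First we check that $M>-\infty$: taking $G$ copies of $\mu=0$ and $\Sigma=\mathbf{I}_p$ (with $q_g$ dummy large eigenvalues equal to $1+\epsilon$) gives a finite value, since trimming removes mass $\alpha$ and a bounded region of mass $1-\alpha$ can be retained.

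The key steps, in order, are the following.

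\emph{(i) Bounds on the smallest eigenvalues $\lambda_g$.} If $\min_g\lambda_{g,n}\to0$, then by $c_2<\infty$ all $\lambda_{g,n}\to0$. The term $-\frac{1}{2\lambda_g}\|x-\text{Pr}_{\mathcal{B}_g}(x)\|^2$ then tends to $-\infty$ off the subspaces $\mathcal{B}_{g,n}$. Condition (\ref{PR}) guarantees that any retained mass $1-\alpha$ puts positive probability at distance $\ge\delta$ from the $G$ subspaces, uniformly along a subsequence; this needs a compactness argument on the Grassmannian and on the positions of the subspaces, with non-retained centers handled separately. That positive-mass contribution behaves like $-\delta^2/\lambda\to-\infty$. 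It beats the $-(p-q_g)\log\lambda_g/2\to+\infty$ gain, because $1/\lambda$ dominates $\log(1/\lambda)$. The large eigenvalues $\lambda_{gl}>\lambda_g$ are bounded below by $\lambda_g$, so they do not help avoid this. If instead $\max_g\lambda_{g,n}\to\infty$, all $\lambda_g\to\infty$, and the term $(p-q_g)\log\lambda_g\to\infty$ forces the objective to $-\infty$. Hence the $\lambda_{g,n}$ lie in a compact interval of $(0,\infty)$.

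\emph{(ii) Bounds on the large eigenvalues $\lambda_{gl}$.} These are now bounded below. If some $\lambda_{gl,n}\to\infty$ (always possible when $c_1=\infty$), the component's density tends to $0$ uniformly, and its $\log$ term goes to $-\infty$ on any mass assigned to it. One then shows, as in \cite{garcia2008general}, that such a component can be replaced by one with bounded parameters without decreasing the limit value. In the mixture case its contribution to the sum simply vanishes.

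\emph{(iii) Bounds on the centers $\mu_g$.} Unbounded centers lead to components whose retained mass tends to zero. This is handled in the same way, by reassigning the weight $\pi_g$ and the mass, or by noting that the weight $\pi_g\to0$ gives $\log\pi_g\to-\infty$ on positive mass. Weights and eigenvector matrices $Q_g$ live in compact sets automatically.

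\emph{(iv) Passage to the limit.} With $\theta_n\to\theta_0\in\Theta_{c_1,c_2}$ (the constraints in (\ref{ER}) are closed conditions, the strict inequality $\lambda_{gl}>\lambda_g$ being relaxed to $\ge$ as a limiting case, which yields a smaller $q_g$), we use that $D_g(x;\theta)$ is continuous in $(x,\theta)$. The trimmed functional $\sup_{z:\,\mathbb{E}z=1-\alpha}\mathbb{E}_P[z\,D(\cdot;\theta)]$ is then upper semicontinuous, via dominated convergence on the bounded-parameter range together with the quantile characterization. Hence $\text{CL}(\theta_0,P)\ge\limsup_n\text{CL}(\theta_n,P)=M$.

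The main obstacle is step (i) combined with the degenerate-component handling in (ii)–(iii). Exploiting (\ref{PR}) requires showing that the trimmed sets cannot asymptotically concentrate on the union of $G$ affine subspaces, even as those subspaces move. I would attempt this by contradiction: take a weak limit of the retained sub-probabilities $z_n\,dP$ (which are tight, being dominated by $P$), along with the limits of the subspaces in the compactified Grassmannian. Zero limiting mass off the limit subspaces would then contradict (\ref{PR}).
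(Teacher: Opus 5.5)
Your proposal is correct and follows essentially the same route as the paper's appendix proof. Both arguments take a maximizing sequence bounded below by a trivial configuration. Both use the key consequence of (\ref{PR}) that the retained mass cannot asymptotically concentrate on the $G$ approximating subspaces; this is the paper's Lemma~\ref{lema_h}(i), which it proves with RLG-type strips $S(\mathcal{B}_g^n,r_n)$ rather than your weak limits of $z_n\,dP$. Combined with $c_2<\infty$, this keeps the $\lambda_g$ away from $0$ and $\infty$, and both proofs finish with vanishing-mass and replacement arguments for components whose large eigenvalues or centers diverge.

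Some of your details are left implicit in the paper: the explicit upper-semicontinuity passage to the limit, the lower-bound configuration with $\lambda_{gl}=1+\epsilon>\lambda_g=1$ (the paper's $\Sigma_1=\mathbf{I}_p$ violates the strict ordering), and relaxing $\lambda_{gl}>\lambda_g$ to $\lambda_{gl}\ge\lambda_g$ so the parameter set is closed. For the mixture case, the paper transfers the bounds at once through $\text{ML}(\theta;P)\le(1-\alpha)\log G+\text{CL}(\theta;P)$.
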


Note that, in the previous result, no moment conditions on $P$ are required, provided that a strictly positive trimming level $\alpha > 0$ is considered. This implies that the existence result remains valid even for heavy-tailed distributions, since a strictly positive trimming level can effectively remove the influence of heavy tails.  Note that the condition (\ref{PR}) trivially holds for absolutely continuous distributions $P$, as well as for empirical measures $P_n$ based on a sufficiently large samples from such a continuous distributions P.

The following consistency result also establishes that the empirical constrained maximizer converges to the population constrained maximizer for the unknown distribution $P$ as the sample size $n$ increases:

\begin{proposition}\label{prop_consistencia}
    Let $\alpha>0$ and $P$ be a distribution with a strictly positive density function and such that $\theta_0\in\Theta_{c_1, c_2}$ is the unique constrained maximizer of (\ref{problema_hard}) (or, respectively, (\ref{problema_mixt})). If $\theta_n\in \Theta_{c_1, c_2}$ denotes a sequence of constrained sample maximizers based on the empirical measure $P_n$ obtained from an i.i.d. sample $X_1,...,X_n$ from $P$, then $\theta_n\rightarrow\theta_0$ almost surely (in both cases).
\end{proposition}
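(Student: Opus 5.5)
We follow the standard consistency scheme used for TCLUST in \cite{garcia2008general} and for trimmed mixture likelihoods in \cite{garcia2014robust}, adapted to the parametrization $\Theta_{c_1,c_2}$. The argument has three parts: (i) with probability one the empirical maximizers $\theta_n$ eventually lie in a fixed compact set $K\subset\Theta_{c_1,c_2}$; (ii) a uniform strong law holds on $K$ for the trimmed objectives; (iii) uniqueness of $\theta_0$ together with an argmax argument gives $\theta_n\to\theta_0$ a.s.

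\textbf{Step (i): compactness.} We would reuse the bounds from the proof of Proposition \ref{prop_existencia} and make them hold uniformly in $n$. Since $P$ has a strictly positive density, (\ref{PR}) holds with some margin: there is $\epsilon>0$ such that $P$ gives mass at least $\epsilon$ outside any $\delta$-neighbourhood of $G$ affine subspaces of dimensions $q_1,\dots,q_G$ after removing mass $\alpha$. By Glivenko--Cantelli over the VC class of unions of slabs and balls, the same holds eventually for $P_n$.

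From this we derive the following bounds for every $\theta_n$ whose objective exceeds a fixed value, such as the objective at a fixed admissible $\theta^*$, which converges a.s.\ by the SLLN:
\begin{itemize}
\item $\min_g\lambda_g$ is bounded below. If it tended to $0$, then $c_2<\infty$ would force all $\lambda_g\to0$. The orthogonal-distance term $\|x-\mathrm{Pr}_{\mathcal B_g}(x)\|^2/\lambda_g$ would then blow up on the non-negligible untrimmed mass that lies away from the subspaces, driving the objective to $-\infty$.
\item The $\lambda_{gl}$ are bounded below as well, because $\lambda_{gl}>\lambda_g$.
\item All eigenvalues are bounded above, using the $-\log\lambda$ penalties, which dominate since the untrimmed mass is fixed.
\item The means can be taken bounded. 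The untrimmed region carries mass $1-\alpha$ and $P_n$ is tight, so a component whose mean escapes to infinity gets an asymptotically empty untrimmed set. Such a component can be replaced, as in \cite{garcia2008general}, without decreasing the objective. For the mixture case, the usual argument shows that components with $\pi_g\to0$ or $\|\mu_g\|\to\infty$ only affect the objective negligibly, and uniqueness then excludes them.
\end{itemize}
The eigenvector matrices $Q_g$ live in the compact orthogonal group, so $K$ is compact. We also take the constraints in (\ref{ER}) and the condition $\lambda_{gl}\ge\lambda_g$ as closed conditions, so limits stay in $\Theta_{c_1,c_2}$; the strict inequality only matters through $q_g$ being fixed.

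\textbf{Step (ii): uniform convergence.} Define $\Phi_\theta(x)=D^H(x;\theta)$, or $\log D^M(x;\theta)$ in the mixture case. On $K$ these functions are continuous in $\theta$ and bounded above. They are also dominated below by a quadratic in $x$ on balls, and the trimmed region is bounded because the trimming level is $\alpha>0$. The trimmed objective equals
\[
\int \Phi_\theta\,I\{\Phi_\theta\ge R(\theta;P)\}\,dP ,
\]
with the usual correction for atoms. Since $P$ has a density, $H(\cdot;\theta,P)$ is continuous and there are no atoms. Uniform convergence of the quantiles $R(\theta;P_n)\to R(\theta;P)$ and of the trimmed integrals over $K$ then follows from Glivenko--Cantelli for the class of sets $\{\Phi_\theta\le u\}$, which are defined by finitely many quadratic inequalities and therefore form a VC class, combined with a bracketing argument using continuity in $\theta$.

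\textbf{Step (iii): argmax.} By (ii), $\mathrm{CL}(\theta_n,P)\ge\mathrm{CL}(\theta_0,P)-o(1)$ a.s., and similarly for $\mathrm{ML}$. The population objective is continuous on $K$ and has a unique maximizer $\theta_0$. Hence every limit point of $\theta_n$ in $K$ equals $\theta_0$, which gives $\theta_n\to\theta_0$ a.s. We take $\theta$ up to relabelling and sign of the eigenvectors, as implicit in the uniqueness assumption.

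The main obstacle is Step (i), and specifically the lower bound on the $\lambda_g$ together with keeping the means bounded. Unlike full TCLUST, here $c_1=\infty$ is allowed, so the bound must come only from $c_2$ and the orthogonal term. We would first try to prove a uniform-in-$n$ version of the degeneracy exclusion from Proposition \ref{prop_existencia} via the quantitative form of (\ref{PR}) described above.
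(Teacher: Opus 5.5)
Your proposal has the same architecture as the paper's proof. First, $\theta_n$ is confined almost surely to a compact $K$: the $\lambda_g$ are bounded below via the orthogonal-distance term and $c_2<\infty$, and components whose means escape get vanishing assigned mass. Second, uniform convergence follows from a Glivenko--Cantelli class together with $\sup_{\theta\in K}|R(\theta;P_n)-R(\theta;P)|\to 0$. Third, an argmax argument uses the uniqueness of $\theta_0$.

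The genuinely different ingredient is how you get the uniform lower bound $h'$ on $\mathbb{E}_{P_n}\big[\sum_g z_g(\cdot;\theta)\|\operatorname{Pr}_{\mathcal{B}_g}(\cdot)-\cdot\|^2\big]$. The paper obtains it in Lemma~\ref{lema_h} by an RLG-style contradiction argument:
\begin{itemize}
\item along a sequence where this quantity vanishes, it passes to strips $S(\mathcal{B}_g^n,r_n)$ and extracts convergent subsequences of affine subspaces;
\item it shows that the radii stay bounded and that escaping subspaces lose their mass;
\item it concludes that $P$ would be concentrated on $G$ subspaces, contradicting (\ref{PR}).
\end{itemize}
The empirical part (ii) is only sketched, through tightness of $\{P_n\}$ and a Glivenko--Cantelli remark.

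You instead use absolute continuity directly. The Lebesgue measure of $S(\mathcal{B},\delta)\cap B(0,M)$ is $O(M^{q_g}\delta^{p-q_g})$ uniformly in $\mathcal{B}$, so thin strips around any $G$ affine subspaces carry uniformly small $P$-mass. The VC property of unions of strips (sublevel sets of quadratics) then transfers this to $P_n$. The result is an explicit $h'$ of order $\delta^2$, uniform in $\theta$, with no subsequence analysis of diverging subspaces. This is shorter and more constructive for the consistency statement, where a density is assumed anyway. The paper's route has the advantage of working for any $P$ satisfying (\ref{PR}), including discrete ones, which is what Proposition~\ref{prop_existencia} requires.

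Two details need tightening.
\begin{itemize}
\item With $c_1=\infty$, the $-\log\lambda_{gl}$ penalties only bound the $\lambda_{gl}$ of components whose assigned mass stays away from zero. Exploding $\lambda_{gl}$ must therefore be handled like escaping means (vanishing mass, then replacement and uniqueness), as the paper does with $\overline{M_g^n}$.
\item In the mixture case the sets $\{\log D^M(\cdot;\theta)\le u\}$ are not cut out by finitely many quadratic inequalities, so there you must rely on your bracketing argument. Moreover, the uniform quantile convergence uses the strict positivity of the density (no flat stretch of $H(\cdot;\theta,P)$ at level $\alpha$), not merely the absence of atoms.
\end{itemize}
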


It is not straightforward to provide conditions under which the population maximizer $\theta_0$ is unique for a given $P$, specially in the case of the use of classification trimmed likelihoods. This can also be observed in simpler and highly constrained classification likelihood problem with $\alpha = 0$, which lead to the $k$-means method (see \cite{carcamo2024uniqueness}). Nevertheless, this assumption is at least plausible when dealing with $G$ well-separated normal components, although we acknowledge that a rigorous proof is definitely challenging. Note also that the stated consistency result, even in cases where the data sets are generated from Gaussian components satisfying the required conditions on their covariance matrices, does not guarantee that the empirical estimators converge exactly to the parameters of those Gaussian components. The consistency result should be understood as convergence toward the parameters that solve the population-level maximization of the trimmed likelihood functions, which only coincide with the parameters of the underlying Gaussian distributions in the ideal case where the components do not overlap at all, and where the scatter matrices estimated by trimming are corrected with appropriate correction factors. A detailed discussion of all these issues in the context of TCLUST is provided in \cite{garcia2025choice}.

On the one hand, the identifiability of the parameters of finite mixtures of multivariate normal distributions is well established. These identifiability results ensure that the population maximizer is unique whenever the underlying $P$ belongs to the corresponding finite-mixture model. However, this property may be lost when a positive trimming proportion $\alpha>0$ is introduced into the population framework. On the other hand, when the classification likelihood is considered, even in the absence of trimming, it is far from straightforward to establish conditions under which this uniqueness property holds. This issue is well-known in the case of $k$-means clustering, which corresponds to the simplest and most constrained classification-likelihood problem (see \cite{carcamo2024uniqueness}). Nevertheless, the assumption of uniqueness is at least plausible when dealing with $G$ well-separated normal components, although we acknowledge that providing a rigorous proof is highly challenging. Moreover, the consistency result should be interpreted as convergence toward the parameters that maximize the population-level trimmed (or classification) likelihood target function. These parameters coincide with those of the underlying Gaussian distributions only in the idealized case where the components do not overlap at all and appropriate correction factors are applied to the scatter matrices estimates to account for the effect of trimming. A detailed discussion of these issues in the context of TCLUST can be found in \cite{garcia2025choice}.

\subsection{Algorithm}\label{sec_33}

An algorithm to approximately maximize the classification trimmed likelihood \eqref{tclust} or the mixture trimmed likelihood \eqref{mixtTCLUST}, under the parsimonious scatter parameterization and the eigenvalue-ratio constraints (\ref{ER}), will be presented in this section.

The proposed algorithm relies on multiple random initializations (\texttt{nstart} random initializations) of the parameters, followed by iterative C-steps until convergence or until a maximum number of iterations (\texttt{csteps}) is attained. The iterated parameters providing the highest value of the corresponding (classification or mixture) trimmed likelihood target functions are chosen as final output of the algorithm. The C-steps can be viewed as a type of trimmed (classification) EM step, where $z_{ig}$ ``membership values'' are updated using the parameters $\theta^{(t-1)}$ from the previous iteration. These $z_{ig}$ values are binary (0--1) in the case of the trimmed classification likelihood, or posterior probabilities in the case of mixture likelihoods, but in both settings it is necessary to compute the $D_g(\cdot,\theta)$ values as in (\ref{posterior}). Trimming is applied by setting the membership values $z_{ig}$ equal to 0 for the fraction $\alpha$ of observations with the smallest $D^a(x;\theta)$ values, with $a=H$ and $a=M$, that is, for those observations that are least likely to have been generated from the current parameter values. %After the calculus of these membership values and applying trimming, the parameters are updated in $\theta^{(t)}$
Finally, in the C-steps, the parameters are updated through the constrained maximization of the so-called ``trimmed complete-data likelihood,'' defined as
$$
         \sum_{i=1}^n \sum_{g=1}^G z_{ig}\log\left[\pi_g\phi(x_i;\mu_g^{(t)},\Sigma_g^{(t)})\right].
$$
In fact, parameters are updated starting from weighted sample means and weighted sample covariance matrices with weights given by these $z_{ig}$ values, where also a proper truncation of the eigenvalues has to be performed. That proper eigenvalue truncation is a an extension of the eigenvalue truncation introduced in \cite{fritz2013fast}. It can be shown that a complete C-step increases the trimmed likelihoods monotonically, but several random initializations are performed in order to obtain a global maximum rather than a local maximum.  It is also important to note that only the first eigenvectors/eigenvalues (corresponding to the $q_g$ largest eigenvalues), $\{u_{gl}^{(t)}, \lambda_{gl}^{(t)}\}_{l=1}^{q_g}$, and the common smallest eigenvalue $\lambda_{g}^{(t)}$ for each component, $g=1,\ldots,G$, need to be updated.

\begin{enumerate}
    \item \textbf{Random initializations:}  Consider \texttt{nstart} random initializations by defining \texttt{nstart} initial parameter sets, each denoted by $\theta^{(0)}$. Details of this initialization step are given in Remark \ref{re1}.
    
    \item  \textbf{C-steps:} For each $\theta^{(0)}$, repeat the C-steps for a maximum of \texttt{csteps} iterations ($t = 1, \ldots, \texttt{csteps}$), or until a convergence criterion is met (for instance, when the change in the updated parameters is sufficiently small).
    \begin{enumerate}
        \item[2.1.] \textbf{Update $z_{ig}$  membership values}: From the parameters $ \theta^{(t-1)} $  in the previous iteration, compute $D^a(x_i; \theta^{(t-1)})$ as in (\ref{D_a}), where $a = H$ for the classification likelihood maximization and $ a = M $ for the mixture case, for each observation and sort them in $$D^a(x_{(1)}; \theta^{(t-1)})\leq D^a(x_{(2)}; \theta^{(t-1)}) \leq ... \leq D^a(x_{(n)}; \theta^{(t-1)}).$$ Select the set $ R_0$ consisting of the $ [n \alpha] $ indexes with the smallest $D^a(x_i; \theta^{(t-1)})$ values, i.e., $R_0=\{i: D^a(x_i; \theta^{(t-1)}) \leq D^a(x_{([n \alpha])}; \theta^{(t-1)})\}$.%, which are the indexes of the trimmed observations. %The remaining indexes, collected in the set $ \{ 1, \ldots, n \} \setminus \mathcal{H}_0 $, are those corresponding to the non-trimmed ones.
        %In the case of hard assignment, $ \mathcal{H} $ is split as $ \mathcal{H} = \{ \mathcal{H}_1, \ldots, \mathcal{H}_G \} $, where
        %$\mathcal{H}_g = \{ i \in \mathcal{H} : D_g(x_i; \theta^{(t-1)}) = D_i \}$ that denotes the indexes of the observations assigned to component $g$.
        Then, compute the membership values $z_{ig}$ as
    $$
    z_{ig} =
    \begin{cases}
    1 & \text{if } D_g(x_i,\theta^{(t-1)})=D^H(x_i; \theta^{(t-1)}) \\
    0 & \text{otherwise}
    \end{cases},
    $$
    for classification trimmed likelihoods, and 
    $$
    z_{ig} = \frac{\exp (D_g(x_i,\theta^{(t-1)}))}{D^M(x_i; \theta^{(t-1)})},
    $$
    for mixtures trimmed likelihoods. Finally, perform the trimming step by imposing $ z_{ig} = 0 $ for all $ g = 1, \ldots, G$, if $i \in R_0 $.
    
        \item[2.2.] \textbf{Update parameters:} %while $t\leq \texttt{csteps}$ (or not meeting the specified convergence criteria):
        \begin{enumerate}
            \item[2.2.1.] \textit{Update weights:} 
            $$ \pi_g^{(t)} = \frac{\sum_{i=1}^n z_{ig}}{[n(1 - \alpha)]} $$
            \item[2.2.2.] \textit{Update means:} 
            $$ \mu_g^{(t)} = \frac{\sum_{i=1}^n z_{ig} x_i}{\sum_{i=1}^n z_{ig}} $$
            
            \item[2.2.3.] \textit{Update subspaces and scatter parameters:} Start by computing the weighted sample covariance matrices
            $$ S_g = \frac{1}{\sum_{i=1}^n z_{ig}} \sum_{i=1}^n z_{ig} (x_i - \mu_g^{(t)})(x_i - \mu_g^{(t)})^\top $$
            Extract the eigenvectors $ u_{g1}^{(t)}, \ldots, u_{gq_g}^{(t)}$ associated with the $ q_g $ largest eigenvalues $d_{g1},\ldots, d_{gq_g}$, and compute:
            $$d_g = \frac{1}{p - q_g} \left( \text{trace}(S_g) - \sum_{l=1}^{q_g} d_{gl} \right).$$

            Since only the eigenvectors corresponding to the $q_g$ largest eigenvalues in \eqref{bayes_hd_first} are required, the Arnoldi method can be applied to $S_g$ to reduce computation time, as previously noted in \cite{bouveyron2007high}.

            It is clear that the eigenvalues  in 
           $\left\{ \{ d_{gl} \}_{l=1}^{q_g} \right\}_{g=1}^G$ and $\{ d_g \}_{g=1}^G$ do not necessarily should satisfy the required constraints, depending on $ c_1 $ and $ c_2 $. A eigenvalues-truncation operation has to be then performed, which can be seen as an adaptation of the approach followed for TCLUST and introduced in \cite{fritz2013fast}. Details on this truncation of eigenvalues will be given Remark \ref{re2}. The result of the complete process returns updated
           $ \{ \lambda_{gl}^{(t)} \}_{l=1}^{q_g} $ and $  \lambda_g^{(t)}$ eigenvalues for every $g=1,...,G$. 
        \end{enumerate}
        
    \end{enumerate}
    \item \textbf{Output:} Return the iterated parameters that achieve the maximum value of the target function in (\ref{tclust}) or (\ref{mixtTCLUST}) among the \texttt{nstart} initializations.
\end{enumerate}

%To improve computational efficiency, we use a two-stage strategy in which many random initializations are generated and only a few initial concentration steps are performed, after which the most promising solutions are selected and fully iterated until convergence using a larger number of steps. This possibility will be detailed in Remark \ref{re3}.

\begin{remark}\label{re1} \normalfont One of the drawbacks of the application of TCLUST when increasing dimension $p$ is that $G(p+1)$ observations needed to be randomly selected to initialize the $\Sigma_g^{(0)}$ matrices \citep{fritz2013fast}. Note that $p+1$ random observations in general position are needed to derive non-singular scatter matrices, for $p$-variate variates, by computing their sample covariance matrix. Additionally, when applying TCLUST, it is required $G (p+1)$ observations ideally free of outliers but also partitioned in such a way that the first $p+1$ observation belongs to one of the $G$ underlying cluster, the following $p+1$ observations to the second cluster and so on. Of course, this is a idealized situation and TCLUST can work if this ideal initialization assumptions do not hold, but it serves to give us an idea on how difficult became the task of initializing TCLUST when $p$ is large. On the other hand, we will see that $ \sum_{g=1}^G (q_g + 2) $ random observations are actually needed, which is notably smaller than $G (p+1)$, when applying the proposed trimmed HDDC algorithm, which makes more feasible to think of relying in appropriate random initializations, mainly when the intrinsic dimension $q_g$ are small in comparison with the ambient space dimension $p$. 

For the initialization of the parameters in $\theta^{(0)}$, we propose randomly selecting a subsample  
$ \{ x_{g_1}, \ldots, x_{g_{q_g+2}} \} \subset \{ x_1, \ldots, x_n \}$. Note that $q_g + 2$ random observations in general position is the smallest number of observations required to initialize the parameters of each cluster. The initial location parameters can be set  as  $\mu_g^0 = (x_{g_1} + \cdots + x_{g_{q_g+2}})/(q_g + 2)$, and a sample covariance matrix $S_g$ can be computed using these $ q_g + 2 $ observations which has rank equal to $q_g+1$. We then extract the eigenvectors $u_{g1}^{(0)}, \ldots, u_{gq_g}^{(0)}$ corresponding to the $q_g$ largest eigenvalues of that $S_g$ matrix for initializing the vectors spanning the subspace $\mathcal{B}_g$, and the $q_g$ largest eigenvalues are considered as initial $\lambda_{g1}^{(0)}, \ldots, \lambda_{gq_g}^{(0)}$ eigenvalues. Finally, we take $\lambda_g^{(0)}$ being equal to the smallest eigenvalue of the $S_g$ matrix divided by $p - q_g$.Of course, these initial eigenvalues may not satisfy the required eigenvalue constraint, but since this approach is used only for initialization, the constraints can be enforced later via eigenvalue truncation in subsequent C-steps. Moreover, to make the computation of eigenvectors and eigenvalues more efficient, the following strategy can be employed. Let $ \mathfrak{X} $ be a $(q_g + 2) \times p $ matrix whose columns correspond to the observations centered by $ \mu_g^0 $. Instead of computing the eigenvalues and eigenvectors of the covariance matrix  $S_g = \frac{1}{q_g + 2} \mathfrak{X}^\top \mathfrak{X}$, which is of dimension $ p \times p $, we work with its transpose $\frac{1}{q_g + 2} \mathfrak{X} \mathfrak{X}^\top,$ which is of size $ (q_g + 2) \times (q_g + 2) $, since $ q_g + 2 \ll p $ and both matrices share the same non-zero eigenvalues due to the properties of Gram matrices. In this way, if $u$ is an eigenvector of $ (q_g + 2) \times (q_g + 2) $ matrix associated with eigenvalue $\lambda$, then $v = \frac{1}{\sqrt{\lambda}} \mathfrak{X}^\top u$
is also an eigenvector of $S_g$ associated with the same eigenvalue $\lambda$. Regarding the initialization of the weights, the $\pi_g^{(0)}$ can be set either to $1/G$ for all components or randomly in $[0,1]$, ensuring that they sum to one.
\end{remark}

\begin{remark}\label{re2} \normalfont It is not difficult to see (essentially combining the arguments in \cite{bouveyron2007high} and \cite{garcia2008general}) that the vectors spanning the optimal approximating subspaces $\mathcal{B}_g$ should be the eigenvectors of the weighted sample covariance $S_g$ matrices in Step 2.2.3. of the algorithm associated to the $q_g$ largest eigenvalues. With respect to how the eigenvalues should be updated, let us start from the $\left\{ \{ d_{gl} \}_{l=1}^{q_g} \right\}_{g=1}^G$ and $\{ d_g \}_{g=1}^G$ derived from the weighted sample covariance $S_g$ matrices, as described in Step 2.2.3. of the presented algorithm. As justified in \cite{garcia2008general} and \cite{fritz2013fast}, conditional on the available eigenvectors, it would be ideally needed to replace the values $\{ d_{gl} \}_{l=1}^{q_g}$ and $d_g$ by new values $\{ \lambda_{gl} \}_{l=1}^{q_g}$ and $\lambda_g$ such that they minimize 
$$
\sum_{g=1}^G n_g \sum_{l=1}^{q_g} \left( \log \lambda_{gl} + \frac{d_{gl}}{\lambda_{gl} }\right) 
+ \sum_{g=1}^{G} n_g (p-q_g) \left( \log \lambda_{g} + \frac{d_{g}}{\lambda_{g}}\right),
$$
while still ensuring that the values $\{ \lambda_{gl} \}_{l=1}^{q_g}$ and $\lambda_g$ satisfy the relative order among the largest and smallest eigenvalues, together with the imposed constraints (\ref{eigenavalus_ratio}), and where $n_g = \sum_{i=1}^n z_{ig}$. With that purpose, let us use the following notation for the truncated eigenvalues $$[d]^{m}_{c} = \min\left\{\max\{d, m\}, cm \right\}.$$
We may consider the following two separated minimizations  
\begin{equation}\label{ind1}
 \min_{m_1>0} \sum_{g=1}^G n_g \sum_{l=1}^{q_g} \left( \log [d_{gl}]^{m_1}_{c_1} + \frac{d_{gl}}{[d_{gl}]^{m_1}_{c_1} }\right),   
\end{equation}
and
\begin{equation}\label{ind2}
  \min_{m_2>0} \sum_{g=1}^{G} n_g (p-q_g) \left( \log [d_{g}]^{m_2}_{c_2} + \frac{d_{g}}{[d_{g}]^{m_2}_{c_2} }\right),  
\end{equation}
for the two constants $c_1$ and $c_2$ considered for the eigenvalue ratio constraints in (\ref{ER}). These two minimization can be efficiently carried out by performing an slightly modified version of the optimal eigenvalues-truncation procedure in \cite{fritz2013fast}. Let $m_1^{\mathsf{opt}}$ and $m_2^{\mathsf{opt}}$ be the values of $m_1$ and $m_2$ where the minimum values of, respectively, (\ref{ind1}) and (\ref{ind2}) are attained. We can use these two values, $m_1^{\mathsf{opt}}$ and $m_2^{\mathsf{opt}}$, to obtain the initial truncated eigenvalues
$$
d_{gl}^* = [d_{gl}]^{m_1^{\mathsf{opt}}}_{c_1},\text{ for }l=1,...,q_g\text{, and } d_{g}^*=[d_{g}]^{m_2^{\mathsf{opt}}}_{c_2}.
$$
If these initial $d_{gl}^*$ and $d_{g}^*$ do directly satisfy the desired constraint imposed on the largest and smallest eigenvalues (i.e. it holds that $d_{gl}^*\geq d_{g}^*$ for every $g$ and $l=1,...,q_g$) then we can trivially update $\lambda_{gl}^{(t)} =d_{gl}^*$, $l=1,...,q_g$, and $\lambda_{g}^{(t)} =d_{g}^*$ in Step 2.2.3 of the algorithm. On the other hand, if the constraints do not hold for any $g$ then a iterative procedure has to applied. In that case, if there exists $g \in \{1,\ldots,G\}$ and $j \leq q_g$ such that $d_{gl}^* \geq d_{g}^*$ for $l<j$ but $d_{gj}^* < d_{g}^*$, then we redefine all the eigenvalues violating the constraints and setting them equal to a common value given by
\begin{equation}\label{new_equal}
 d_{gj}^* = d_{gj+1}^* = \cdots = d_{gq_g}^* = d_{g}^*
 = \frac{\sum_{l=j}^{q_g}[d_{gl}]^{m_1^{\mathsf{opt}}}_{c_1} 
   + (p-q_g)[d_{g}]^{m_2^{\mathsf{opt}}}_{c_2}}
   {p-j} .
\end{equation}
Note that the common value in (\ref{new_equal}) coincides with the value of $\lambda$ that minimizes the following expression
$$\sum_{l=j}^{q_g} \left( \log \lambda + \frac{[d_{gl}]^{m_1^{\mathsf{opt}}}_{c_1} }{\lambda }\right) 
+  (p-q_g) \left( \log \lambda + \frac{[d_{g}]^{m_2^{\mathsf{opt}}}_{c_2}}{\lambda}\right).$$
With these updated values, $d_{gl}^*$ and $d_{g}^*$, the minimizations (\ref{ind1}) and (\ref{ind2}) are carried out again, where the original $d_{gl}$ and $d_{g}$ are replaced by these new $d_{gl}^*$ and $d_{g}^*$ values. In a completely analogous and iterative manner, any subsequently truncated eigenvalues that still violate the constraints are replaced by a common eigenvalue as in (\ref{new_equal}). This iterative procedure is repeated until no further changes are required, yielding the final updates $\lambda_{gl}^{(t)}$ and $\lambda_{g}^{(t)}$ in Step 2.2.3 of the proposed algorithm. In most cases, we have observed that only a few iterations are required for the proposed eigenvalue update. Obviously, the eigenvalue truncation procedure becomes considerably simpler when $c_1 = \infty$, since the iterative scheme is no longer needed in that case.

\end{remark}

\begin{remark}\label{re3}
    \normalfont {To improve the computational efficiency of the proposed algorithm, we adopt a two stage strategy for the C-steps. The proposal is to consider a number of random starts \texttt{nstart} as large as possible, but to apply only a very small number \texttt{cstep1} of initial C-steps (for example, \texttt{cstep1} equal to 2 or 4) to each random initialization. Based on these preliminary iterations, only the $\texttt{nkeep}$ most ``promising'' initializations, namely those yielding the largest values of (\ref{tclust}) or (\ref{mixtTCLUST}), are retained and subsequently iterated until convergence or until a larger number of final C-steps $\texttt{cstep2}$ is reached ($\texttt{cstep1} < \texttt{cstep2}$). This approach avoids fully iterating poor initializations and makes more effective use of the available computational effort, allowing for a better exploration of the parameter space than would be achieved with a smaller number of unnecessarily fully iterated initializations. There is also the possibility of running versions of the proposed algorithms assuming equal weights for the components, i.e., we keep the weights fixed as $\pi_g^{(t)} = 1/G$ from initialization. This is achieved by selecting \texttt{equal.weights=TRUE} in the code implementation, which contributes to obtaining groups of more similar sizes in terms of assignments for non-trimmed observations based on posterior probabilities.}
\end{remark}

The complete code to replicate the experiments and apply the proposed methodology can be found in the GitHub repository: \url{https://github.com/luciatrapote/thddc-experiments}.

\section{Selection of tuning parameters}\label{sec_4}

The application of the tHDDC approach requires the specification of several tuning parameters. Of course, the selection of all these parameters, such as $G$, $q_1,\ldots,q_G$, $\alpha$, $c_1$, and $c_2$, is a challenging task. This difficulty is analogous to that encountered in many other robust (and even non-robust) clustering methods. In fact, determining the number of clusters $G$ is rarely straightforward in most widely used clustering techniques. In the robust clustering framework, this problem becomes even more complex, because $G$ and $\alpha$ may be interrelated. For instance, a higher trimming level $\alpha$ may lead to the removal of small clusters, which are then treated as outliers and discarded, thereby reducing the number of clusters $G$. Similarly, the parameters $c_1$ and $c_2$ also determine the type of clusters the user is interested in identifying and may affect the number $G$ of clusters required.

It is also important to stress that this process of parameter selection cannot be fully automated, since (subjective) user choices are often required at different stages of the analysis to determine the type of clusters to be detected and which the user is willing to assume as ``cluster'', which in turn depends on the ultimate purpose of the clustering analysis. Anyway, it makes sense to assist the user in reducing the set of ``interesting'' partitions 
to a smaller number, associated with particularly meaningful choices of the parameters  \citep{garcia2011exploring,cerioli2018finding,garcia2025choice}. This is precisely the type of strategy that has been followed in some tools developed within the TCLUST framework, and similar ideas could be transferred to the present setting. In this work, however, we do not aim to explore or discuss these possibilities. Instead, we focus on presenting a nearly automated way of determining the intrinsic dimensions $q_1,\ldots,q_G$, together with some graphical diagnostic tools that allow us to reassess the extent to which our parameter choices may lead to questionable decisions or, conversely, to confirm the cluster and outlier assignments and to clarify the quality of these decisions.

\subsection{Automated determination of the intrinsic dimensions}\label{catell}

As already mentioned, even when the number of clusters $G$ is fixed by the user, selecting the intrinsic dimensions $q_1,\ldots,q_G$ remains a challenging task. In this section, we show that these $q_1,\ldots,q_G$ can be treated as internal parameters to be estimated within the iterative steps of the algorithm, thereby partially alleviating the burden of manual specification and turning them into data-driven, cluster-dependent parameters. To this end, we adopt the procedure described in \cite{bouveyron2007high}, which relies on a Cattell-type criterion based on consecutive differences $d_{gl+1}-d_{gl}$ when $l=0,...,q_{\max}-1$, where $d_{gl}$ are the sorted eigenvalues of the weighted sample covariance matrix $S_g$ computed in step 2.2.3 of the algorithm in Section \ref{sec_32}. As the intrinsic dimensions are not fixed in advance, the initialization step in Step 1 of the proposed algorithm, detailed in Remark \ref{re1}, can be carried out with different ``initial'' intrinsic dimensions. Our proposal is to begin the initialization described in Remark \ref{re1} with equal intrinsic dimensions, that is, $q_1=\cdots=q_G=q_{\text{ini}}$. We have observed that, in many cases, the final intrinsic dimensions obtained by the procedure are not excessively sensitive to the choice of $q_{\text{ini}}$, although in other situations they clearly are. Our initial experiments (see Section \ref{sec_5}) seem to indicate that it is often advisable to start with a not excessively large value of $q_{\text{ini}}$, as this tends to facilitate the correct initial partitioning into clusters of the $G \times (q_{\text{ini}} + 2)$ random observations used for the initialization. Note that we have set a $q_{\max}$ value which acts as an upper bound on the clusters' intrinsic dimensions, so with the assumption that $q_g<q_{\max}$ for a sufficiently large $q_{\max}\leq p-1$.  To be more precise, instead of differences, we propose using normalized sorted differences to update the $q_g$ in the C-steps by setting
$$
q_g = \max \left\{  j \in \{0,\ldots,q_{\max}-1\} : 
\frac{d_{gj+1}-d_{gj}}{\max\limits_{l=0,\ldots,q_{\max}-1} \left\{ d_{gl+1}-d_{gl} \right\}} 
> \texttt{tresh} \right\},
$$
and where \texttt{tresh} is a (small) threshold value.

With $q_{\max}$ and \texttt{tresh} fixed, after applying C-steps for the \texttt{nstart} random initializations, 
we obtain \texttt{nstart} (or \texttt{nkeep} values when applying Remark \ref{re3}) values of the target trimmed likelihood 
(either mixture or classification), possibly associated with different intrinsic dimensions $q_g$, $g=1,\ldots,G$. 
The problem is that if we simply choose the solution with the largest value of  (\ref{problema_hard}) or (\ref{problema_mixt}), we would tend to favor solutions with larger intrinsic dimensions $q_g$, 
thus encouraging ``overfitting'' and leading to models that are more complex than actually needed. To prevent this failure, \cite{bouveyron2007high} propose choosing the iterated solution with the largest complex-penalized likelihood (in fact, they proposed a BIC-type solution). With this difficulty in mind, we propose the iterated solution with the smallest value of 
$$
	-2\times \textsf{trimmed log-likelihood}+\textsf{complexity penalty},
$$
where ``\textsf{trimmed log-likelihood}'' stands for the value attained in (\ref{tclust}) or (\ref{mixtTCLUST}) after iteration, and for measuring that ``\textsf{complexity penalty}'' it is proposed using
\begin{align*}
&\log\big([n(1 - \alpha)]\big) 
\left[ 
     \underbrace{G - 1}_{\text{weight pars.}} + \underbrace{Gp}_{\text{means pars.}} +
      \underbrace{ 1 + \left( \sum_{g=1}^G q_g  - 1 \right) \left( 1 - \frac{1}{c_1} \right)}_{\text{largest eigenvalues pars.}}
\right.\\
    & \left.
     \underbrace{+ 1 + (G - 1) \left(1 - \frac{1}{c_2} \right)}_{\text{smallest eigenvalues pars.}} +
    \underbrace{\sum_{g=1}^G \left( q_g p - \frac{q_g(q_g - 1)}{2}\right)}_{\text{first $q_g$ orthonormal eigenvectors}} 
\right]. 
\end{align*}
In the previous expression, we take into account a kind of ``number of free parameters'', noting that higher values of $c_1$ and $c_2$ in (\ref{ER}) yield less restricted solutions. Note how the ``\textsf{complexity penalty}'' changes as $c_1$ and $c_2$ approach their extreme values, namely $1$ and $\infty$. This is obviously an extension of the modified BIC introduced in \cite{cerioli2018finding}.

\subsection{Graphical summaries and diagnostic tools}\label{sec4.2}

The vectors $u_{g1}, u_{g2}, \ldots, u_{gq_g}$ that generate the approximating subspaces $\mathcal{B}_g$ can be regarded as loading vectors and may be interpreted as ``modes of variation'' that serve to characterize how observations fluctuate around the clusters' means $\mu_g$. Associated with these loadings, the cluster scores $t_{ig,l}=\langle x_i - \mu_g, u_{gl}\rangle$ for $l=1,\ldots,q_g$ provide a low-dimensional representation of each observation within its cluster. In different plots, for each component $g$, we can represent the scores for the observations assigned to that cluster. The scores corresponding to the trimmed observations can also be displayed, by assigning each of them to the cluster that yields the highest posterior probability under the estimated parameters.
The scores (of both trimmed and non-trimmed observations) not only capture the proximities of observations within the corresponding cluster, but also provide interpretability through their direct relationship with the loading vectors defined above.

To assess cluster adequacy and detect potential atypical observations, two types of diagnostic measures can be considered. The score distances
$$
\text{SD}_{ig}
= \left\| \operatorname{Pr}_{\mathcal{B}_g}(x_i) - \mu_g \right\|_{\mathcal{B}_g}
= \sqrt{\sum_{l=1}^{q_g} \frac{t_{ig,l}^2}{\lambda_{gl}}}
$$
quantifies how far the projection of $x_i$ onto the subspace $\mathcal{B}_g$ lies from the clusters' mean centers $\mu_g$, measured in the metric induced by the eigenvectors and eigenvalues corresponding to the $q_g$ largest eigenvalues in each cluster. Complementarily, the orthogonal distances
$$
\text{OD}_{ig}
= \frac{1}{\sqrt{\lambda_g}} \big\| x_i - \operatorname{Pr}_{\mathcal{B}_g}(x_i) \big\|
= \frac{1}{\sqrt{\lambda_g}}
\left\lVert x_i - \mu_g - \sum_{l=1}^{q_g} t_{ig,l} u_{gl} \right\rVert
$$
provide the properly scaled orthogonal residual deviations of $x_i$ from the approximating $\mathcal{B}_g$  subspaces, and therefore captures the extent to which the observation departs from the linear structure defining cluster $g$.

As already commented, both distances are computed for the indices in $A_g=\{i: D_{ig}=D_i\}$, where trimmed observations are also included. To determine whether a point should be considered extreme with respect to its cluster, cutoff values are employed. For the score distances, the threshold $\sqrt{\chi^{2}_{q_g;0.975}}$
can be used, while for the orthogonal distances the cutoff is given by $(\widehat{\mu}_{\text{OD}}+\widehat{\sigma}_{\text{OD}} z_{0.975})^{3/2}$, where $\widehat{\mu}_{\text{OD}}$ and $\widehat{\sigma}_{\text{OD}}^{2}$ denote the robust mean and variance obtained via the MCD estimator applied to $\{\text{OD}_{ig}^{2/3}\}_{i=1}^n$, following the approach in \cite{hubert2005robpca}. These graphical diagnostic tools provide a coherent and robust framework for evaluating cluster membership and detecting potential outliers within each group.

In addition to the graphical summaries described above, further exploratory tools can be used to assess the stability of both cluster assignments and trimming decisions along the so-called ``discriminant factors''. Let $\hat{\theta}$ denote the estimated parameters obtained after the maximization of \eqref{problema_hard} or \eqref{problema_mixt}. 
Recall from Section \ref{sec_32} that, for any $\theta$ and any component $g$, the value $D_g(x;\theta)=\log\!\big(\pi_g \phi(x;\mu_g,\Sigma_g)\big)$ is associated to how likely is that $x$ would have been generated by the normal component $g$. Therefore, if we consider ordered values $ D_{(1)}(x_i;\hat{\theta})\leq\cdots\leq D_{(G)}(x_i;\hat{\theta}),$ we can gain insight into how reliable or trustworthy the classification decision for $x_i$ is when using tHDDC. To be more precise, if $x_i$ is a \textit{not trimmed observation} and is assigned to cluster $g$, i.e. $D_{(G)}(x_i;\hat{\theta})=D_g(x_i;\hat{\theta})$, then the strength of this allocation can be evaluated by comparing the largest and second-largest component contributions. Accordingly, we define
$$\text{DF}(i)=D_{(G)}(x_i;\hat{\theta}) - D_{(G-1)}(x_i;\hat{\theta}),\text{ when } x_i \text{ is not trimmed}.$$
Note that using a difference of logarithms is equivalent to consider the logarithm of a ratio, and we can denote it as ``discriminant factor'' and write as $\text{DF}(i)$.
On the other hand, for the 
\textit{trimmed observations} $x_i$,
let us consider  $x_{(1)},\ldots,x_{([n\alpha])}$ the trimmed observations by using the same notation as in Step 2, while $x_{([n\alpha]+1)}$ denotes the first non-trimmed observation. Although trimmed observations are not assigned to any cluster, their maximal component contribution can still be computed. The trimming decision for $x_i$ can therefore be assessed by comparing $D_{(G)}(x_i;\hat{\theta})$ with the boundary value $D_{(G)}(x_{(\lceil n\alpha\rceil+1)};\hat{\theta})$ and we consequently use
$$  \text{DF}(i)=D_{(G)}(x_{(\lceil n\alpha\rceil+1)};\hat{\theta})-D_{(G)}(x_i;\hat{\theta})\text{ for }x_i\text{ when trimmed.}$$

In both cases, for both trimmed and untrimmed observations, we have that $\mathrm{DF}(i) \geq 0$, but $\mathrm{DF}(i)$ values close to zero indicate borderline or ambiguous assignments or trimming decisions, whereas large positive values correspond to clear cluster assignments or clearly identified outlying trimmed observations.
Discriminant-type measures of this nature were previously considered in robust clustering based on trimming in \citet{fritz2012tclust} and, in the absence of trimming, in \citet{VanAelst2006}. Note that in these two previous work, $-\text{DF}(i)$ was used instead. However, here we adopt the opposite sign convention so that the resulting plots are more consistent with standard silhouette-type displays \citep{Rousseeuw1987}, yielding a similar interpretation. %This plot is an ordered representations of the $\text{DF}(i)$ values.
Some illustrations of these proposed graphical summaries will be given in Section \ref{sec_6}.

\section{Simulation study}\label{sec_5}
In this section, we present simulation results for artificial datasets to illustrate the performance of tHDDC compared with the two previously commented TCLUST and RLG approaches. 

Three scenarios were considered and, in all cases, two groups were generated from Gaussian distributions in $\mathbb{R}^{200}$. The total sample size is $n = 1000$, with 570 observations generated from the first Gaussian distribution, 380 generated from the second Gaussian distribution, and an additional 50 observations corresponding to uniformly distributed noise component. A parameter $\delta$ controls the separation between the two Gaussian distributions in such a way that when $\delta$ is close to 0, the groups exhibit substantial overlap, whereas larger values of $\delta$ yield more clearly separated clusters. The covariance matrices are constructed through spectral decompositions of the form $\Sigma_g = U_g \Delta_g U_g^\top$, where $\Delta_g$ are diagonal matrices containing controlled eigenvalues and $U_g$ are random orthogonal matrix. When entering the eigenvalues, we use the notation $\texttt{seq}(a,b;q)$ to denote a sequence of $q$ equally spaced points with endpoints $a$ and $b$, that is, with step size $(b - a)/(q - 1)$. The random orthogonal matrices $U_g$ will be obtained by considering SVD decompositions of $200\times 200$ random matrices with independent standard normal entries. In addition to the two normally distributed populations, each scenario included $50$ contaminating observations generated independently from a uniform distribution on the hypercube $[-2,2]^{200}$. This region was chosen because it approximately overlaps with the range where the observations from the Gaussian components takes values. 

Two different orthogonal matrices $U_1$ and $U_2$ are randomly taken in Scenarios~1 and~2. This setting will reflect situations in which groups are going to differ not only in their intrinsic dimensionality and shape of the components, but also in the orientation of their latent subspaces. To be more precise, we consider:

\begin{itemize}
    \item \textbf{Scenario~1: } \textit{Cluster 1} with $q_1 = 3$, $\mu_1 = 0 \cdot 1_{200}$, and $\Sigma_1 = \mathrm{diag}(\texttt{seq}(9,7;3),\allowbreak 0.15, 0.15, \ldots)$; and \textit{Cluster 2} with $q_2 = 1$, $\mu_2 = \delta \cdot 1_{200}$, and $\Sigma_2 = \mathrm{diag}(5, 0.45, 0.45, \ldots)$.
    \item \textbf{Scenario~2: } \textit{Cluster 1} with $q_1 = 10$, $\mu_1 = 0 \cdot 1_{200}$, and $\Sigma_1 = \mathrm{diag}(\texttt{seq}(20,11;10),\allowbreak 0.15, 0.15, \ldots)$; and \textit{Cluster 2} with $q_2 = 1$, $\mu_2 = \delta \cdot 1_{200}$, and $\Sigma_2 = \mathrm{diag}(\texttt{seq}(15,11;5),\allowbreak 0.25, 0.25, \ldots)$.
\end{itemize}

In contrast, for Scenario~3 both groups will share the same orthogonal matrix, i.e., $U_1 = U_2$. This scenario was deliberately included to induce a higher degree of overlap between the clusters. Indeed, when the intrinsic dimensions of the subspaces are small (as in Scenario~1, with $q_1 = 3$ and $q_2 = 1$), substantial overlap may arise even if the covariance matrices are oriented differently. However, for higher intrinsic dimensions (such as $q_1 = 10$ and $q_2 = 5$), distinct orientations typically lead to well-separated subspaces. Therefore, in Scenario~3 we enforce a common orientation in order to create a more challenging clustering setting, where separation relies primarily on differences in eigenvalues and mean vectors rather than on subspace orientations. Specifically, in Scenario~3, we consider:
\begin{itemize}
    \item \textbf{Scenario~3: } \textit{Cluster 1} with $q_1 = 10$, $\mu_1 = 0 \cdot 1_{200}$, and $\Sigma_1 = \mathrm{diag}(\texttt{seq}(20,15;10),\allowbreak 0.15, 0.15, \ldots)$; and \textit{Cluster 2} with $q_2 = 1$, $\mu_2 = \delta \cdot 1_{200}$, and $\Sigma_2 = \mathrm{diag}(\texttt{seq}(18,16;5),\allowbreak 0.25, 0.25, \ldots)$.
\end{itemize}

 %Figure~\ref{fig:simulation_study} illustrates this effect. The figure is organized into three columns and four rows, where each row corresponds to one of the chosen values of $\delta$ ($0$, $0.5$, $1$, and $1.5$). For each scenario, we display 50 randomly selected observations, with the x-axis representing the 200 ambient dimensions and the y-axis showing the corresponding component values of each observation. The left column displays the results for Scenario~1, the central column corresponds to Scenario~2, whereas the right column corresponds to Scenario~3.

%\begin{figure}[htb]
 %   \centering
 %   \includegraphics[width=0.95\textwidth]{graficas_simulaciones_delta.png}
 %   \caption{Visualization of the simulated datasets for different values of $\delta$ ($0$, $0.5$, $1$, and $1.5$, from top to bottom). Each row corresponds to one value of $\delta$; the left column shows Scenario~1, the central column shows Scenario~2 and the right column shows Scenario~3.}
 %   \label{fig:simulation_study}
%\end{figure}

The performance of the TCLUST, RLG, and tHDDC methods for scenarios 1 to 3 will be analyzed for different values of the parameter $\delta$. Specifically, the values $\delta$ used are \texttt{seq}($-0.3$,$0.3$;7). For each method and each value of $\delta$, 20 realizations were generated, and these three approaches were applied to the resulting data sets. %Table~\ref{tab:results} summarizes the results, reporting the average execution time and mean accuracy obtained for each method.
The trimming proportion and the number of groups were set at the true values used in the synthetic data generation for all three analyzed methods, namely $\alpha=0.05$ and $G=2$. When applying tHHDC, we considered four settings with respect to the  determination of the $q_g$ parameters: the true intrinsic dimensions are provided, and the Cattell's method is applied for determining them with $\texttt{thresh} = 0.3$, and initial intrinsic dimensions $q_{\text{ini}}$ being smaller,  of the same order, and larger than the true ones. In this way, the true intrinsic dimensions are first directly specified for Scenario~1, with $q_1 = 3$ and $q_2 = 1$, and for Scenarios~2 and 3, with $q_1 = 10$ and $q_2 = 5$. On the other hand, initial intrinsic dimensions for applying Cattell's method being $q_{\text{ini}} = 1$, 3 and 5 are used for Scenario 1, while $q_{\text{ini}} = 1$, 5 and 15 are used for Scenarios 2 and 3.
%On the other hand, $q_{\text{ini}} = 1$, representing smaller dimensions than the original ones; $q_{\text{ini}} = 3$, corresponding to dimensions of the order of the original ones; and $q_{\text{ini}} = 5$, representing larger dimensions are used for Scenario 1. For Scenarios 2 and 3, $q_{\text{ini}} = 1$, representing smaller intrinsic dimensions than the original ones; $q_{\text{ini}} = 5$, corresponding to dimensions of the order of the original ones; and $q_{\text{ini}} = 15$, representing larger dimensions are used.
For RLG, the true dimensions were always used to initialize the intrinsic dimensions of each subspace. For the tHDDC and TCLUST methods, the constants involved in the constraints on the eigenvalues were set as $c_1 = 5$ and $c_2 = 3$ for tHDDC. These values ensure that the constraints on the eigenvalues are satisfied in the three considered scenarios. On the other hand, $c = 12$ for TCLUST has been used, which corresponds to the default value in the \texttt{tclust} package. It is worth noting that larger values of $c$ for TCLUST (such as $c=150$) were also tested to cover the actual eigenvalue ratio, but no performance improvements were observed. Regarding the remaining algorithmic parameters required by the tHHDC, TCLUST, and RLG, we set  for all three methods the same values $\texttt{nstart} = 250$  and $\texttt{nkeep} =5$ and, as discussed in Remark \ref{re3}, the concentration steps are divided into two stages: first, $\texttt{cstep1} = 2$ initial concentration steps are applied to all initializations, and subsequently $\texttt{cstep2} = 25$ final concentration steps are performed on the $\texttt{nkeep} =5$ most promising ones. 

Figure~\ref{grafico1} reports the results of the simulation study for the scenarios considered. The boxplots display the clustering accuracy, where the set of trimmed observations is considered as an extra cluster for comparison purposes (take also into account that the same fixed trimming level is considered for the compared methods). %\textcolor{red}{This ensures that the metric reflects the proportion of correctly assigned observations, including the successful detection of noise as a ``trimmed'' category. } %The figure displays boxplots of the accuracy\textcolor{red}{, defined as the proportion of correctly classified observations. In this calculation, trimmed observations are treated as an additional distinct group, thus accounting for the correct identification of both clustered and outlying data points across the three competing methods.}(define as the proportion of correctly classified-assigned observations, including both assigned and trimmed observations \la{[No me queda claro???]}) obtained by the three competing methods. 
Regarding the model parameters, the true intrinsic dimensions are initially provided for both tHDDC and RLG. However, their subsequent handling differs between the two methods: for tHDDC, the algorithm is executed with \texttt{Cattell=TRUE}, meaning that these intrinsic dimensions are dynamically re-estimated at each iteration. In contrast, the dimensions remain fixed for RLG at their true initial values throughout the entire iterative process. %For tHDDC and RLG, the intrinsic dimensions are initially provided at their true values. %Results are shown for different values of the separation parameter $\delta$.

\begin{figure}[htbp]
\centering

\begin{subfigure}{0.48\linewidth}
  \centering
  \includegraphics[width=\linewidth]{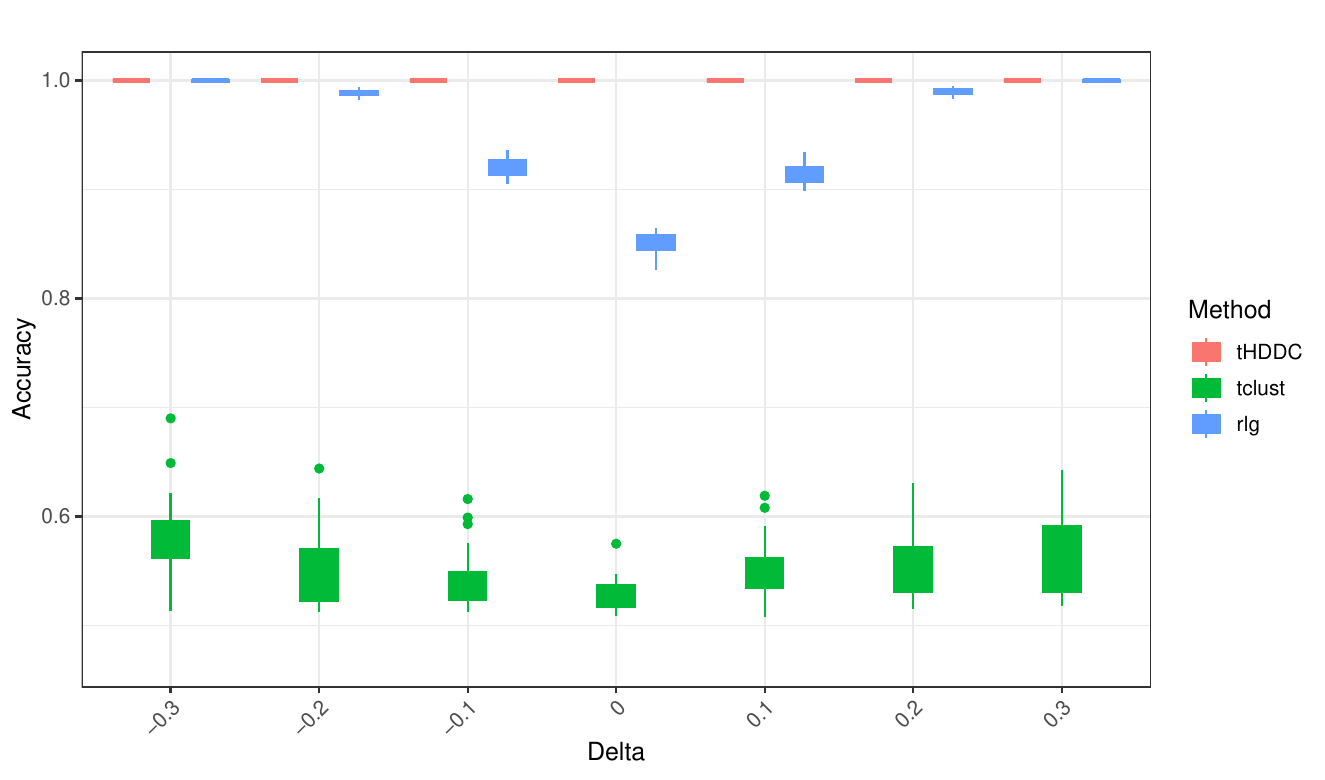}
  \caption{}
  \label{grafico1_scen1}
\end{subfigure}
\hfill
\begin{subfigure}{0.48\linewidth}
  \centering
  \includegraphics[width=\linewidth]{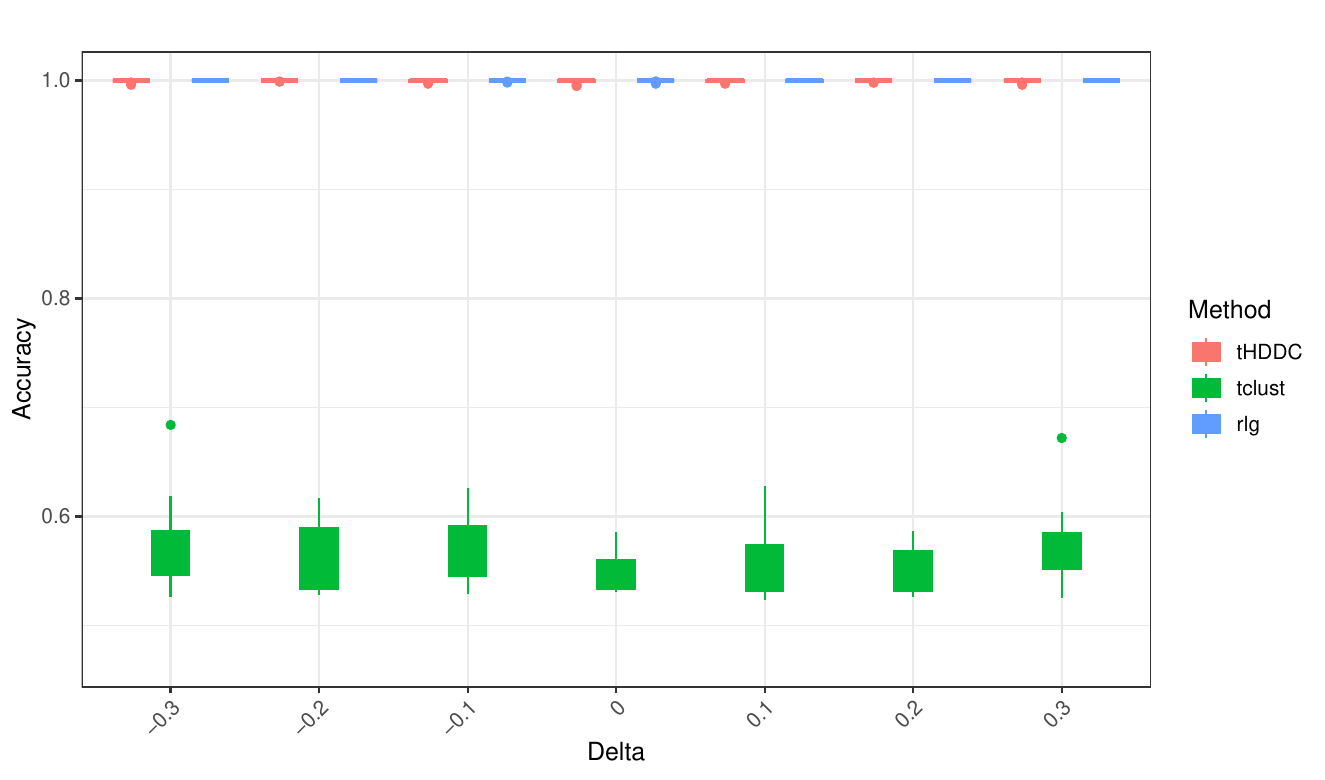}
  \caption{}
  \label{grafico1_scen2}
\end{subfigure}

\begin{subfigure}{0.48\linewidth}
  \centering
  \includegraphics[width=\linewidth]{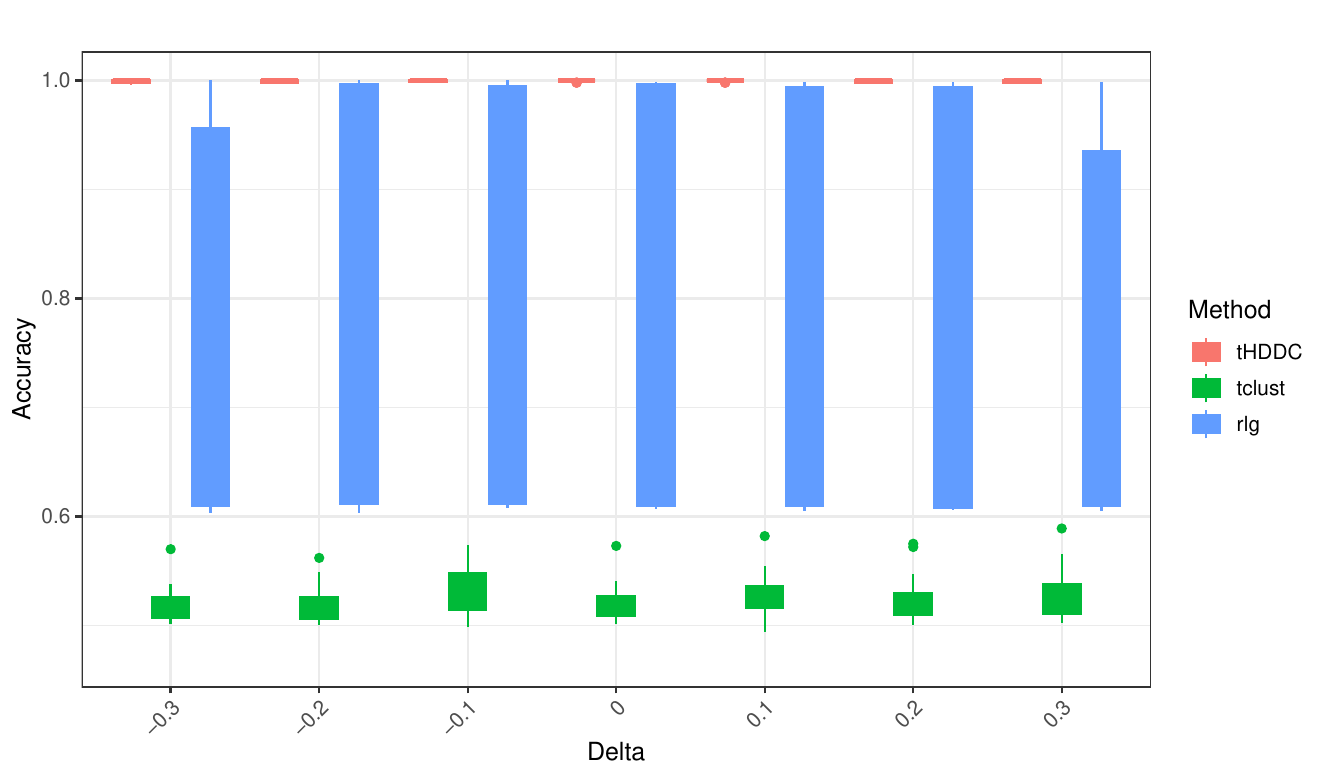}
  \caption{}
  \label{grafico1_scen3}
\end{subfigure}

\caption{Boxplots of the accuracy obtained by TCLUST, RLG, and tHDDC 
for different values of $\delta$, and where the true intrinsic dimensions are initially provided for RLG and tHDDC (fixed in the case of RLG and further updated in the case of tHDDC). The results for Scenario~1 are shown in (a), for Scenario~2 in (b), and Scenario~3  in (c).}
\label{grafico1}
\end{figure}

For Scenario~1 (Figure~\ref{grafico1_scen1}), the proposed tHDDC method achieves nearly perfect accuracy across all values of the separation parameter $\delta$. In contrast, the performance of the RLG method strongly depends on the separation between the group means. As $\delta$ approaches zero, that is, as the overlap between the two Gaussian distributions increases, the accuracy deteriorates. %Specifically, RLG attains perfect accuracy for $\delta=\pm0.3$, values close to one for $\delta=\pm0.2$, around $0.9$ for $\delta=\pm0.1$, and approximately $0.85$ when the two groups share identical mean vectors.
The TCLUST method performs considerably worse in this scenario. %, with accuracy values typically ranging between $0.5$ and $0.6$. Nevertheless, a decreasing trend in performance as $\delta$ approaches zero is also observed for TCLUST, similarly to RLG.
The tHDDC method again achieves near-perfect classification accuracy for Scenario~2 (Figure~\ref{grafico1_scen2}) and all values of the separation parameter $\delta$. The RLG method also exhibits almost perfect accuracy across all values of $\delta$, which is likely due to the reduced overlap between the underlying subspaces, as the groups do not share the same orthogonal matrix defining their orientations. As in the previous scenario, TCLUST continues to yield unsatisfactory results for all values of $\delta$. Finally, in Scenario~3 (Figure~\ref{grafico1_scen3}), tHDDC maintains again excellent performance across all values of  $\delta$. In fact, this scenario highlights the limitations of the RLG approach when the underlying subspaces associated with the clusters intersect. In this case, the boxplots display substantial variability, indicating a wide dispersion of accuracy values across repetitions, with average performance consistently below $0.9$. Once again, TCLUST exhibits poor results.

We have also analyzed the behavior of the proposed tHDDC method under different choices for the initial values of  $q_{\text{ini}}$ in the automatized Cattell's based procedure for estimating the intrinsic dimensions presented in Subsection \ref{catell} with $\texttt{thresh}=0.3$. Figure~\ref{grafico2} displays the corresponding boxplots for accuracy in the three considered scenarios.

\begin{figure}[htbp]
\centering

\begin{subfigure}{0.48\linewidth}
  \centering
  \includegraphics[width=\linewidth]{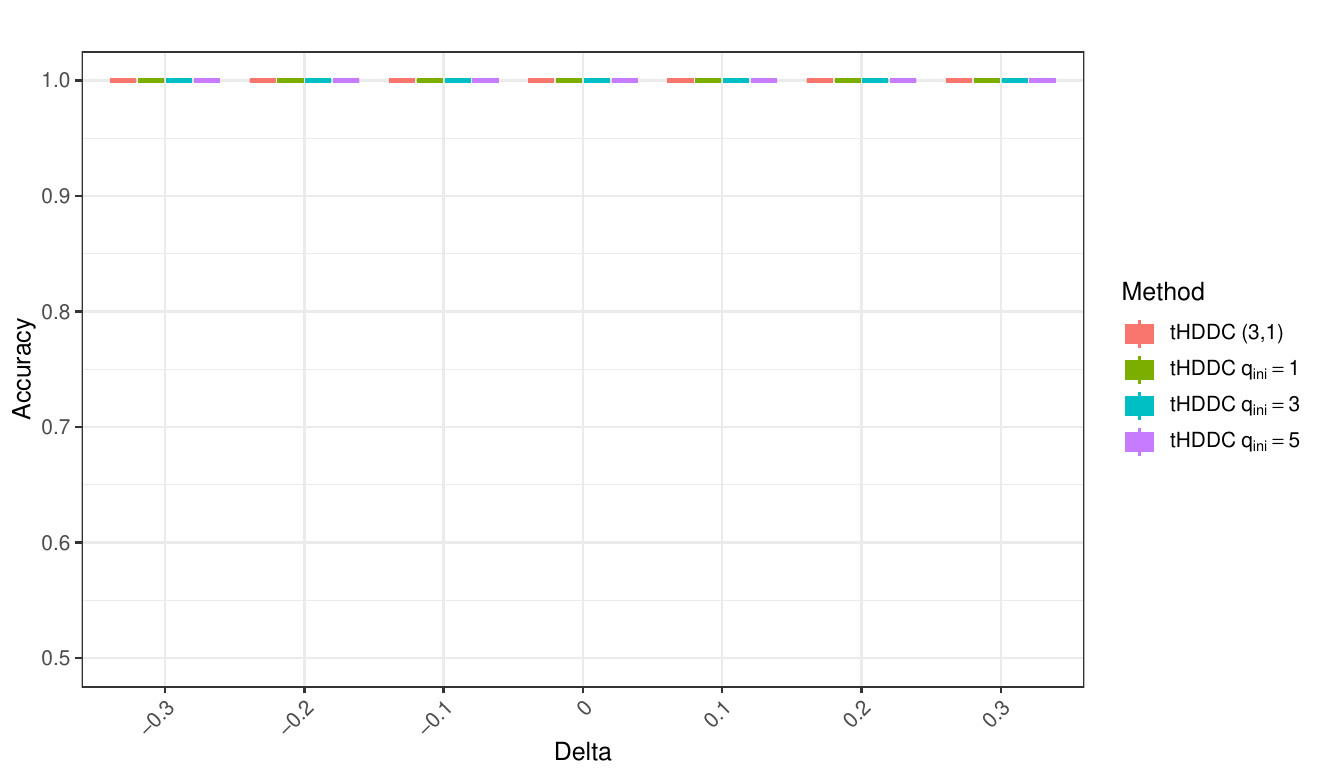}
  \caption{}
  \label{grafico2_scen1}
\end{subfigure}
\hfill
\begin{subfigure}{0.48\linewidth}
  \centering
  \includegraphics[width=\linewidth]{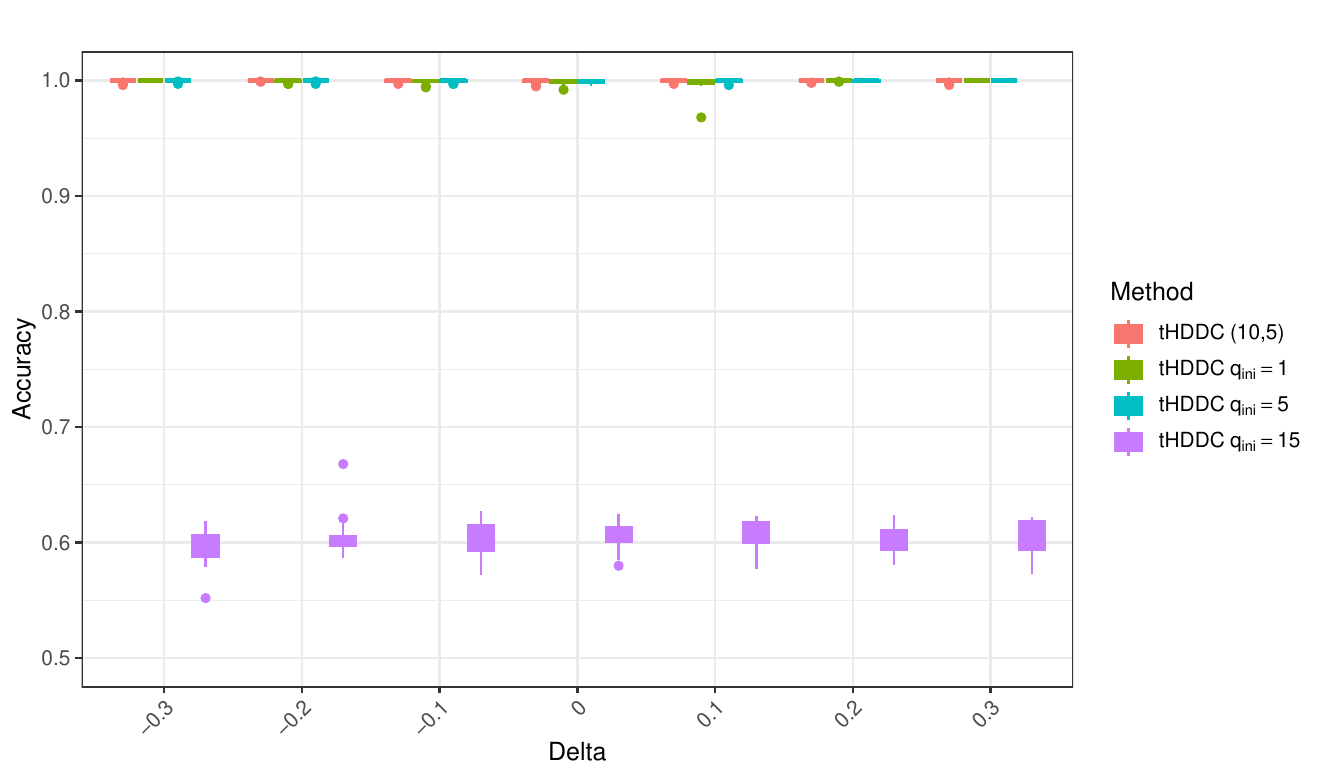}
  \caption{}
  \label{grafico2_scen2}
\end{subfigure}

\begin{subfigure}{0.48\linewidth}
  \centering
  \includegraphics[width=\linewidth]{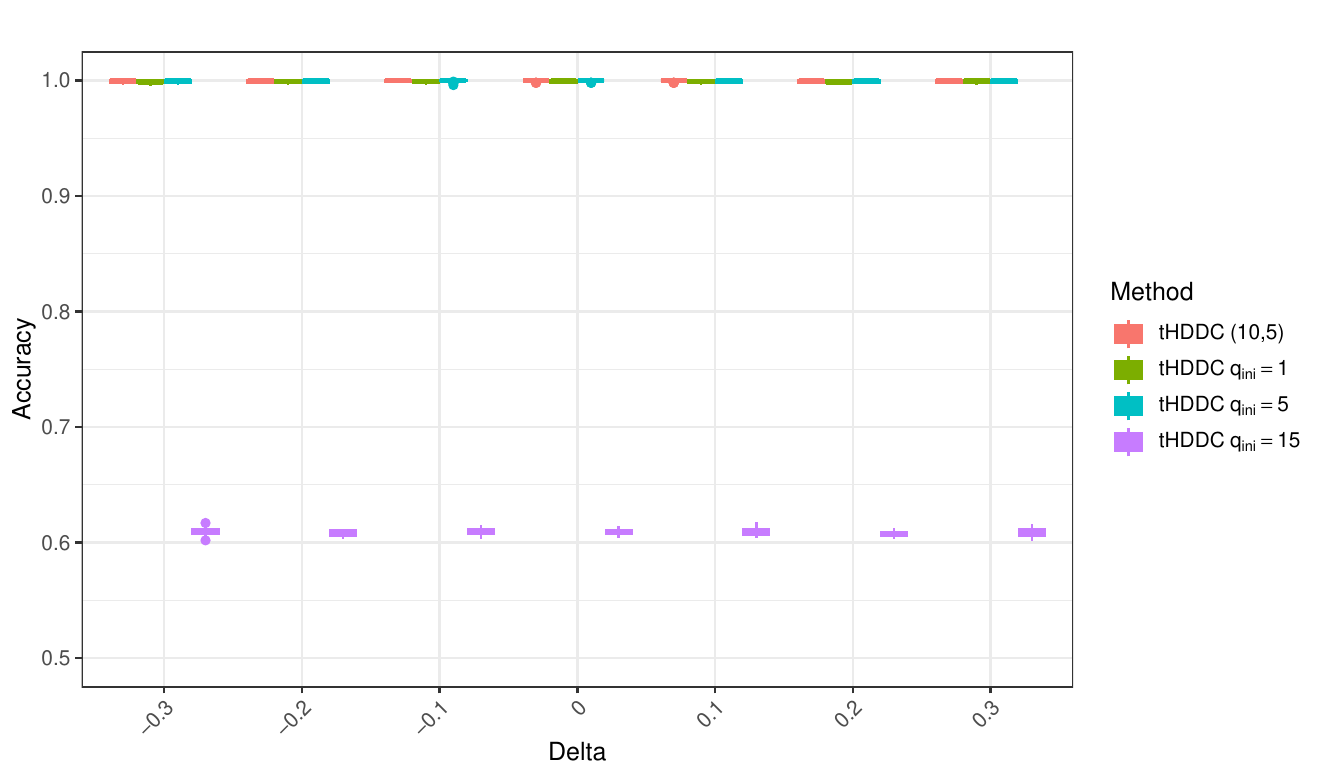}
  \caption{}
  \label{grafico2_scen3}
\end{subfigure}

   \caption{Boxplots of the classification accuracy of the tHDDC method when $q_1$ and $q_2$ are equal to true values of the intrinsic dimensions, and also when using different values  $q_{\text{ini}}$ for their  automatized determination by using the Cattell's approach. The results for Scenario~1 are shown in (a), for Scenario~2 in (b), and Scenario~3  in (c).}
\label{grafico2}
\end{figure}

For Scenario 1 (Figure~\ref{grafico2_scen1}), we observe almost perfect accuracy values for all $q_{\text{ini}}$ values analyzed. Although for Scenario~2 (Figure~\ref{grafico2_scen2}) and Scenario~3 (Figure~\ref{grafico2_scen3}), the tHDDC method achieves near-perfect classification accuracy for all values of $\delta$, the performance deteriorates when the intrinsic dimensions are initialized with large values ($q_{\text{ini}} = 15$). In this case, accuracy drops to approximately $0.6$–$0.65$, which can be attributed to the increased difficulty of identifying suitable initial subsets when a large number of observations must be selected to initialize parameters when $q_{\text{ini}} = 15$. Note that two randomly selected subsets with $15+2$ observations each, in this case, should ideally be selected from each cluster and correctly partitioned. For the remaining initialization choices, the results are remarkably stable and highly accurate. In all cases, the boxplots for the accuracies are concentrated in the interval $[0.95, 1]$.%, with only a few outlying points corresponding to very small misclassifications rates. %In particular, the worst observed accuracies are approximately $0.98$ and $0.993$, corresponding to only 20 and 7 misclassified observations, respectively, out of a total of 1000.
As mentioned, we have also analyzed other values of the constant $c$ for the eigenvalue-ratio constraint in (\ref{eigenavalus_ratio}) when applying TCLUST, including a larger value $c = 150$, which satisfies the true eigenvalue ratio in the data generation scheme. The results do not improve; in general, the accuracy rate never exceeds 0.6.

%\begin{table}[ht]
%\centering
%\begin{tabular}{c ccccccc}

%$\delta$ &
%$-0.3$ & $-0.2$ & $-0.1$ & $0.0$ & $0.1$ & $0.2$ & $0.3$ \\

%Ratio 1 &
%2.903551 & 2.841197 & 2.892752 & 2.913823 & 2.865148 & 2.850567 & 2.857786 \\
%Ratio 2 &
%2.719323 & 2.739667 & 2.761210 & 2.789511 & 2.725773 & 2.756563 & 2.708481 \\
%Ratio 3 &
%2.529054 & 2.508126 & 2.581753 & 2.581341 & 2.551942 & 2.505011 & 2.509095 \\

%\end{tabular}
%\caption{Ratio de tiempos para los tres escenarios en función de $\delta$.}
%\label{tab:ratio_escenarios}
%\end{table}

%We now turn to the analysis of the computational cost of the different methods. All simulations were carried out on a high-performance computing server equipped with \textcolor{blue}{four Intel(R) Xeon(R) Gold 6148 processors running at 2.40\,GHz, providing a total of 160 hardware threads. The system is endowed with 1~TB of RAM and 6~TB of available disk space in the \texttt{/home} directory shared among users. In addition, the server includes an NVIDIA RTX A6000 GPU with 48~GB of GDDR6 memory, based on the NVIDIA Ampere architecture, featuring 10.752 CUDA cores, 336 third-generation Tensor Cores, and 84 second-generation RT Cores}. All algorithms were executed using 4 threads and 20 workers.

%\subsection{Computational time analysis}

Regarding computing times, all simulations were carried out on a %high-performance computing
server equipped with four Intel Xeon Gold 6148 processors (2.4\,GHz, 160 hardware threads) and 1~TB of RAM. In addition, the server includes an NVIDIA RTX A6000 GPU with 48~GB of GDDR6 memory. %All algorithms were executed using 4 threads and 20 workers.
We focus our comparison on TCLUST and tHHDC, comparing the execution times of tHDDC (for different values of $q_{\text{ini}}$) with those of TCLUST. We do not include RLG in this computational time comparison, since when the intrinsic dimensions are fixed, its computational cost is comparable to that of tHDDC. Indeed, both methods require the computation of similar quantities such as eigenvalues, eigenvectors, and distance measures. In fact, if Cattell’s method were also applied to estimate the intrinsic dimension in RLG, the associated computational burden would be similar to that of tHDDC, as it would involve analogous steps. Moreover, regarding that comparison of computational times, it is important to note that TCLUST is available through the \texttt{tclust} \textsf{R} package, in which part of the implementation is written in \texttt{C++}. In contrast, our proposed tHDDC method is currently implemented entirely in \textsf{R}, although a \texttt{C++} implementation is planned for future work. As a consequence, a direct comparison of absolute execution times between these methods is not entirely fair. Therefore, we are going to base our comparison in a similar purely \textsf{R} implementation of both TCLUST and tHHDC.

Table~\ref{ratio_escenarios} reports the ratio between the computation time of TCLUST and that of tHDDC (initialized with the true intrinsic dimensions) for different values of the separation parameter $\delta$ and for the three previously introduced simulation scenarios.

\begin{table}[ht]
\centering
\begin{tabular}{|c|c|c|c|c|c|c|c|}
\hline
$\delta$ &
$-0.3$ & $-0.2$ & $-0.1$ & $0.0$ & $0.1$ & $0.2$ & $0.3$ \\
\hline
Scenario 1 &
2.923 & 2.944 & 2.898 & 2.922 & 2.958 & 2.876 & 2.941 \\

Scenario 2 &
2.663 &  2.628 & 2.640 & 2.640 & 2.662 & 2.679 & 2.670 \\

Scenario 3 &
2.511 & 2.480 & 2.522 & 2.460 & 2.536 & 2.483 & 2.476 \\
\hline
\end{tabular}
\caption{Ratio between the computation times of TCLUST and tHDDC for the three scenarios and for each value of the separation parameter $\delta$.}
\label{ratio_escenarios}
\end{table}

%These results show that the computational cost of TCLUST is consistently higher than that of tHDDC. Specifically, the ratio ranges approximately between $2.87$ and $2.95$ for Scenario~1, between $2.62$ and $2.66$ for Scenario~2, and between $2.45$ and $2.53$ for Scenario~3. The highest ratios occur for $\delta = 0$, corresponding to the most challenging clustering setting due to maximum overlap between groups. This behavior is consistent with the results shown in Figure~\ref{grafico1}, where the classification performance of TCLUST also deteriorates as the overlap increases. Overall, these results indicate that tHDDC is between 2.5 and 3 times faster than TCLUST under comparable conditions.

These results show that the computational cost of TCLUST is consistently higher than that of tHDDC across all scenarios and values of the separation parameter $\delta$. %The ratios remain relatively stable, ranging approximately between $2.87$ and $2.95$ for Scenario~1, between $2.62$ and $2.68$ for Scenario~2, and between $2.46$ and $2.54$ for Scenario~3.
Although minor fluctuations are observed with different values of $\delta$, no clear pattern emerges, indicating that the relative computational advantage of tHDDC is largely independent of the degree of separation between groups. Overall, tHDDC is roughly 2.5 to 3 times faster than TCLUST under comparable conditions.%, with the largest gains observed in Scenario~1, which involves lower-dimensional subspaces.

We finally analyze the computational cost of tHDDC under different values of the initial intrinsic dimension $q_{\text{ini}}$ for the Cattell's procedure described in Subsection \ref{catell}. Figure \ref{grafico3} displays the boxplots of the execution times.% for Scenario~1 in Figure \ref{grafico3_scen1}, Scenario~2 in Figure \ref{grafico3_scen2}, and Scenario~3 in Figure \ref{grafico3_scen3}.

\begin{figure}[htbp]
\centering

\begin{subfigure}{0.48\linewidth}
  \centering
  \includegraphics[width=\linewidth]{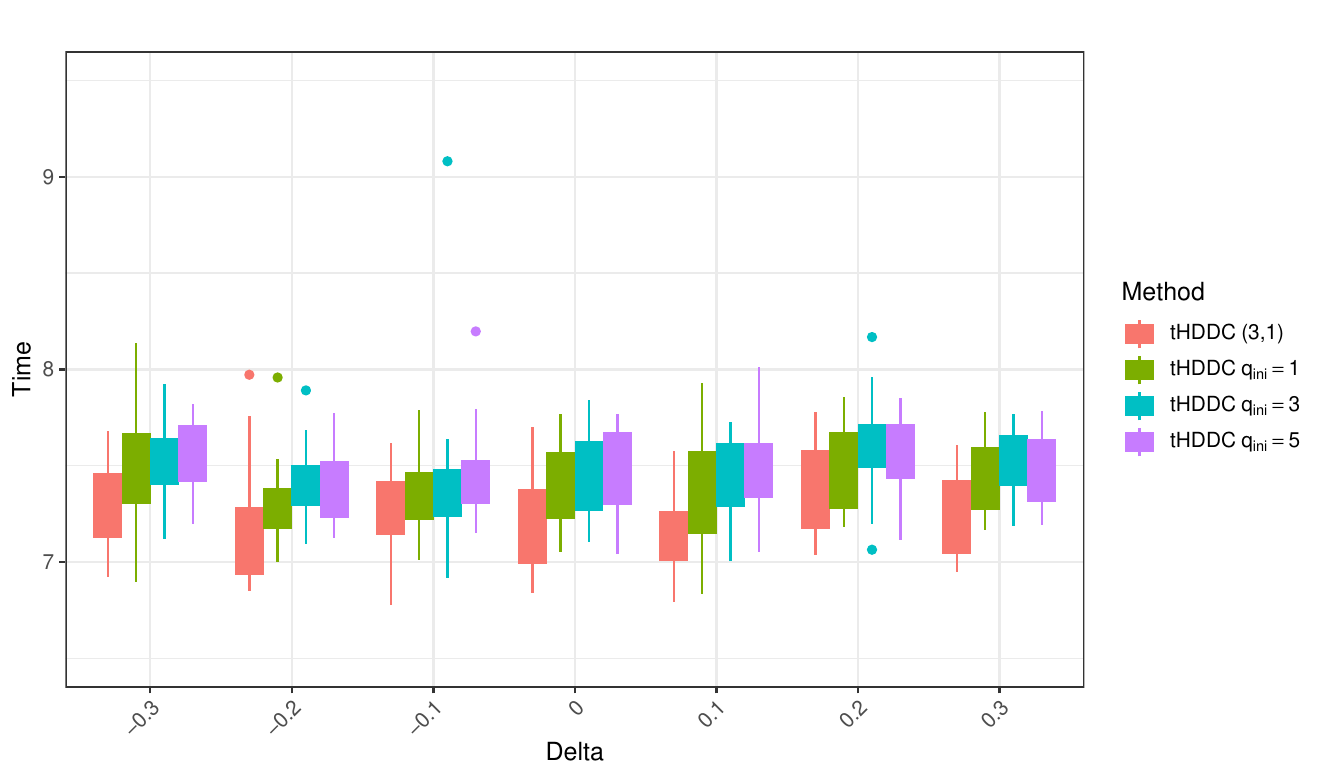}
  \caption{}
  \label{grafico3_scen1}
\end{subfigure}
\hfill
\begin{subfigure}{0.48\linewidth}
  \centering
  \includegraphics[width=\linewidth]{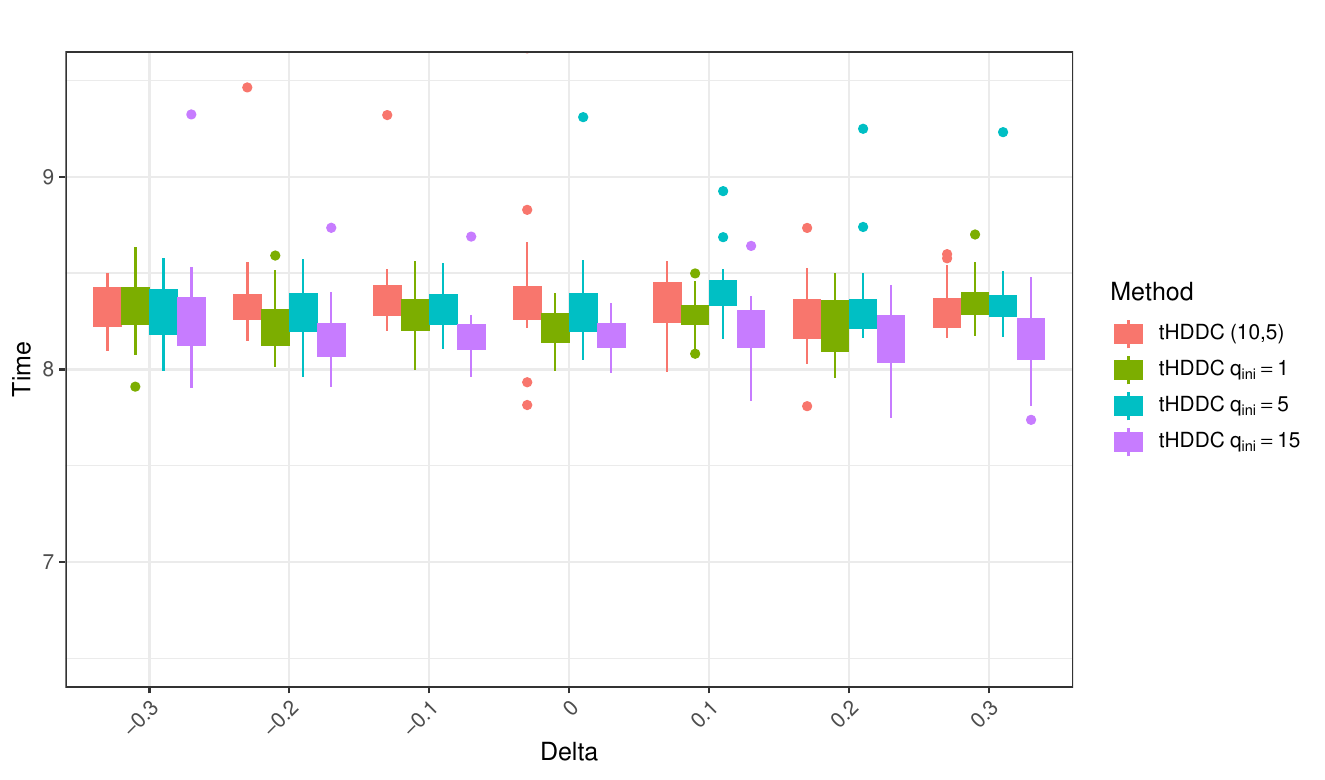}
  \caption{}
  \label{grafico3_scen2}
\end{subfigure}

\begin{subfigure}{0.48\linewidth}
  \centering
  \includegraphics[width=\linewidth]{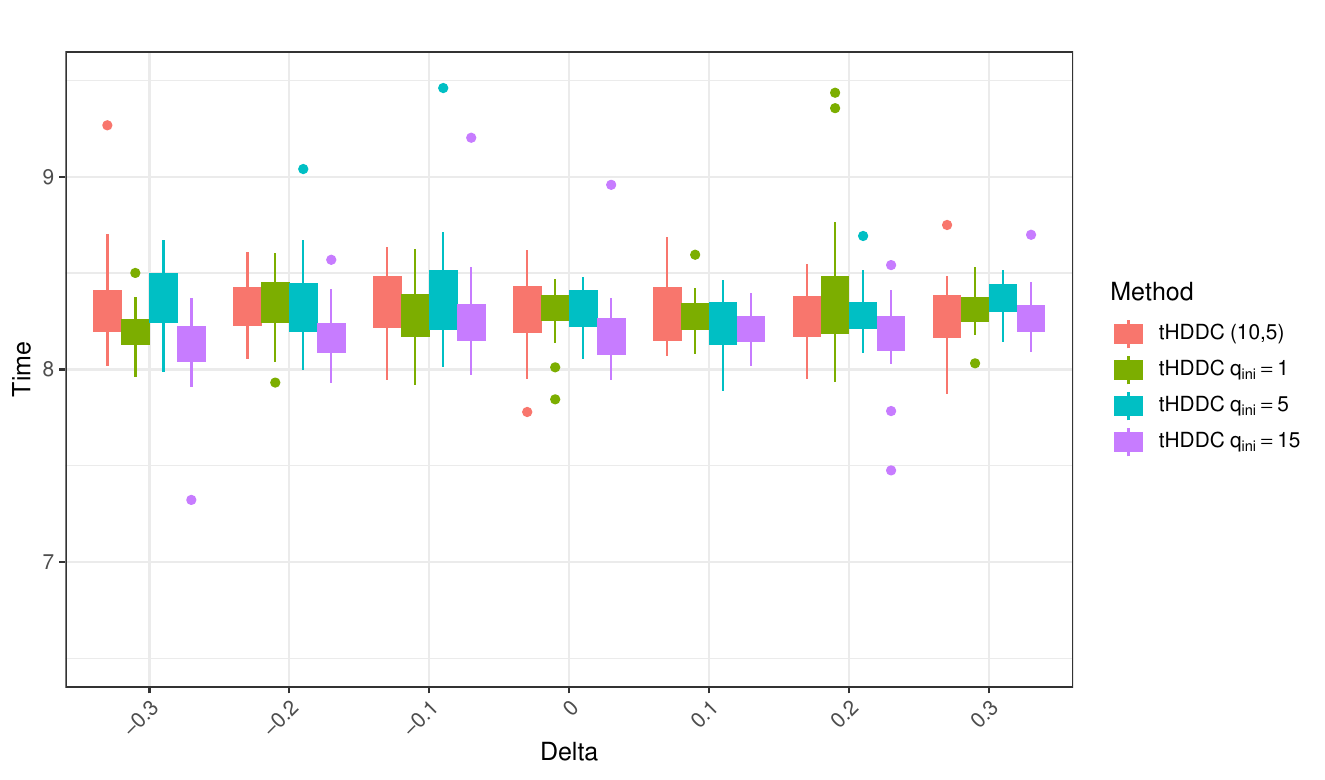}
  \caption{}
  \label{grafico3_scen3}
\end{subfigure}

\caption{Boxplots of the computation time for tHDDC different initial $q_{\text{ini}}$ values for automated determination of the intrinsic dimensions, for each value of $\delta$. The results for Scenario~1 are shown in (a), for Scenario~2 in (b), and Scenario~3  in (c).}
\label{grafico3}
\end{figure}

In Scenario~1 (Figure~\ref{grafico3_scen1}), which represents the simplest setting, the tHDDC method initialized with the true intrinsic dimensions achieves the lowest median computation time. The initialization with $q_{\text{ini}} = 1$ leads to slightly higher computation times, while larger initial values result in increased computational cost. This behavior is expected, since larger initial dimensions require the computation of a greater number of eigenvalues and eigenvectors, particularly during the initialization stage of the clustering algorithm. For Scenario~2 (Figure~\ref{grafico3_scen2}) and Scenario~3 (Figure~\ref{grafico3_scen3}), the differences in computational time among the various initializations are less pronounced. In these cases, the execution times are more homogeneous across different values of $q_{\text{ini}}$. Interestingly, the initialization with $q_{\text{ini}} = 15$ most of the times yields lower computation times than the other initializations. This effect can be explained by the fact that, for $q_{\text{ini}} = 15$, the resulting clustering quality is poorer, leading to faster convergence due to less demanding optimization steps.

%\begin{figure}[htb]
%    \centering
%    \begin{minipage}[b]{0.47\textwidth}
%        \centering
%        \includegraphics[scale=0.4]{tclust_scen1.png}
%    \end{minipage}
%    \hfill
%    \begin{minipage}[b]{0.47\textwidth}
%        \centering
%        \includegraphics[scale=0.4]{tclust_scen2.png}
%    \end{minipage}
   
%    \begin{minipage}[b]{0.65\textwidth}
%        \centering
%        \includegraphics[scale=0.4]{tclust_scen3.png}
%    \end{minipage}
%    \caption{tclust}
%    \label{grafico_tclust}
%\end{figure}

Finally, we also take advantage of this simulation study to further assess the performance of the Cattell-based procedure used by tHDDC to estimate the intrinsic dimensions of the cluster-specific subspaces. For each scenario and each value of the separation parameter $\delta$, we analyze the estimated dimensions obtained at convergence of the algorithm. Figures~\ref{grafico_q1} and~\ref{grafico_q2} summarize these results by reporting boxplots of the estimated intrinsic dimensions for each scenario and each value of $\delta$. Figure~\ref{grafico_q1} displays boxplots of the estimated intrinsic dimensions corresponding to the larger intrinsic dimension in each scenario, whereas Figure~\ref{grafico_q2} shows those for the smaller ones. %For ease of interpretation, the vertical axes are labeled according to the true intrinsic dimensions in each scenario: in Scenario~1, the larger and smaller dimensions are $3$ and $1$, respectively, while in Scenarios~2 and~3 they are $10$ and $5$. Boxplots centered around these reference values indicate accurate estimation, whereas values predominantly above or below correspond to overestimation or underestimation, respectively.
\begin{figure}[htbp]
\centering

\begin{subfigure}{0.48\linewidth}
  \centering
  \includegraphics[width=0.95\linewidth]{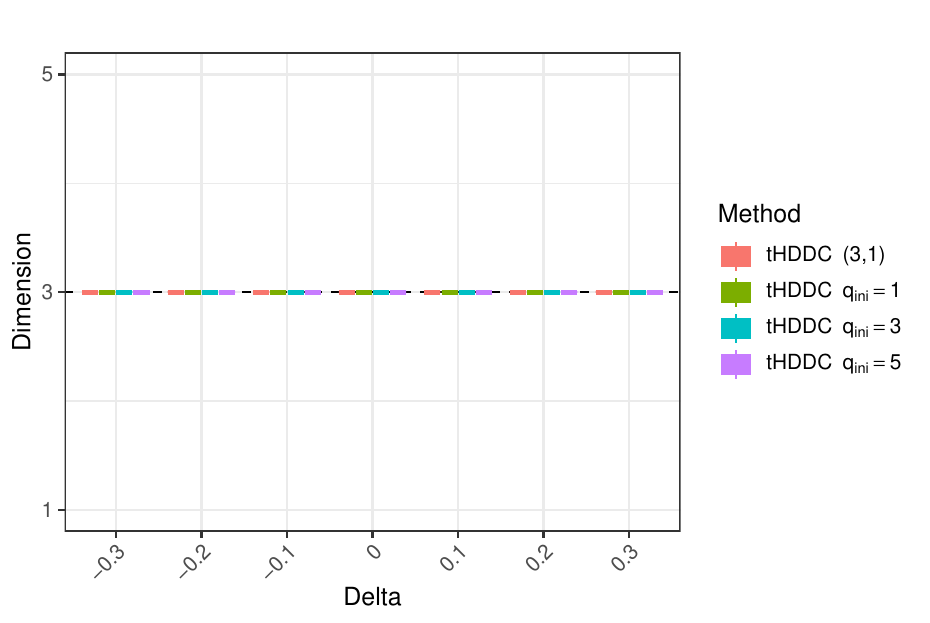}
  \caption{}
  \label{q1_scen1}
\end{subfigure}
\hfill
\begin{subfigure}{0.48\linewidth}
  \centering
  \includegraphics[width=0.95\linewidth]{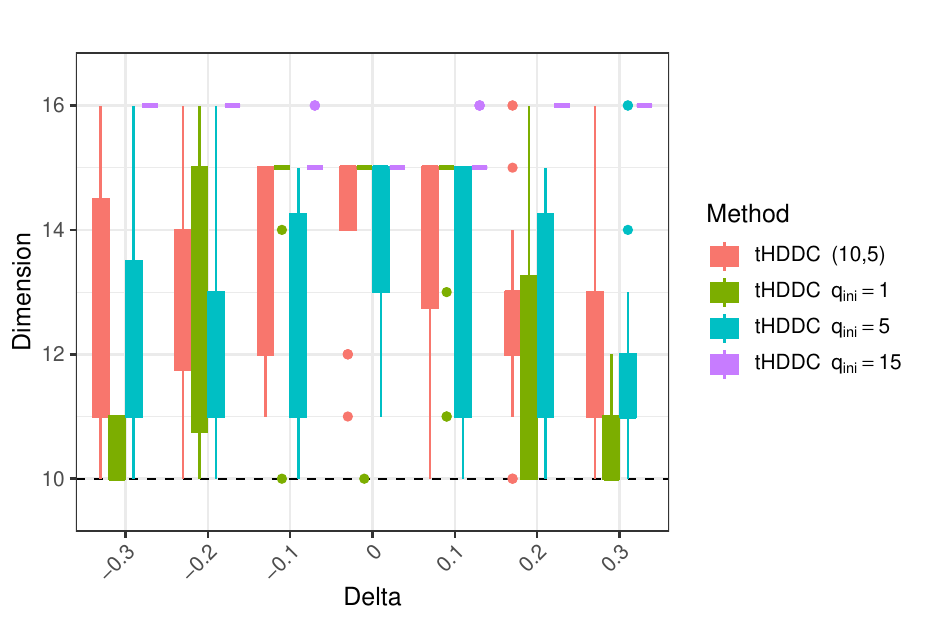}
  \caption{}
  \label{q1_scen2}
\end{subfigure}

\begin{subfigure}{0.48\linewidth}
  \centering
  \includegraphics[width=0.95\linewidth]{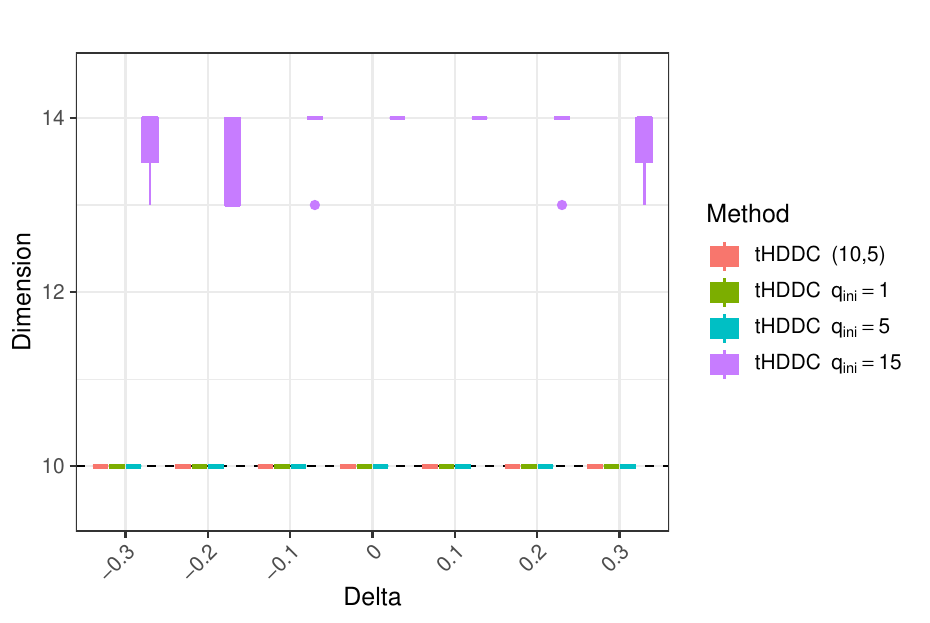}
  \caption{}
  \label{q1_scen3}
\end{subfigure}

\caption{Boxplots of the estimated larger intrinsic dimension when using the Cattell-based procedure. The results for Scenario~1 are shown in (a), for Scenario~2 in (b), and Scenario~3  in (c).  The horizontal dashed line indicates the true value to be estimated.}
\label{grafico_q1}
\end{figure}

Across all three scenarios, the intrinsic dimensions are accurately recovered in the majority of replications. In Scenario~1, where the true dimensions are small ($q_1 = 3$ and $q_2 = 1$), the Cattell-based criterion consistently identifies the correct dimensionality for both clusters, independently of $\delta$. In Scenario~2, the larger intrinsic dimension tends to be overestimated across the different initializations of $q_{\text{ini}}$, %typically by 1 to 6 extra dimensions
with the worst results observed for the initialization $q_{\text{ini}} = 15$, but the smaller intrinsic dimension  is generally well estimated, except again for $q_{\text{ini}} = 15$. %, where overestimation of 8 to 10 dimensions is observed.
For Scenario~3, the only initialization that produces notably poor estimates for both dimensions  is again $q_{\text{ini}} = 15$. All other initializations provide accurate estimates for both the larger and smaller intrinsic dimensions in this scenario. These comments can be related with previous findings regarding clustering accuracy and computational time: the initialization $q_{\text{ini}} = 15$ yielded very poor accuracy in Scenarios~2 and~3.%, despite an apparently lower computation time. %This can be explained by the fact that overestimating the intrinsic dimensions produces clusters that do not match the original ones, leading to faster convergence of the algorithm but substantially degraded clustering performance.

\begin{figure}[htbp]
\centering

\begin{subfigure}{0.48\linewidth}
  \centering
  \includegraphics[width=0.95\linewidth]{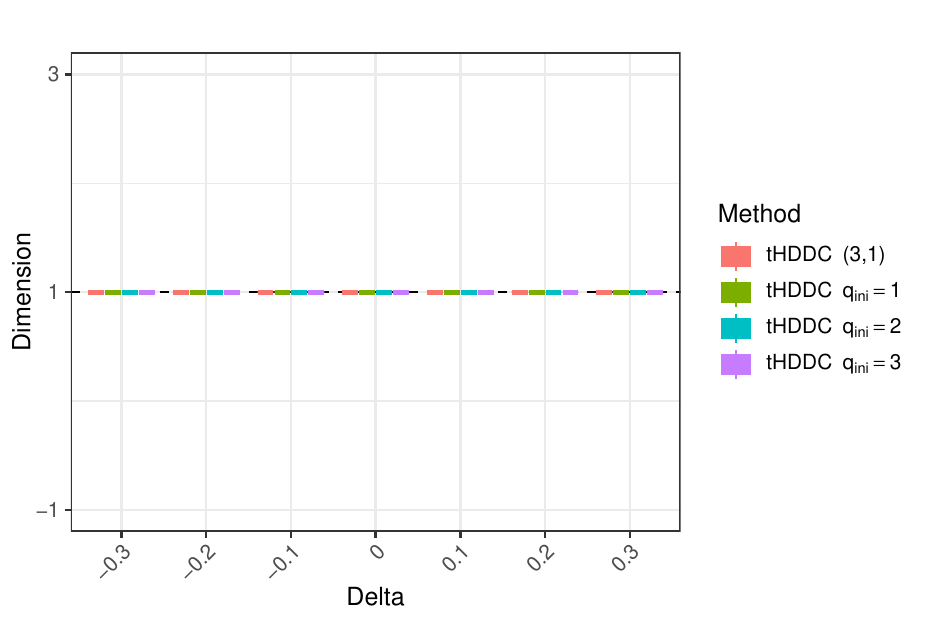}
  \caption{}
  \label{q2_scen1}
\end{subfigure}
\hfill
\begin{subfigure}{0.48\linewidth}
  \centering
  \includegraphics[width=0.95\linewidth]{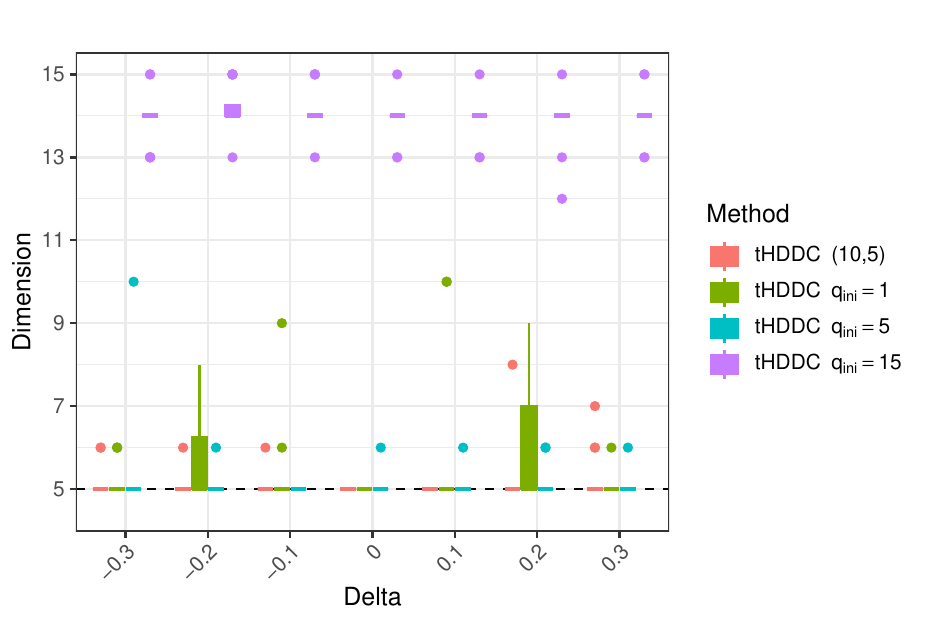}
  \caption{}
  \label{q2_scen2}
\end{subfigure}

\begin{subfigure}{0.48\linewidth}
  \centering
  \includegraphics[width=0.95\linewidth]{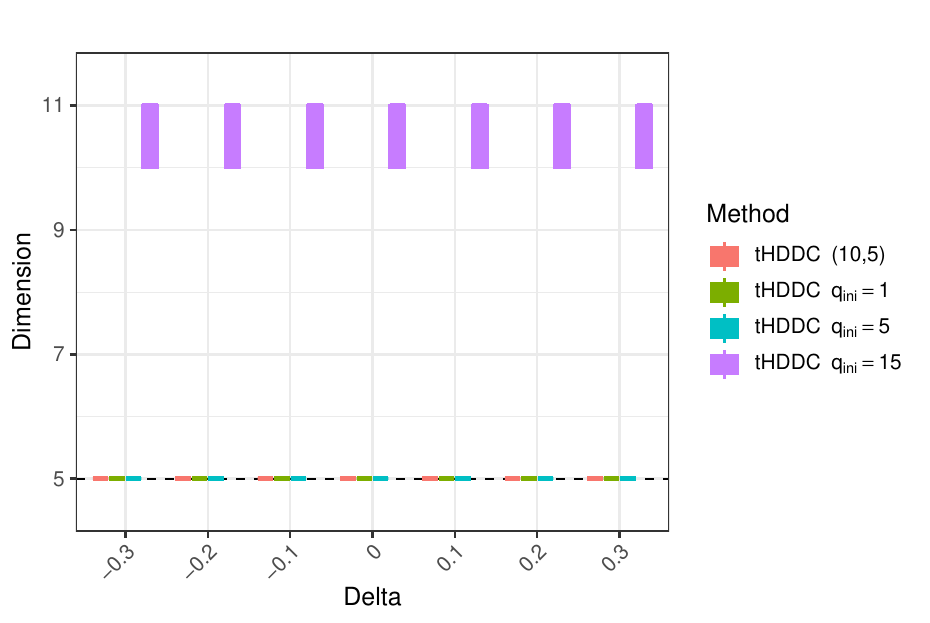}
  \caption{}
  \label{q2_scen3}
\end{subfigure}

\caption{Boxplots of the estimated smaller intrinsic dimension when using the Cattell-based procedure. The results for Scenario~1 are shown in (a), for Scenario~2 in (b), and Scenario~3  in (c).  The horizontal dashed line indicates the true value to be estimated.}
\label{grafico_q2}
\end{figure}

Overall, all the results shown in this section highlight the robustness of the proposed tHDDC method with respect to the choice of initialization, both in terms of classification accuracy and computational efficiency. However, based on the %empirical evidence presented in
results shown in this section, it is perhaps recommended using relatively small values of $q_{\text{ini}}$. Although this choice may result in a modest increase in computational cost, it seems to lead to more stable and accurate results than unnecessarily large $q_{\text{ini}}$ values. %initializations and avoids the accuracy degradation observed when large initial values are used.

\section{Real data example: Digits data}\label{sec_6}

For a real data example, we use a data set containing images of handwritten digits collected from envelopes processed by the United States Postal Service (USPS). Each digit, originally varying in size and orientation, was segmented, standardized, and normalized, and finally represented as a $16 \times 16$ grayscale image, i.e., a 256-dimensional vector. We focus on a subset of digits ``$3$'', ``$5$'', and ``$8$'', yielding $1756$ images, commonly referred to as \texttt{USPS358}, and available in the R package \texttt{MBCbook} \citep{MBCbook}. These digits were chosen because they are typically the most difficult to distinguish in handwriting recognition, due to their more similar shapes.
Moreover, we introduce outliers into the dataset of interest in a controlled manner. Specifically, we select $195$ observations corresponding to some remaining digits (i.e., digits other than ``$3$'', ``$5$'', and ``$8$'') from the \texttt{USPS} dataset available in the R package \texttt{Rdimtools} \citep{Rdimtools}, so that the resulting dataset surely contains at least a fraction $10\%$ of outliers to be trimmed. %In the version of the dataset available in the \texttt{Rdimtools} package.%, digits $6$ and $7$ are not present, and digit $1$ appears in a nonstandard diagonal representation; these digits were therefore excluded from the contamination mechanism. 
In addition, with the aim of incorporating clearly atypical observations, we introduce another $45$ artificially generated outliers. %Since each observation is represented as a vector in $\mathbb{R}^{256}$ (equivalently, a $16\times16$ grayscale image),
These outliers are constructed by defining simple geometric or structural patterns on the corresponding image grid, such as checkerboard patterns, vertical or horizontal stripe patterns, %filled images with nearly constant intensity,
spiral-like shapes, and concentric square frames. %To avoid generating identical observations, small random perturbations are added to the pixel intensities of each pattern, producing five slightly different realizations of each configuration.
Finally, we also include a set of outliers obtained by inverting the grayscale intensities of selected digit images from the original dataset. %(i.e., replacing each pixel value $x$ by $2-x$). This transformation preserves the underlying digit structure while reversing foreground and background intensities, resulting in images that are visually very different from the typical observations in the dataset.
%These additional observations are intended to represent extreme and structurally different patterns, allowing us to assess the robustness of the proposed methodology under severe contamination scenarios.

For this experiment, we employed the TCLUST and tHDDC procedures. In both cases, we set $G=3$, and since the dataset contains $12\%$ artificially added outliers, we have selected a higher trimming level $\alpha=0.2$, as it is also reasonable to expect additional atypical observations in a dataset derived from non-standard annotations of handwritten digits. Our aim at such a high trimming level is to retain some of the ``clearest'' digit representations within each cluster, that can serve as a starting point for further analyses.

For tHDDC, following the insights gained from the simulation study, we initialize the intrinsic dimensions when applying the Cattell's procedure with a small $q_{\text{ini}}=1$ value. We additionally set $q_{\text{max}}=20$ and \texttt{tresh=0.2}. The constants appearing in the eigenvalue constraints were set to $c_1=5$ and $c_2=1.1$, enforcing similarity among the smallest eigenvalues while allowing some flexibility for largest ones. For TCLUST, we set $c=12$, allowing a certain departure from spherical and equally scattered clusters while maintaining protection against spurious solutions. Finally, for both TCLUST and tHDDC (in its trimmed classification likelihood version), we used \texttt{nstart} = 200, \texttt{nkeep} = 5, \texttt{csteps1} = 10, and \texttt{csteps2} = 150.

With these choices, the dimensions estimated using Cattell's method when running tHDDC in a typical run were $q_1=13$, $q_2=14$, and $q_3=20$. The associated confusion matrix is reported in Table \ref{tab:tHDDC}. On the other hand, the confusion matrix obtained when using TCLUST is shown in Table \ref{tab:TCLUST}. 
%The confusion matrices obtained from a representative run \la{[Que es un ``representative run''?]} of the robust procedures are reported in Figure~\ref{fig:confusion_matrices}(a)--(b). The classification errors reported below are computed considering only the observations that survive the trimming step.
Comparing the performance of the two robust procedures, tHDDC achieves a classification error (computed considering only the observations that survive the trimming step) slightly above $7\%$ %(Figure~\ref{fig:confusion_matrices}a),
whereas TCLUST exhibits a substantially larger error exceeding $38\%$. %(Figure~\ref{fig:confusion_matrices}b). 
On the server described above, the computational time required by tHDDC is approximately three times smaller than that of TCLUST.

For comparison, we also consider the HDDC method \citep{bouveyron2007high}, where no trimming is performed, when using the \texttt{HDclassif} package \citep{JSSv046i06} with the input parameters \texttt{K} = 3, \texttt{itermax} = 150, \texttt{algo} = \texttt{EM}, \texttt{d\_select} = ``\texttt{Cattell}'', \texttt{d\_max} = 20, \texttt{threshold} = 0.2, \texttt{init} = \texttt{mini-em}, and \texttt{mini.nb} = (200, 10) in order to match as closely as possible the settings of the previous application of tHDDC (for instance, \texttt{d\_max} = 20 is analogous to setting $q_{\text{max}}=20$, or \texttt{mini.nb} = (200, 10) for \texttt{nstart} = 200, and \texttt{csteps1} = 10), we obtain the confusion matrix reported in Table \ref{tab:HDDC}. We clearly observe that HDDC suffers from the presence of outliers, exhibiting a classification error above $39\%$, since, for instance, it confuses digits ``$3$'' and ``$5$'' %(see the last row of Figure~\ref{fig:confusion_matrices}c)
and forms a cluster containing most of the outliers together. %with a mixture of the three target digits (see the second row of  Figure~\ref{fig:confusion_matrices}c).
This behavior under contamination is not surprising for HDDC, as the method is not designed to be robust to contamination.

%%%% FORMATO TABLE

\begin{table}[ht]
\centering

\begin{subtable}{0.32\textwidth}
\centering
\resizebox{\textwidth}{!}{ % Ajusta la tabla al ancho de la columna si es muy grande
\begin{tabular}{|c|ccccc|}
\hline
 & out1 & out2 & 3 & 5 & 8 \\
\hline
0 & 195 & 45 & 49 & 31 & 79 \\
1 & 0 & 0 & 10 & 0 & \textit{429} \\
2 & 0 & 0 & \textit{529} & 1 & 32 \\
3 & 0 & 0 & 70 & \textit{524} & 2 \\
\hline
\end{tabular}}
\caption{tHDDC}
\label{tab:tHDDC}
\end{subtable}
\hfill % Espacio elástico entre tablas
\begin{subtable}{0.32\textwidth}
\centering
\resizebox{\textwidth}{!}{
\begin{tabular}{|c|ccccc|}
\hline
 & out1 & out2 & 3 & 5 & 8 \\
\hline
0 & 180 & 45 & 40 & 88 & 46 \\
1 & 4 & 0 &307 &  \textit{251} & 29 \\
2 & 5 & 0 & 31 & 43 & \textit{459} \\
3 & 6 & 0 & \textit{280} & 174 & 8 \\
\hline
\end{tabular}}
\caption{TCLUST}
\label{tab:TCLUST}
\end{subtable}
\hfill
\begin{subtable}{0.32\textwidth}
\centering
\resizebox{\textwidth}{!}{
\begin{tabular}{|c|ccccc|}
\hline
 & out1 & out2 & 3 & 5 & 8 \\
\hline
1 & 0 & 0 & 7 & 0 & \textit{369} \\
2 & \textit{193} & 26 & 6 & 71 & 97 \\
3 & 2 & 19 & \textit{645} & \textit{485} & 76 \\
\hline
\end{tabular}}
\caption{HDDC}
\label{tab:HDDC}
\end{subtable}
\caption{Classification tables for the Digits data when using tHDDC in (a), TCLUST in (b), and  HDDC in (c). Columns correspond to the supposedly ``true'' digits (3, 5 and 8), where label ``out1'' denotes the group of outliers consisting of digits different from $3$, $5$, and $8$, while label ``out2'' corresponds to the group of artificially synthetic outliers. Rows represent the clusters detected, with row $0$ corresponding to trimmed observations.}
\label{tab:comparativa}
\end{table}

We also focus on illustrating some graphical tools for exploring the results of applying tHDDC. First, the cluster mean vectors $\mu_g$, for $g=1,2,3$, can be visualized in Figure \ref{cluster_centers}, where some kind of prototypical digits ``$3$'', ``$5$'', and ``$8$'' can be observed in their perhaps most ``standard'' form.
\begin{figure}[htbp]
    \centering
    \includegraphics[width=0.8\textwidth]{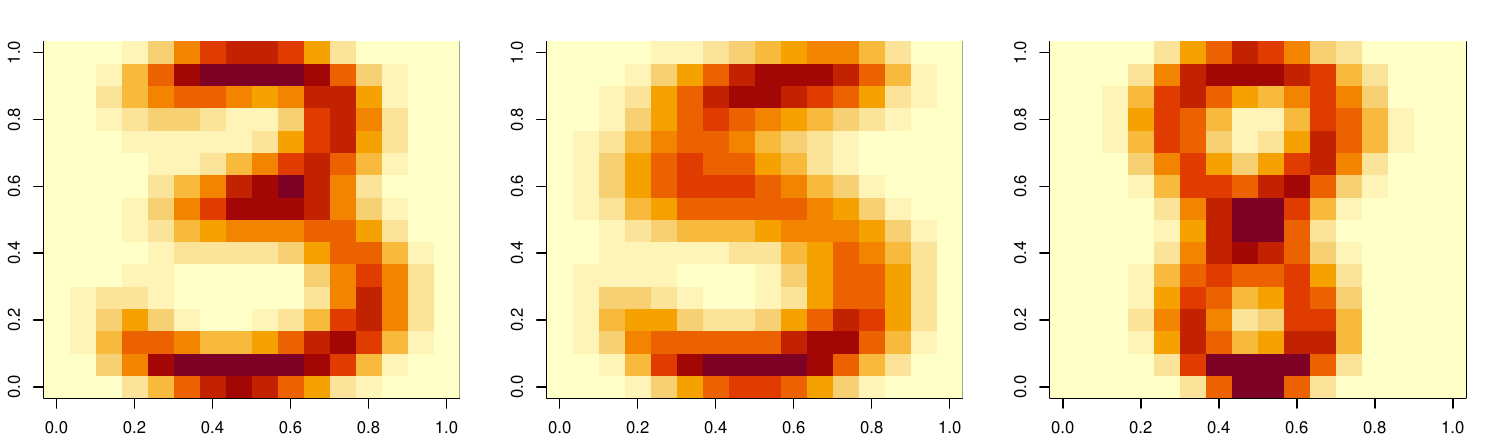}
    \caption{Cluster centers means.}
    \label{cluster_centers}
\end{figure}

%On the other hand, the dimensions estimated using Cattell's method with \texttt{thresh}$=0.2$ and $q_{\text{max}}=20$ were $q_1=13$, $q_2=14$, and $q_3=20$, and consequently
As commented, the estimated intrinsic dimensions when applying tHDDC were $13$, $14$ and $20$, but, in Figure~\ref{loading_vectors}, we only represent the first $8$ eigenvectors obtained for each cluster, commonly referred to as loading vectors. Each loading vector assigns positive weights (orange cells), negative weights (blue cells), or near-zero weights (white cells) to the different pixels of the $16 \times 16$ grid. These eigenvectors capture the main sources of within-class variability and, together with their associated eigenvalues, provide the flexibility needed to correctly classify digits whose handwriting deviates from these most ``standard’’ forms shown in Figure \ref{cluster_centers}. For instance, in the particular case of cluster associated to digit $3$, we observe in Figure~\ref{fig:a} that the model serves to accommodate a range of varied shapes: more rounded versions that touch the borders of the grid both at the top and bottom; more centered and compact shapes; or versions where the upper part of the digit is noticeably smaller. %This illustrates how the intrinsic subspaces determined by tHDDC can be used to describe the heterogeneity observed in the handwritten digits.

\begin{figure}[htbp]
\centering

\begin{subfigure}{0.48\linewidth}
  \centering
  \includegraphics[width=\linewidth]{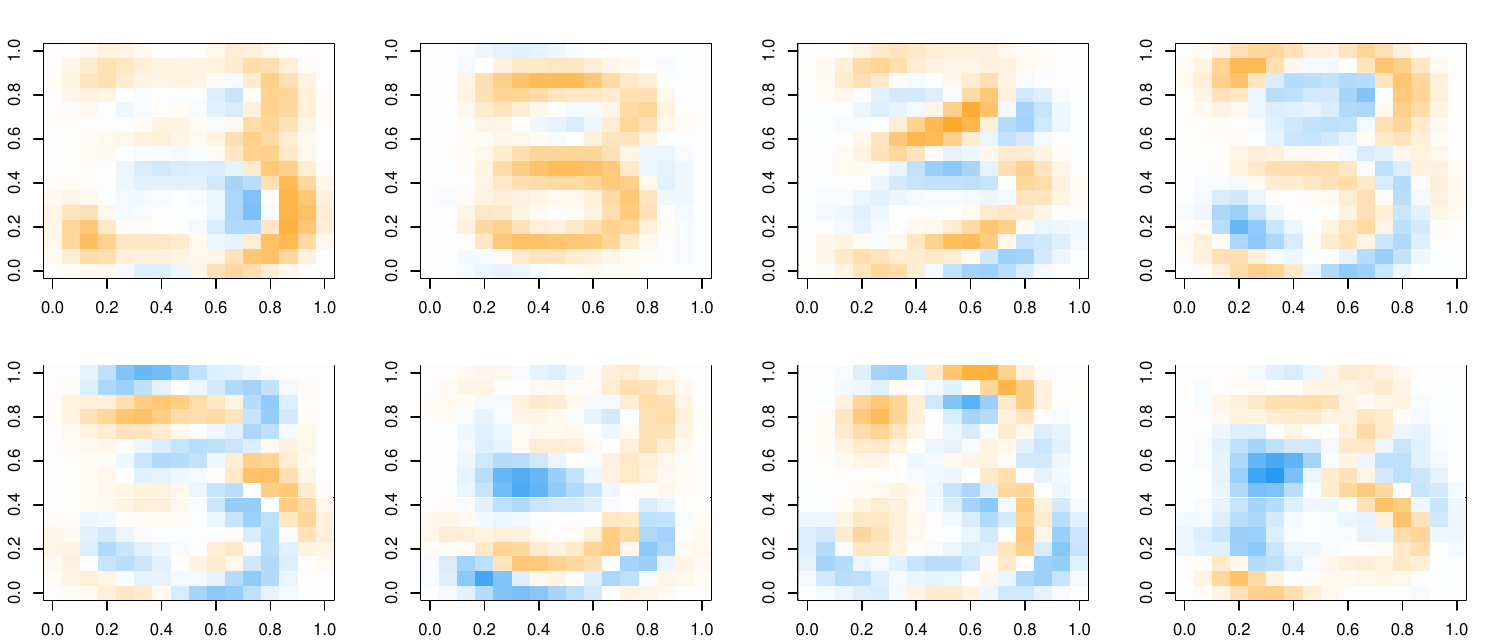}
  \caption{}
  \label{fig:a}
\end{subfigure}
\hfill
\begin{subfigure}{0.48\linewidth}
  \centering
  \includegraphics[width=\linewidth]{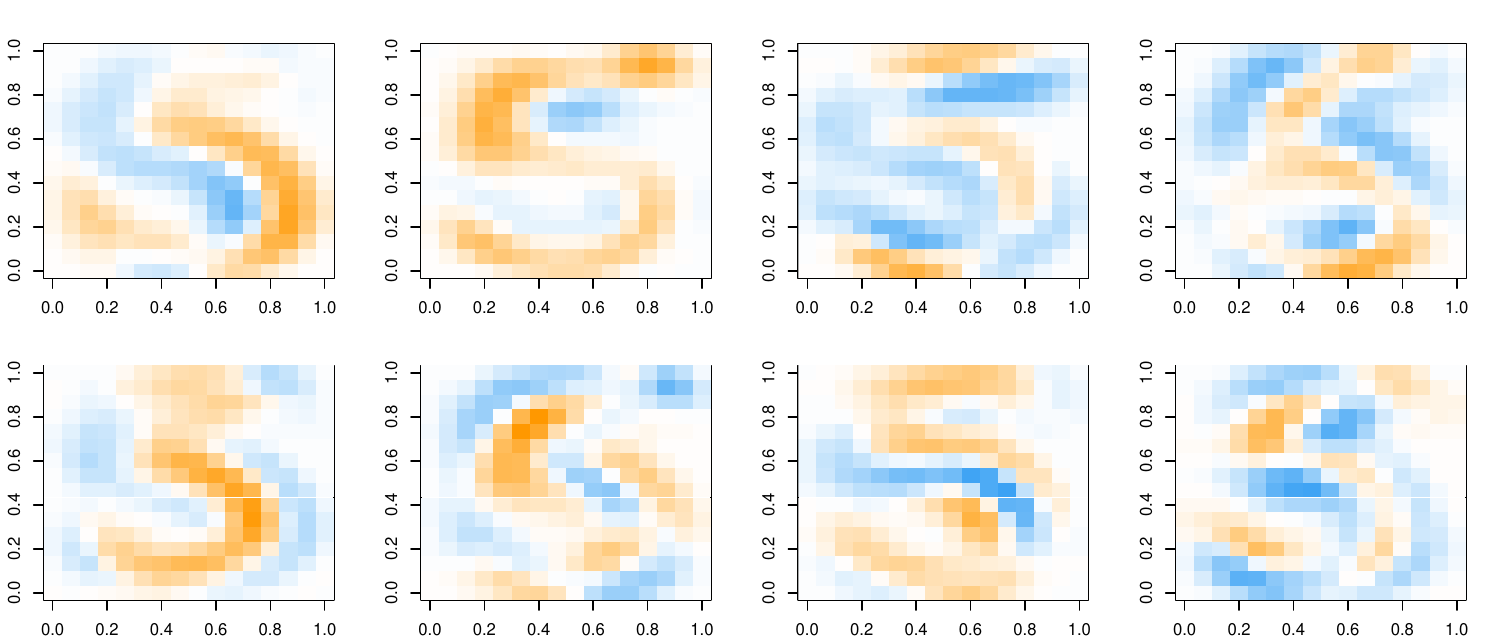}
  \caption{}
  \label{fig:b}
\end{subfigure}

\begin{subfigure}{0.48\linewidth}
  \centering
  \includegraphics[width=\linewidth]{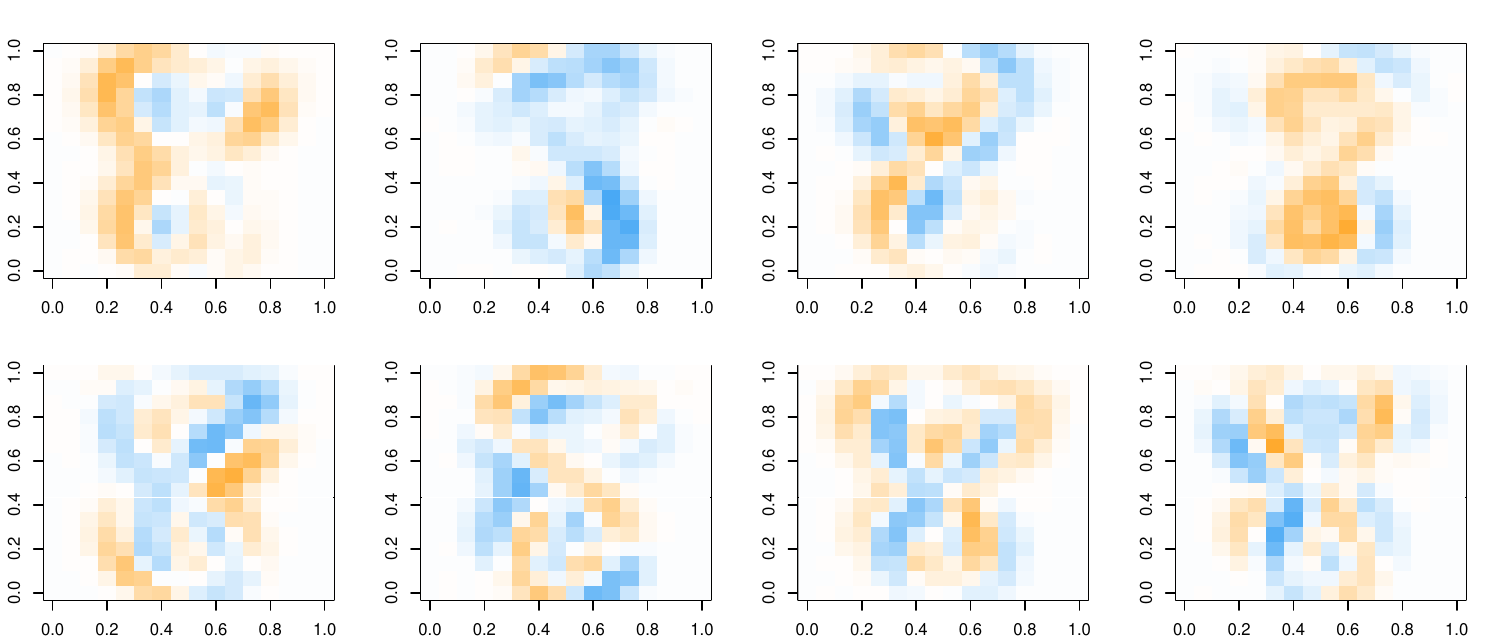}
  \caption{}
  \label{fig:c}
\end{subfigure}

\caption{Representation of the loading vectors for the clusters corresponding to digits $3$, $5$, and $8$. The loading vectors for digit $3$ in (a), for digit $5$  in (b), and for digit $8$ in (c).
{\color{myorange}$\blacksquare$} positive weights; {\color{myblue}$\blacksquare$} negative weights.}
\label{loading_vectors}
\end{figure}

%It is worth mentioning that the dimensions estimated by the HDDC method using the same parameters (\texttt{thresh}$=0.2$ and $q_{\text{max}}=20$) were $q_1=8$, $q_2=9$ and $q_3=8$, which are smaller than those obtained by tHDDC. If the parameter \texttt{thresh} is further refined or the value of $q_{\text{max}}$ is reduced, the HDDC method rarely selects the maximum allowed number of dimensions. In contrast, the tHDDC method may occasionally select the maximum dimension, but it consistently yields a substantially smaller classification error than HDDC.

When further analyzing this data set, it can be seen that the first observations trimmed, i.e., those with the smallest $D^H(x_i;\hat{\theta})$ values, correspond to the artificially generated outliers, and the subsequently trimmed ones are mostly from digits other than ``$3$'', ``$5$'', and ``$8$'', which we had introduced artificially. This is a desirable outcome, as the images and digits not conforming to the structure of the majority digits are so removed first. In Figure~\ref{trimmed_outliers}, we display $16$ examples of these artificially introduced outliers that are trimmed.

\begin{figure}[htb]
    \centering
    \includegraphics[width=0.4\textwidth]{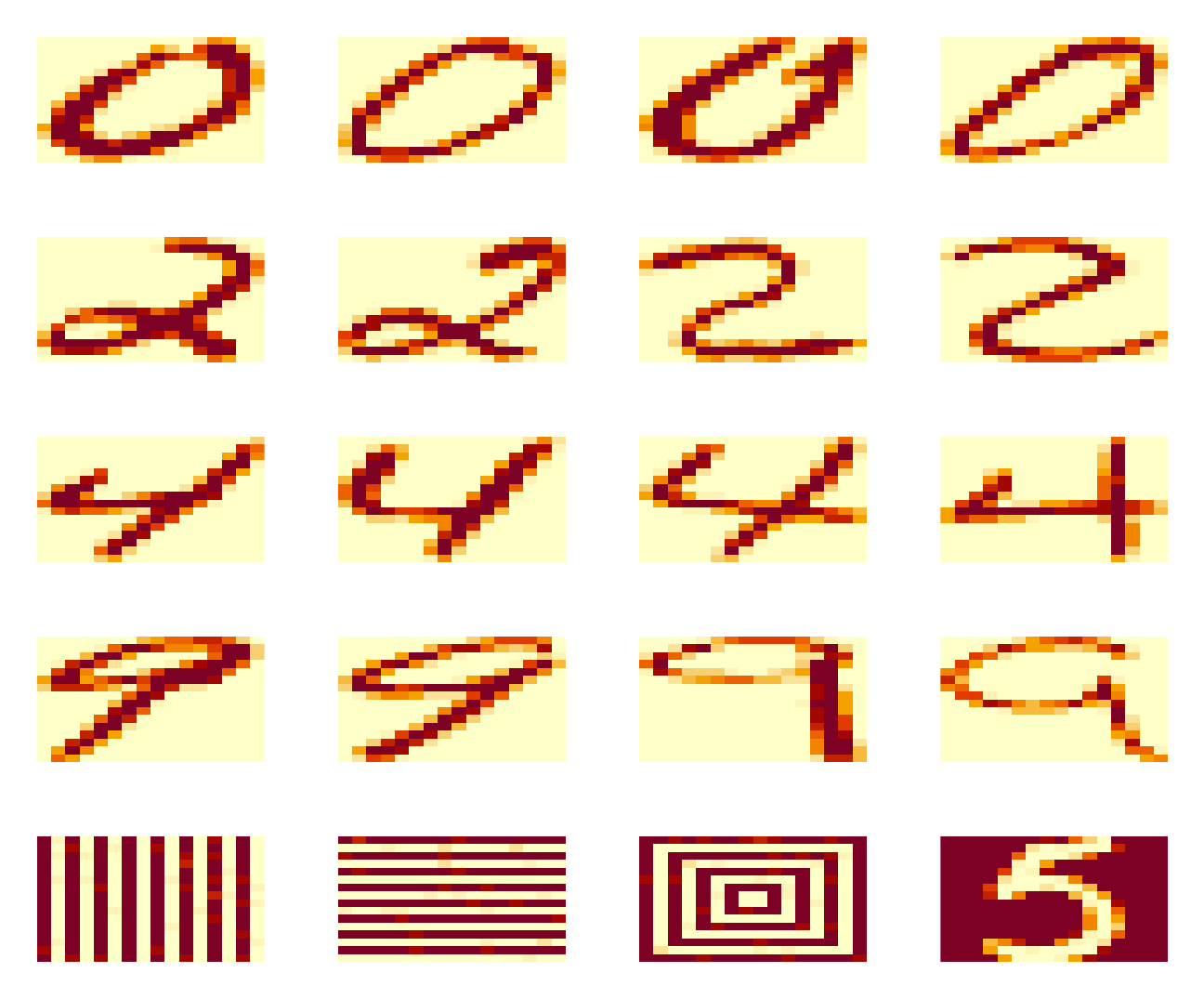}
    \caption{Examples of trimmed outliers artificially introduced into the dataset.}
    \label{trimmed_outliers}
\end{figure}

We can further analyze these trimmed observations with the tools discussed in Section~\ref{sec4.2}, namely the score distances, SD$_i$, and the orthogonal distances, OD$_i$. By jointly plotting these two quantities, together with the cut-offs introduced in Section~\ref{sec4.2}, we can obtain additional insight into these detected anomalies. Figures~\ref{sd_od_3}, \ref{sd_od_5}, and \ref{sd_od_8} display these diagnostic plots based on these SD$_i$ and OD$_i$ distances. Note that trimmed observations are also ``assigned to clusters'' in these plots by assigning each of them to the cluster for which it attains the largest value of $D^H(x;\theta)$.
%The trimming level was set to $20\%$, exceeding the $12\%$ of artificially introduced contamination, in order to accommodate the intrinsic variability of handwritten digits and to account for the presence of atypical shapes even among the supposedly clean digits.
Each observation appears in these plots labeled by the digit it represents, while a cross ``$\times$'' is used for outlying images that do not correspond to digits but rather to synthetic patterns. It could be seen that all these 20 images shown in Figure~\ref{trimmed_outliers} appear outside the considered cutoffs. Note that the second type of outliers (those not representing digits) appears in the diagnostic plot corresponding to digit $8$. % with ``$\times$'' symbols. The color coding is as follows: gray for digit $0$, purple for digit $2$, blue for digit $3$, yellow for digit $4$, green for digit $5$, red for digit $8$, brown for digit $9$, and black for the remaining outliers. The observations shown in Figure~\ref{highlight_observations} (numbered from $\mathbf{1}$ to $\mathbf{12}$) have also been identified.
To further illustrate the structure of these diagnostic plots, Figure~\ref{highlight_observations} displays $12$ particular observations for which their distances have been highlighted, using labels \textbf{1} to \textbf{12}. Note first that some of these 12 highlighted observations are clearly visually ambiguous, making them difficult to unambiguously identify as digits ``$3$'', ``$5$'', or ``$8$'', which further justifies the use of a trimming proportion exceeding the level of artificially added contamination.

\begin{figure}[htbp]
\centering

\begin{subfigure}{0.47\linewidth}
  \centering
  \includegraphics[width=0.8\linewidth]{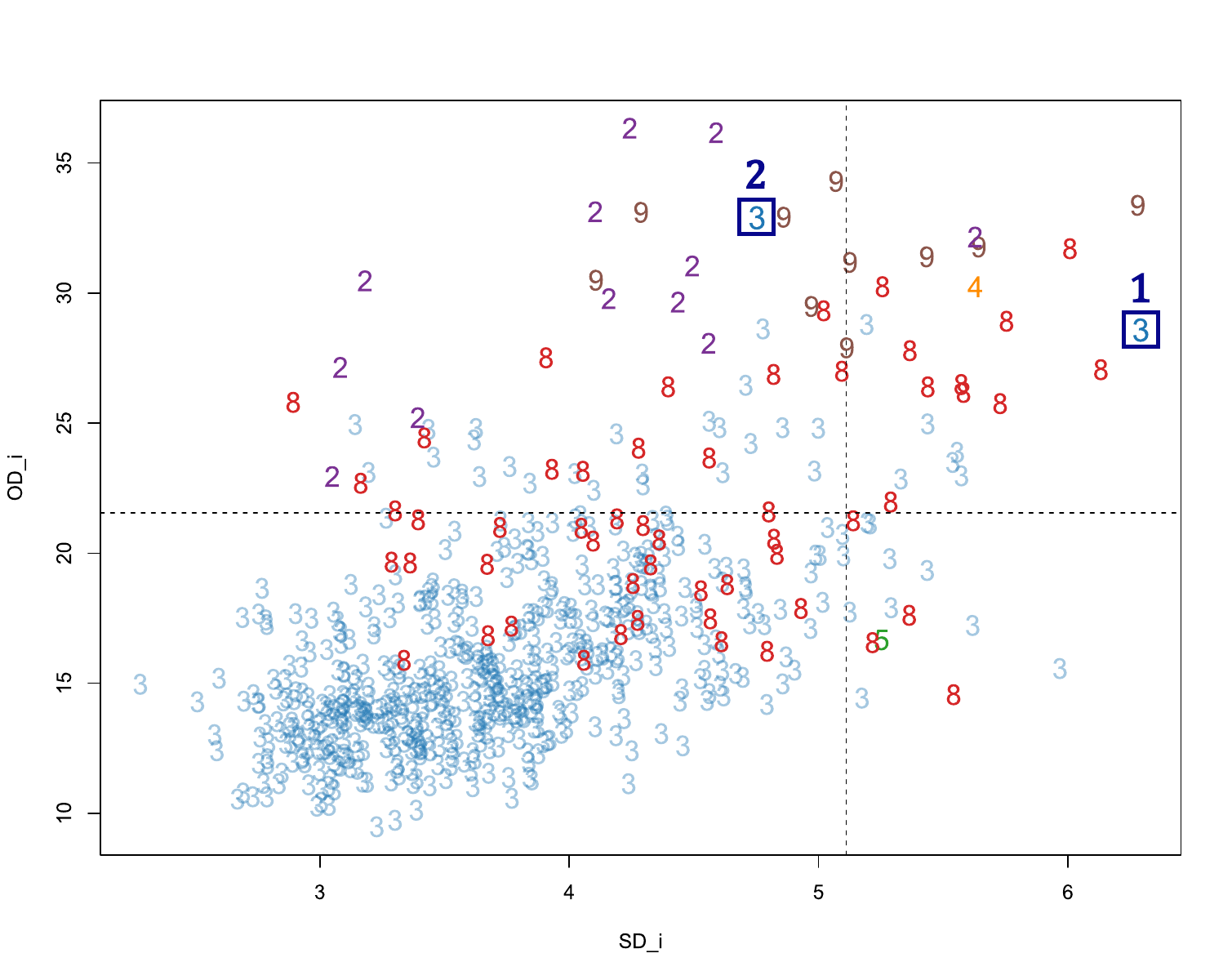}
  \caption{}
  \label{sd_od_3}
\end{subfigure}
\hfill
\begin{subfigure}{0.47\linewidth}
  \centering
  \includegraphics[width=0.8\linewidth]{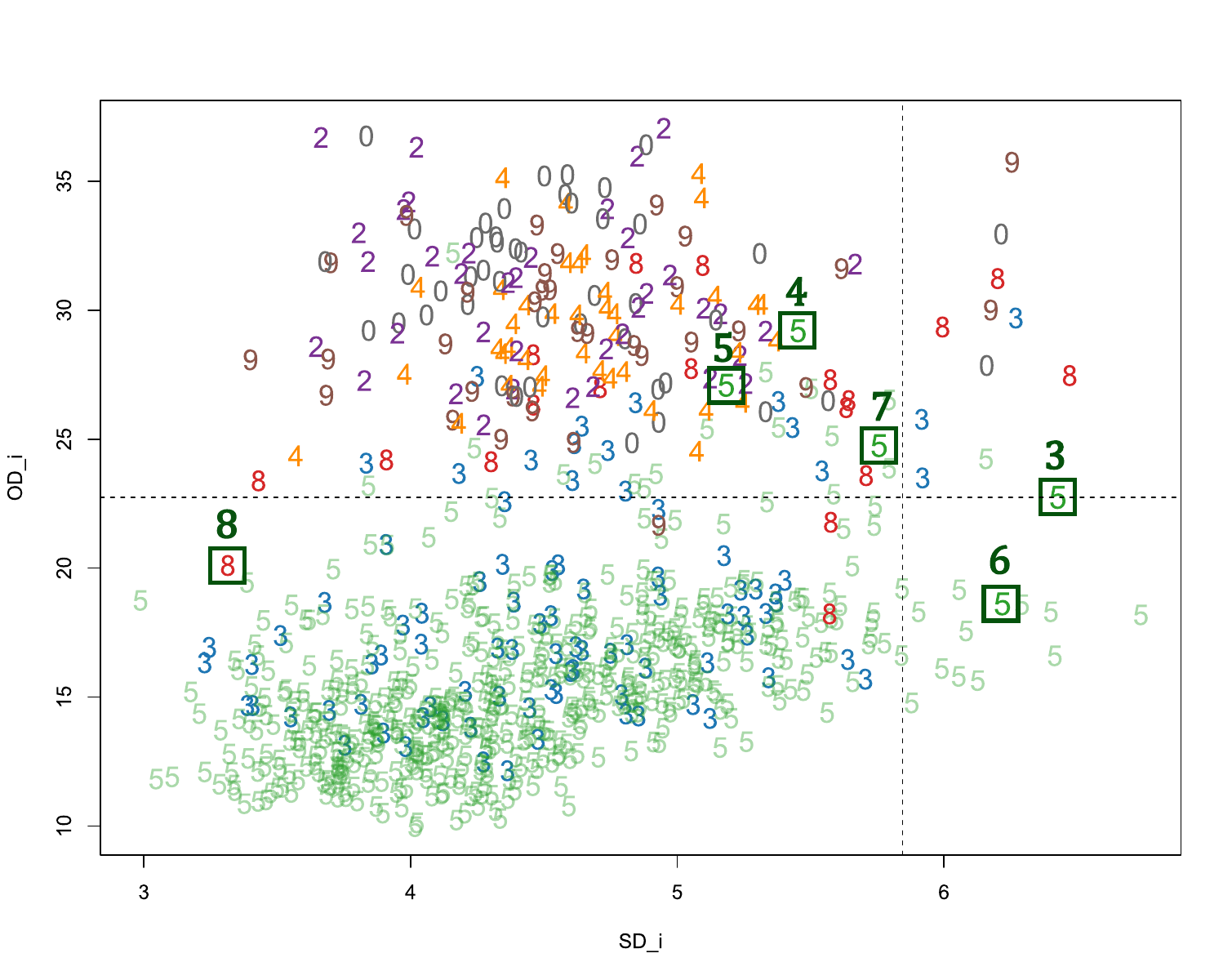}
  \caption{}
  \label{sd_od_5}
\end{subfigure}

\hfill

\begin{subfigure}{0.47\linewidth}
  \centering
  \includegraphics[width=0.8\linewidth]{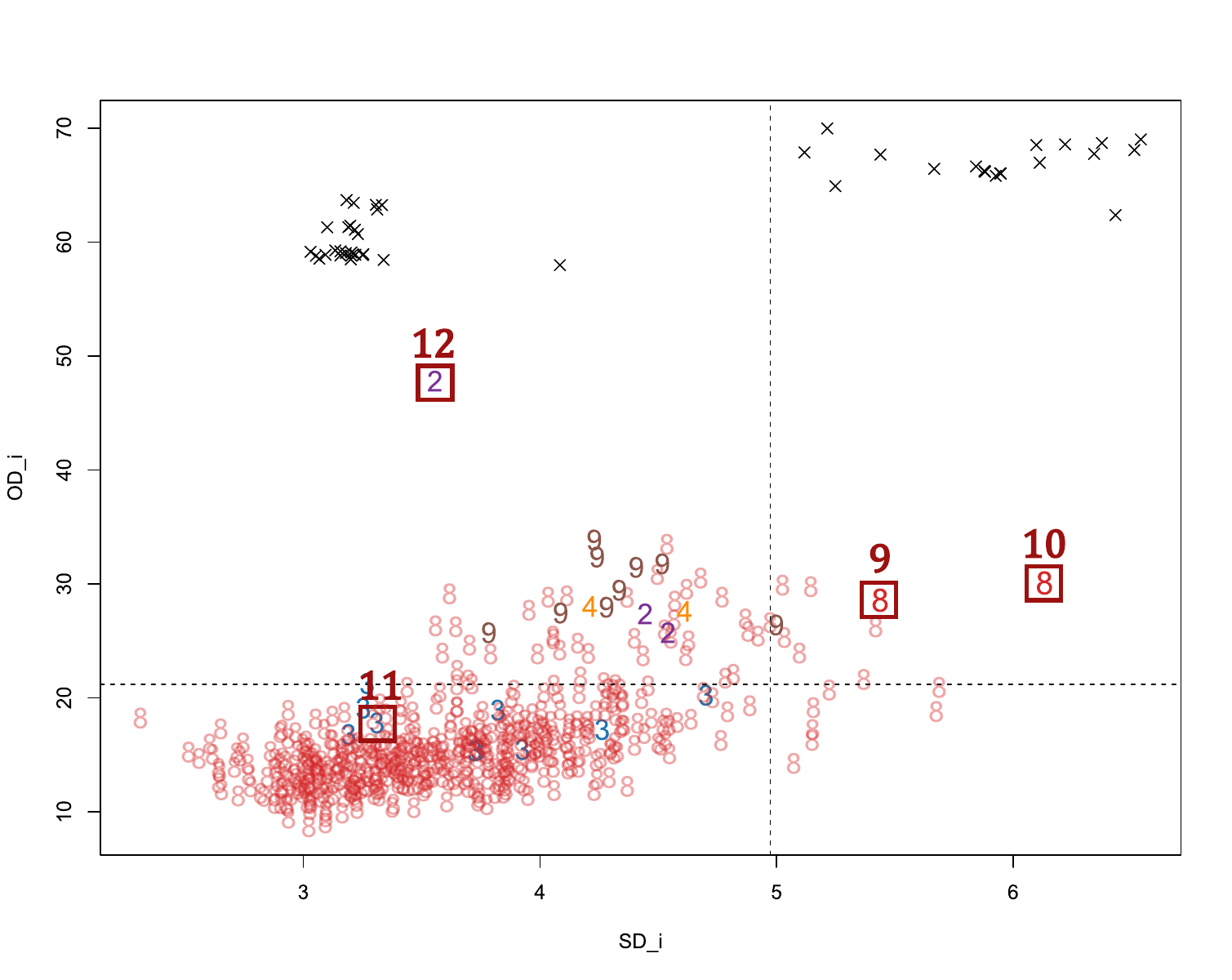}
  \caption{}
  \label{sd_od_8}
\end{subfigure}
\hfill
\begin{subfigure}{0.47\linewidth}
  \centering
  \includegraphics[width=0.8\linewidth]{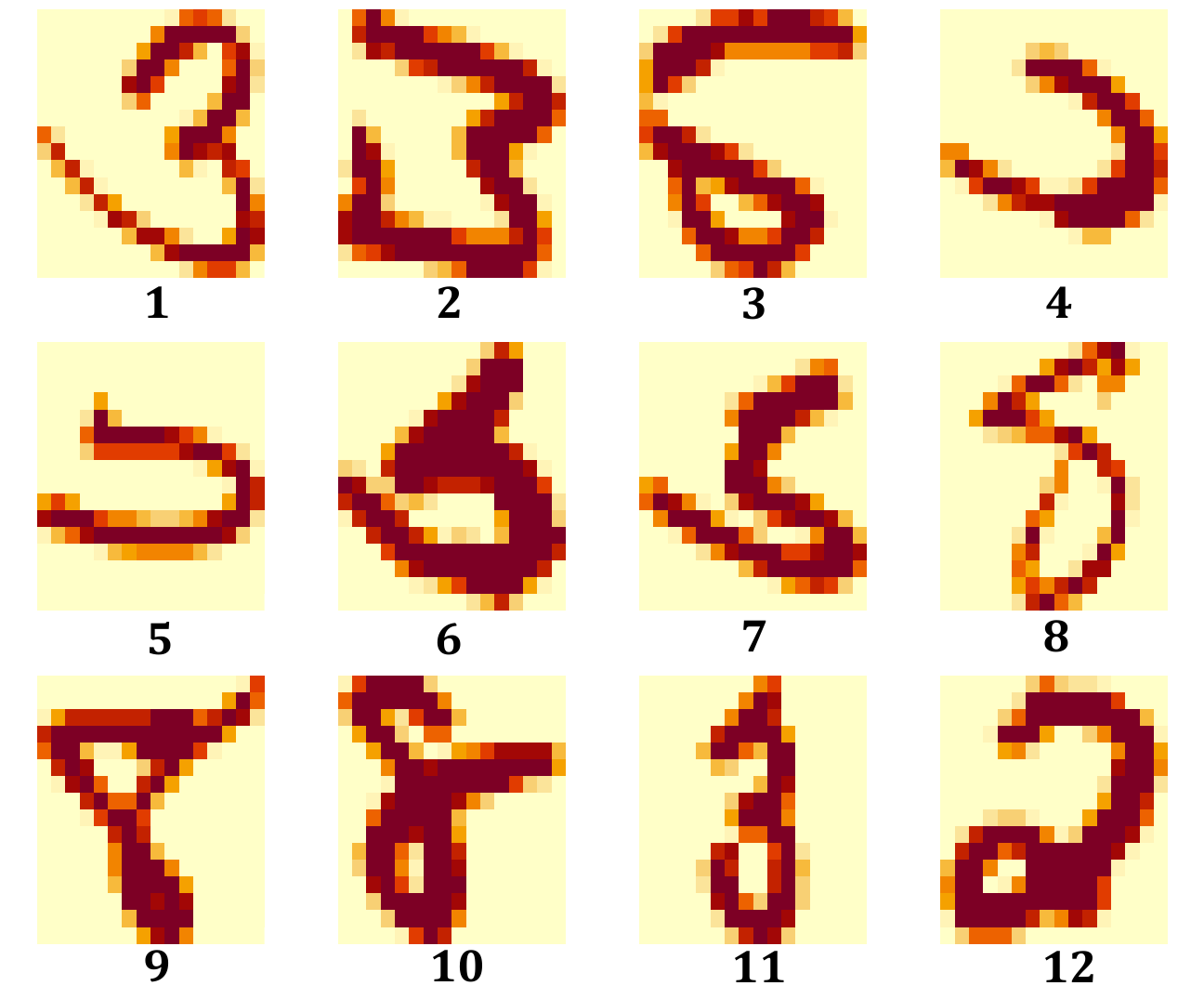}
  \caption{}
  \label{highlight_observations}
\end{subfigure}

\caption{Score distance versus orthogonal distance plots for the images assigned to cluster 3 in (a), cluster 5 in (b), and cluster 8 in (c). Some illustrative observations highlighted in these plots are shown in (d).}
\label{sd_od}
\end{figure}

Observations labeled as $\mathbf{1}$ and $\mathbf{2}$ appear within the SD$_i$ and OD$_i$ diagnostic plot for images assigned to the cluster of $3$'s (Figure~\ref{sd_od_3}),  
but they clearly exhibit visibly nonstandard representations. In the diagnostic plot corresponding to the cluster with the $5$ (Figure~\ref{sd_od_5}), the observations labeled as $\mathbf{3}$, $\mathbf{4}$, $\mathbf{5}$, $\mathbf{6}$, $\mathbf{7}$, and $\mathbf{8}$  appear. The first five correspond to digits originally labeled as ``5'' but exhibiting highly atypical shapes. They appear outside the cutoffs and clearly separated from the main cloud. On the other hand, the image labeled as $\mathbf{8}$, originally labeled as ``8'', lies near the main cloud of points. This observation was misclassified into the cluster including the 5's, as the incomplete handwritten form of the digit ``8'' resembles the digit ``5''. Finally, the last row in Figure~\ref{highlight_observations} contains images assigned to the cluster with the $8$'s (Figure~\ref{sd_od_8}). Images $\mathbf{9}$ and $\mathbf{10}$ lie outside both cut-offs, and exhibit clearly atypical shapes of the digit ``8''. Image $\mathbf{11}$, corresponds to a misclassified digit ``3'', whose closed shape makes it visually similar to the digit ``8''. Finally, image  $\mathbf{12}$ , originally labeled as digit ``2'', appears in the central region of the plot but far from the main cloud of points, reflecting the highly atypical handwritten form of this digit .

Finally, Figure~\ref{silhouette_plot}, displays the discriminant factors $\text{DF}(i)$ introduced in Section~\ref{sec4.2} for each observation, grouped according to the clusters produced by tHDDC, and with the trimmed observations represented in black color. It is worth noting that, within the group of trimmed observations, the last block of observations with discriminant factors exactly equal to $600$ has been obtained by truncating their $\text{DF}(i)$ values at this $600$ threshold. This block exactly corresponds to the artificially generated outliers, whose discriminant factor values were notably larger than 600.

\begin{figure}[htbp]
\centering

\begin{subfigure}{0.65\linewidth}
  \centering
  \includegraphics[width=0.8\linewidth]{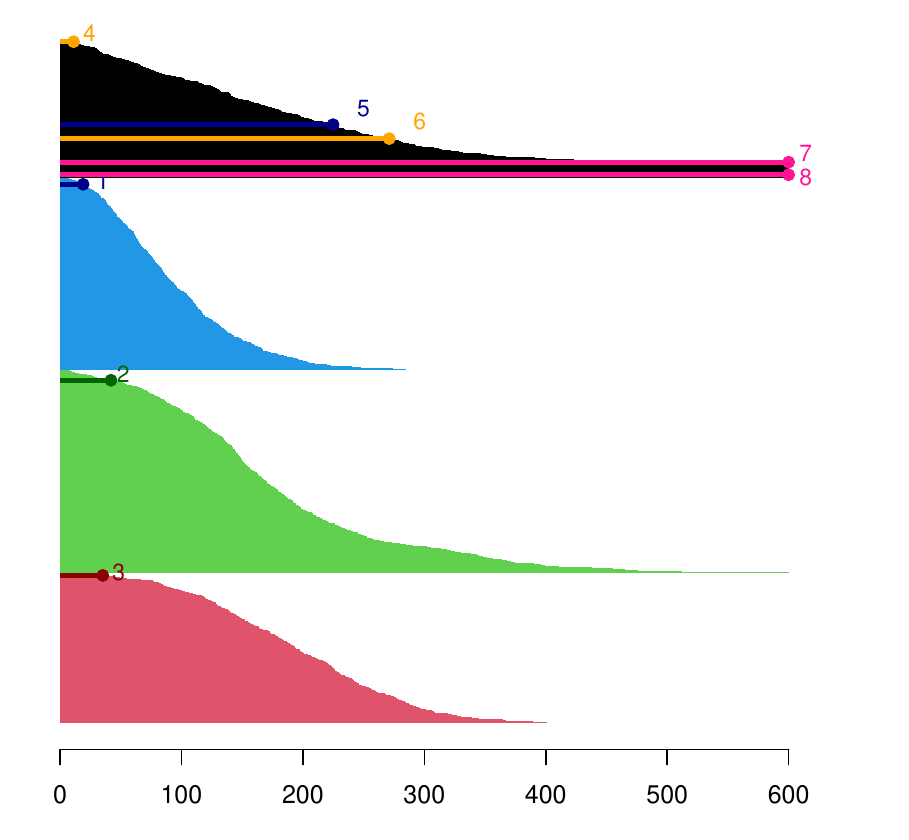}
  \caption{}
  \label{silhouette_plot}
\end{subfigure}
\hfill
\begin{subfigure}{0.25\linewidth} \centering \includegraphics[width=0.95\linewidth]{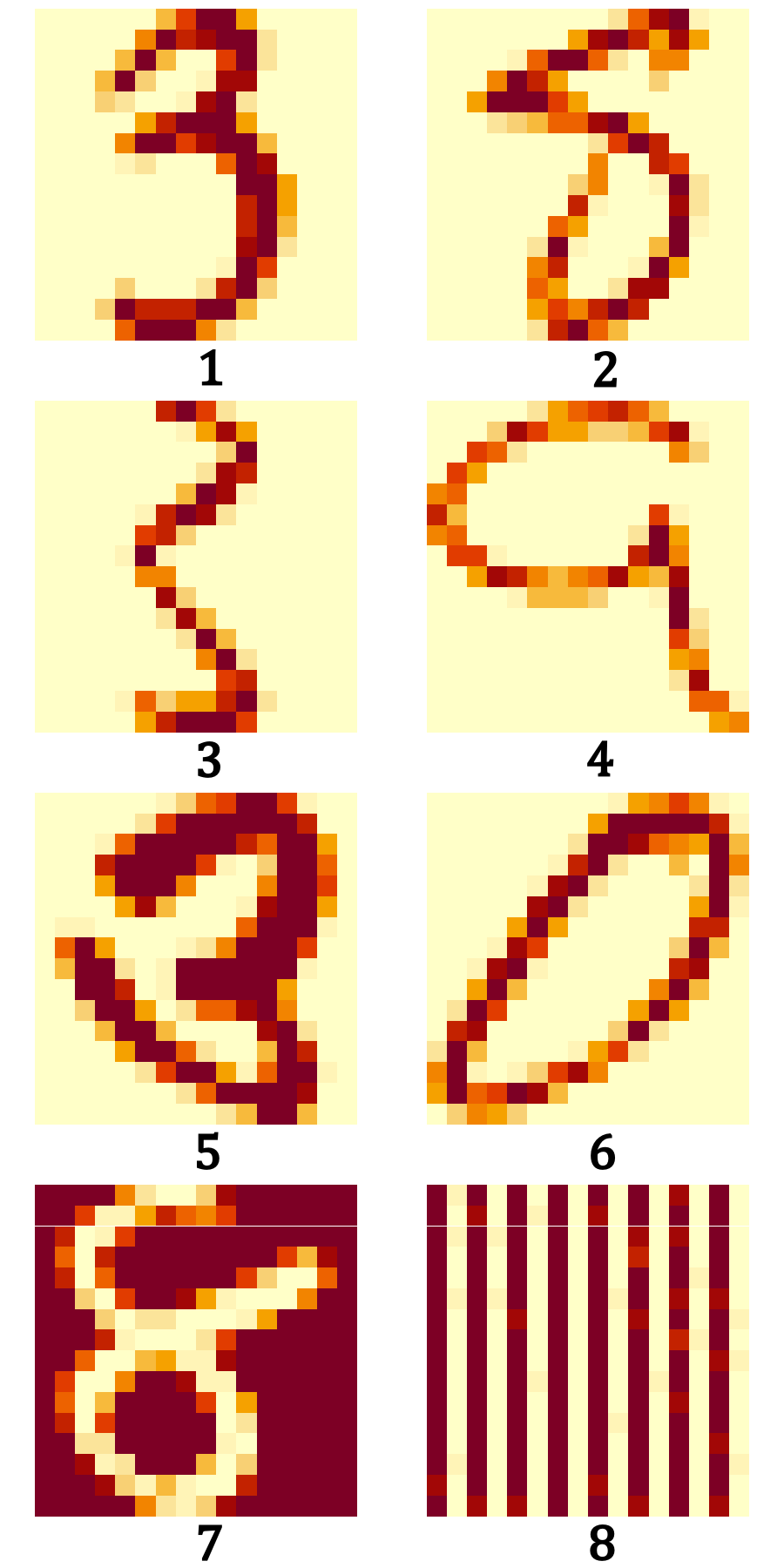} 
\vspace{0.1cm}
\caption{} \label{silhouette_observations} \end{subfigure}

\caption{Display of the $\text{DF}(i)$ values (horizontal axis) for tHDDC results. Colors indicate cluster assignments: black (trimmed), blue (mostly 3's), green (mostly 5's), and red (mostly 8's). Eight illustrative observations highlighted in the silhouette plot are shown in (b). %Note: The horizontal axis is displayed in reverse order (more negative values on the right, values closer to zero on the left) to resemble a standard silhouette plot, making it easier to identify doubtful assignments (values near zero) and strongly assigned or trimmed observations (more negative values).
}
\label{silueta}
\end{figure}

As discussed previously, observations with $\mathrm{DF}(i)$ values close to zero indicate doubtful assignment or trimming decisions. Therefore, clusters containing many such observations would suggest weak separation or a poorly determined clusters' structure. In Figure~\ref{silhouette_plot}, however, only a small number of observations lie close to zero, while most display clearly high and positive $\mathrm{DF}(i)$ values. This pattern provides additional support for the stability of the obtained clustering solution. In Figure~\ref{silhouette_plot}, we highlight $8$ observations that are displayed in Figure~\ref{silhouette_observations}, labeled $\mathbf{1}$ to $\mathbf{8}$. The first four correspond to doubtful assignment cases due to their discriminant factors being close to zero. For example, observation $\mathbf{2}$ visually resembles a digit ``5'', although it was originally labeled as digit ``8'', because that handwritten digit is incomplete. Observations $\mathbf{5}$ and $\mathbf{6}$ correspond to clearly trimmed observations: the first one is a digit ``3'' clearly written in a nonstandard way, and the second corresponds to a digit ``0''. %, which does not conform to the loading vectors defining the subspaces where the clusters of digits $3$, $5$, and $8$ reside.
Finally, images $\mathbf{7}$ and $\mathbf{8}$ correspond to artificially generated outliers exhibiting very large discriminant factor values (above 600).

\section{Conclusions and further directions}\label{sec_7}

A tHDDC methodology has been introduced in this work, providing a robust approach for clustering data in increasing dimensionality by unifying the parsimonious modeling of HDDC with the trimming and eigenvalue-constraint mechanisms underlying TCLUST. The proposed methodology aims to overcome the difficulties that TCLUST faces when dealing with datasets of even moderate dimensionality, while also offering a less simplistic alternative to the approach provided by RLG.

The development of a feasible trimmed EM-type algorithm based on properly modified C-steps, which are able to estimate the intrinsic dimensions, establishes tHDDC as a feasible approach when robustness is particularly required. The theoretical analysis presented (including existence and consistency results) provides additional support for its use. Additionally, tHDDC enables the development of graphical diagnostic tools that are useful for detecting anomalies in moderately high-dimensional datasets with heterogeneity arising from the presence of unknown clusters.

Our experimental results on both synthetic and real datasets demonstrate that tHDDC is capable of achieving better performance than standard TCLUST and RLG in complex scenarios involving moderately high-dimensional settings. Notably, this gain in robustness is achieved alongside a substantial reduction in computational cost compared to TCLUST as the dimension increases. Furthermore, the inclusion of trimming and constraints allows tHDDC to improve the original HDDC in environments with contaminating observations, where HDDC performance deteriorates due to its lack of robustness. 

There are several promising directions for future research. A functional version of the tHDDC algorithm is currently under development. Adapting the trimming and subspace projection techniques to a functional data context represents a natural and highly relevant extension of the proposed methodology. Additionally, another major line of research involves exploring alternative robustification strategies, specifically cellwise trimming. The traditional casewise trimming approach adopted in this paper discards an entire observation if it is identified as an outlier. However, in high-dimensional settings, an observation may be contaminated in only a few variables (cells), while the remaining data entries in the observation still contain valuable structural information \cite{10.1214/07-AOS588}. Incorporating cellwise trimming into the tHDDC framework would allow the algorithm to ignore only the corrupted cell entries rather than the entire observation, potentially leading to more efficient use of the data under the presence of cellwise contamination.

\appendix

\section{Appendix: Proofs of the existence and consistency results}

The existence and consistency proofs follow the general structure of \cite{garcia2008general}, and \cite{garcia2014robust}, adapted to the HDDC scatter parameterization used here. Differences are discussed below, while analogous parts are omitted and the reader is referred to the original proofs.

\renewcommand{\thesection}{\Alph{section}}
\numberwithin{lemma}{section}

\subsection{Proof of the existence result}\label{sec_A1}

Recall that the classification version of the problem can be expressed as the maximization, over 
$\theta \in \Theta_{c_1, c_2}$, of $\text{CL}(\theta; P)$ defined in (\ref{problema_hard}). 
Let us consider a sequence $\{\theta_n\}_{n=1}^\infty = \{ (\pi_1^n, \ldots, \pi_G^n, \mu_1^n, \ldots, \mu_G^n, \Sigma_1^n, \ldots, \Sigma_G^n) \}_{n=1}^\infty \subset \Theta_{c_1,c_2}$ 
 such that
\begin{equation}\label{cond_non_negative}
\lim_{n\to\infty} \text{CL}(\theta_n; P)=\sup_{\theta\in\Theta_{c_1,c_2}} \text{CL}(\theta;P)=M>-\infty.    
\end{equation}

The boundedness from below in (\ref{cond_non_negative}) follows by considering the parameter configuration defined by
$\pi_1 = 1$, $\mu_1 = 0 \in \mathbb{R}^p$, and $\Sigma_1 = \mathbf{I}_p$,
while setting all remaining weights $\pi_g = 0$ and choosing the remaining location and scatter parameters arbitrarily, as long as they satisfy the constraints imposed by $c_1$ and $c_2$ (for instance, with $\Sigma_2 =...=\Sigma_G= \mathbf{I}_p)$. 
This configuration can always be used for any pair of constants $c_1$ and $ c_2$, greater than or equal to 1, and therefore establishes a finite lower bound for $M$.

Since $[0,1]^G$ is a compact set, a subsequence of $\{\theta_n\}_{n=1}^\infty$ (denoted again by the same notation) can be extracted such that
$$
\pi_g^n \rightarrow \pi_g^0 \in [0,1] \text{ for } 1\leq g \leq G.
$$
For this subsequence, let us introduce the following notation:
\begin{align*}
\overline{M_g^n} &= \max_{l=1,\ldots,q_g} \lambda_{gl}^n, \qquad
\underline{M_g^n} = \min_{l=1,\ldots,q_g} \lambda_{gl}^n, \\[4pt]
\overline{m^n} &= \max_{g=1,\ldots,G} \lambda_g^n, \text{ and, }
\underline{m^n} = \min_{g=1,\ldots,G} \lambda_g^n .
\end{align*}
Our goal is to show that it is always possible to extract a further subsequence which converges to an optimal solution. In order to prove this, we first show that $\underline{m^n}$ cannot converge to $0$ and that if $\overline{M_g^n}$ diverge to $\infty$ then necessarily $\mathbb{E}_P[z_g(\cdot,\theta_n)]$ converges to $0$.

We consider the following upper bound:
\begin{multline}\label{cota_sup_cl}
\text{CL}(\theta_n;P) \le
\mathbb{E}_P\Bigg[
\sum_{g=1}^G z_g(\cdot,\theta_n)
\Bigg(
\log(\pi_g^n)
-\frac12\Big(
\norm{\operatorname{Pr}_{\mathcal{B}_g^n}(\cdot)-\mu_g^n}^2_{\mathcal{B}_g^n}
\\
+\frac{1}{\overline{m^n}}
\norm{\operatorname{Pr}_{\mathcal{B}_g^n}(\cdot)-\cdot}^2
+\sum_{j=1}^{q_g}\log(\lambda_{gj}^n)
+(p-q_g)\log(\underline{m^n})
+p\log(2\pi)
\Big)
\Bigg)
\Bigg],
\end{multline}
where $\mathcal{B}_g^n$ denotes the approximating subspace associated with the
$g$th component (for simplicity, the dependence of $\mathcal{B}_g^n$ on $\theta_n$ is omitted from the notation). Starting from \eqref{problema_hard} and using the standard expression for the multivariate normal density, we also obtain the bound
\begin{multline}\label{cota_sup_cl_2}
\text{CL}(\theta_n;P) \le
\mathbb{E}_P\Bigg[
\sum_{g=1}^G z_g(\cdot,\theta_n)
\Bigg(
\log(\pi_g^n)
-\frac12\Bigg(
\frac{1}{\overline{M_g^n}}\norm{\cdot-\mu_g^n}^2
\\
+\sum_{j=1}^{q_g}\log(\lambda_{gj}^n)
+(p-q_g)\log(\underline{m^n})
+p\log(2\pi)
\Bigg)
\Bigg)
\Bigg].
\end{multline}
Moreover, it holds that
\begin{equation}\label{posit}
\mathbb{E}_P\!\left[
\sum_{g=1}^G
z_g(\cdot,\theta_n)
\norm{\operatorname{Pr}_{\mathcal{B}_g^n}(\cdot)-\mu_g^n}^2_{\mathcal{B}_g^n}
\right] \ge 0 ,
\end{equation}
and that there exists a constant $h>0$ such that
\begin{equation}\label{bound_h}
\mathbb{E}_P\!\left[
\sum_{g=1}^G
z_g(\cdot,\theta_n)
\norm{\operatorname{Pr}_{\mathcal{B}_g^n}(\cdot)-\cdot}^2
\right] \ge h .
\end{equation}
The existence of such a constant $h$ is proven in part (i) of  Lemma~\ref{lema_h}.   

Combining inequality~\eqref{cota_sup_cl} with
\eqref{posit} and \eqref{bound_h}, and noting that $\log(\pi_g^n)\le 0$, we obtain
\begin{multline}\label{cota_sup_cl_3}
\text{CL}(\theta_n;P) \le
-\frac12\Bigg(
\frac{h}{\overline{m^n}}
+\sum_{g=1}^G \bigg[
\mathbb{E}_P[z_g(\cdot,\theta_n)]
\sum_{j=1}^{q_g}\log(\lambda_{gj}^n)\bigg]
\\
+(1-\alpha)(Gp-Q)\log(\underline{m^n})
+(1-\alpha)p\log(2\pi)
\Bigg),
\end{multline}
w{here $Q=\sum_{g=1}^G q_g$.

Let us first suppose that $\underline{m^n} \to 0$. Then, the bound~\eqref{cota_sup_cl_3} and the eigenvalue constraint with constant $c_2<\infty$ in (\ref{ER}) would imply that $\overline{m^n} \to 0$, and therefore $\text{CL}(\theta_n;P) \to -\infty$, contradicting (\ref{cond_non_negative}). Moreover, since $\underline{m^n} \le c_2 \overline{m^n}$, the term $1/\overline{m^n}$ asymptotically dominates $\log(\underline{m^n})$.
On the other hand, suppose that for some component $g$ it holds that $\limsup_{n \to \infty} \overline{M_g^n} = \infty$. Then, it would follow that $\mathbb{E}_P[z_g(\cdot,\theta_n)] \to 0$, since otherwise $\text{CL}(\theta_n;P) \to -\infty$, again contradicting~\eqref{cond_non_negative}.

From inequality (\ref{cota_sup_cl_2}), it further follows that, for components $g$ whose eigenvalues remain bounded while the corresponding means diverge ($\limsup_{n\to\infty}\|\mu_g^n\|=\infty$), one necessarily has $\mathbb{E}_P[z_g(\cdot,\theta_n)] \to 0$, since otherwise it would follow that $\text{CL}(\theta_n;P)\to -\infty$. This follows by considering a ball
$B(0,R)$ such that $P[B(0,R)]>\alpha$ and noting that the set
$A_n=\{x:\sum_{g=1}^G z_g(x,\theta_n)=1\}$ must intersect this ball $B(0,R)$ with
positive probability.
%\la{[QUITAR TODO ESTO: Therefore, eigenvalues must be bounded away from zero, and for any component with unbounded eigenvalues or means, the corresponding assignment probabilitiesmust vanish asymptotically. Consequently, at least one of the components must have bounded eigenvalues and means, because otherwise \sum_{g=1}^G \mathbb{E}_P[z_g(\cdot,\theta_n)] \to 0$, contradicting the fact that this sum equals $1 - \alpha$. For components with $\mathbb{E}_P[z_g(\cdot,\theta_n)]\to 0$, the optimal value for the weights can be chosen as $\pi_g=0$, and the sequences of means and eigenvalues associated with such components may be chosen arbitrarily without affecting the limit value of $\text{CL}(\theta_n;P)\to -\infty$ [LO DE $-\infty$ ESTA MAL, NO?], and, thus, a convergent subsequence attaining the optimum can always be extracted.]}

Take also into account that at least one of the sequences $\{\|\mu_g^n\|\}_n$, for $g=1,...,G$,} must be bounded, because otherwise $\sum_{g=1}^G \mathbb{E}_P[z_g(\cdot,\theta_n)] \to 0$, contradicting that $P[x:\sum_{g=1}^G z_g(x,\theta_n)=1] = 1 - \alpha$. The proof concludes by recalling that for the other components with diverging means (but bounded eigenvalues), the limiting weights should verify that $\pi_g^0 = 0$, because otherwise the limiting value of the target function $\text{CL}(\theta_n;P)$ can be improved by redefining the associated components' weights. For such components with $\pi_g^0 = 0$ and $\mathbb{E}_P[z_g(\cdot,\theta_n)] \to 0$, the sequences of means and eigenvalues may be replaced arbitrarily by convergent ones without affecting the limit of $\text{CL}(\theta_n;P)$. Consequently, a convergent subsequence attaining the optimum can always be extracted for all components.

Finally, an analogous argument also applies to the trimmed mixture likelihood.
An initial bound as in (\ref{cond_non_negative}) trivially holds. Then, a similar reasoning to that used for the classification likelihood case can be applied by using the bound
\begin{multline}\label{acot_CL_ML}    \text{ML}(\theta;P)=\mathbb{E}_P\left[z(\cdot,\theta)\log D^M(\cdot,\theta)\right]\leq \mathbb{E}_P\left[z(\cdot,\theta)\log\left(G \max_{g=1,\ldots,G}\left\{\exp D_g(\cdot,\theta)\right\}\right)\right]\\
    \leq \log G(1-\alpha)+\mathbb{E}_P\left[\sum_{g=1}^G z_g(\cdot,\theta)D_g(\cdot;\theta)\right]=(1-\alpha)\log G+\text{CL}(\theta;P),
\end{multline}
where the same definitions for the $z(\cdot,\theta)$ and $z_g(\cdot,\theta)$ functions are used (this result would be the analogous in the HDDC framework of Lemma 1 in \cite{garcia2014robust}).

\subsection{Proof of the consistency result}\label{sec_A2}

The consistency proof combines arguments similar to those seen in Section \ref{sec_A1} combined with empirical process  techniques \citep{van1996weak}. Since the arguments closely follow those in \cite{garcia2008general} and \cite{garcia2014robust}, we only provide a sketch of them and refer to these works for further details.

From the sequence of empirical estimates $\{\theta_n\}_{n=1}^\infty$, as in Section \ref{sec_A1}, we can extract a subsequence (denoted again by the same notation)  such that  $\pi_g^n \rightarrow \pi_g^0 \in [0,1]$, for all $1 \leq g \leq G$, $P$-a.e.. We first see that there exists a constant $M'$ such that   \begin{equation}\label{cota_inf_cl}
        \text{CL}(\theta_n; P_n)\geq M'>-\infty\quad P\textup{-a.e. for}\:\:n\geq n_0.
    \end{equation}
Due to the tightness of the  sequence $\{P_n\}_{n=1}^{\infty}$ of empirical measures, we can find a constant $R>0$ such that $P_n(B(0,R))> 1-\alpha.$ Therefore, we have    \begin{equation}\label{boun_empiric}
        \text{CL}(\theta_n;P_n)\geq\int_{B(0,R)}\left(-\dfrac{p}{2}\log(2\pi)-\dfrac{1}{2}\norm{x}^2\right)dP_n(x)
    \end{equation}
    The bound (\ref{boun_empiric}) is derived by substituting $P$ by $P_n$  in \eqref{cota_sup_cl_2} and taking into account that $\theta_n\in\Theta_{c_1,c_2}$ should improve upon the particular parameter configuration given by   $\pi_1=1$, $\pi_2=...=\pi_G=0$, together with $\mu_g=0$ and $\Sigma_g=\mathbf{I}_p$ for $g=1,...,G$. The bound then follows now by applying the Strong Law of Large Numbers..

    We can also consider  \eqref{cota_sup_cl}, but now substituting $P$ by $P_n$.  %, so we have:
    %\begin{multline*}
    %CL(\theta_n;P_n)
    %\leq \mathbb{E}_{P_n}\left[\sum_{g=1}^G z_g(\cdot)\left(\log(\pi_g^n)-\dfrac{1}{2}\left(\norm{Pr_g(\cdot)-\mu_g^n}^2_{B_g}+\dfrac{1}{\overline{m_n}}\norm{Pr_g(\cdot)-\cdot\:}^2 \right.\right.\right.\\
    %\left.\left.\left.+\sum_{j=1}^{q_g}\log(\underline{M_n})+(p-q_g)\log(\underline{m_n})+p\log(2\pi)\right)\right)\right],
    %\end{multline*}
     The analogue of \eqref{bound_h}, which is also proven in part (ii) of Lemma~\ref{lema_h}, guarantees the existence of a constant $h' > 0$ such that   \begin{equation}\label{boun_h'}\mathbb{E}_{P_n}\left[ \sum_{g=1}^G z_g\left(\cdot;\theta_n\right)\norm{\text{Pr}_{\mathcal{B}_g^n}(\cdot)-\cdot}^2\right]\geq h'.
    \end{equation}    
The proof now follows similar steps to those presented in Section~\ref{sec_A1}, by showing that from~\eqref{cota_sup_cl_3}, but substituting $P$ by $P_n$, and observing that if $\underline{m^n} \to 0$, then $\text{CL}(\theta_n;P_n) \to -\infty$ $P$-a.e., yielding a contradiction with~\eqref{cota_inf_cl}. Additionally, for any component $g$ such that $\limsup_{n \to \infty} \overline{M_g^n} = \infty$, it must hold that $\mathbb{E}_{P_n}[z_g(\cdot,\theta_n)] \to 0$, because otherwise $\text{CL}(\theta_n;P_n) \to -\infty$, so contradicting~\eqref{cota_inf_cl}.

%Working again with inequality~\eqref{cota_sup_cl_2}, but substituting $P$ by $P_n$, we obtain:
%\begin{itemize}
%\item For components $g$ with bounded eigenvalues but diverging means,
%$\limsup_{n\to\infty}\|\mu_g^n\|=\infty$, it necessarily holds that
%$\mathbb{E}_{P_n}[z_g(\cdot,\theta_n)]\to 0$. Otherwise,
%$\text{CL}(\theta_n;P_n)\to -\infty$. This follows by considering a ball
%$B(0,R)$ such that $P_n[B(0,R)]>\alpha$ and noting that the trimming set
%$A_n=\{x:\sum_{g=1}^G z_g(x,\theta_n)=1\}$ must intersect $B(0,R)$ with
%positive probability.
%\end{itemize}

The next step in the proof of this consistency result has to do with the empirical centers $\mu_g^n$. %can be chosen so that their norms are uniformly bounded with probability 1 for those groups with \la{$\pi_g^0 > 0$}. 
As in Section~\ref{sec_A1}, working with inequality~\eqref{cota_sup_cl_2} but again substituting $P$ by $P_n$, we can see that for a components $g$ with bounded eigenvalues but diverging means, it necessarily holds that $\mathbb{E}_{P_n}[z_g(\cdot,\theta_n)] \to 0$, because otherwise, $\text{CL}(\theta_n;P_n) \to -\infty$. This follows by considering again a ball $B(0,R)$ such that $P_n[B(0,R)] > \alpha$ uniformly in $n$, for $n$ large enough, and noting that the set $A_n = \{x : \sum_{g=1}^G z_g(x,\theta_n) = 1 \}$ satisfy that $P_n[B(0,R)\cap A_n]>0$.

Consequently, arguments, analogous to those employed in in Section~\ref{sec_A1} and \cite{garcia2008general}  shows that there exists a compact set $K$ such that $\theta_n \in K$ for all $n \geq n_0$ with probability $1$. In fact, for components whose eigenvalues or location parameters diverge to infinity, the corresponding empirical parameters can be modified without affecting the value of the limit objective function. %For instance, eigenvalues diverging to infinity may be replaced by bounded ones, such as the largest eigenvalue of a component $g$  for which $\mathbb{E}_{P_n}[z_g(\cdot,\theta_n)]$ converges to a strictly positive limit \la{[No claro]}. Similarly, the associated location parameters may be chosen arbitrarily, for instance equal to zero, since these components receive asymptotically vanishing assignment probabilities and therefore do not contribute to the objective function. A similar \la{procedure} for handling parameter divergence was employed in \cite{garcia2008general}.
Note also that at least one component must have bounded eigenvalues and means.%, because otherwise
%$$
%\sum_{g=1}^G \mathbb{E}_{P_n}[z_g(\cdot,\theta_n)] \to 0,
%$$
%which would contradict the fact that this sum equals $1-\alpha$.
%Note also that for components satisfying
%$\mathbb{E}_{P_n}[z_g(\cdot,\theta_n)] \to 0$, the optimal value for the weights can be obtained as  $\pi_g\la{^n} = 0$.

Let $z_g^\ast(x;\theta)=I\{ x: D^H(x;\theta)=D_g(x;\theta)\}$, i.e. all the points in $\mathbb{R}^p$ are assigned to some class through the functions $z_g^\ast$. The proof concludes by noting that the objective function in the empirical case can be rewritten as
\begin{equation*}
    \text{CL}(\theta;P_n)=\int_{\lbrace x\::\:D^H(x;\theta)\geq R(\theta;P_n)\rbrace}\left[\sum_{g=1}^Gz_{g}^\ast(x;\theta)\log(D_g(x,\theta)\la{)}\right]dP_n(x).
\end{equation*}
In the previous expression, we can asymptotically replace $R(\theta;P_n)$ by $R(\theta;P)$, because
\begin{equation*}\label{conv_R}
        \sup_{\theta\in K}\vert R(\theta,P_n)-R(\theta,P)\vert=o_P(1),
\end{equation*}  
by using Lemma \ref{converg_emp} and the fact that the integrand can be bounded from above from some constants uniformly for $\theta$ in the compact set $K$.
Finally, the Glivenko--Cantelli property of the class $\mathcal{H}$ in \eqref{class_H}, together with Theorem~3.2.3 in \cite{van1996weak}, yields the desired consistency, as done in \cite{garcia2008general}.

So far, we have considered the classification trimmed likelihood, but the same scheme of proof can be analogously applied to trimmed mixture likelihoods, as in \cite{garcia2014robust}. In this case, we consider the class of functions
$$
\mathcal{H}' := \left\{ I_{[u,\infty)}\!\left(D^M(\cdot,\theta)\right)\,\log\!\left(D^M(\cdot,\theta)\right) : \theta \in K,\; u \ge 0 \right\},
$$
which is also a Glivenko--Cantelli class. The desired result then follows by the same arguments as in the trimmed classification likelihood case.
\subsection{Auxiliary results}
    \begin{lemma}\label{lema_h}
The following results hold:  
\begin{itemize}
    \item[(i)] If $P$ satisfies condition~\eqref{PR}, then there exists a constant $h>0$ such that inequality~\eqref{bound_h} holds.
    \item[(ii)] If $P$ is an absolutely continuous distribution satisfying condition~\eqref{PR}, then there exists a constant $h'>0$ such that~\eqref{boun_h'} holds with probability $1$ for $n$ large enough.
\end{itemize}
\end{lemma}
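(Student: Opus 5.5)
The plan is to argue by contradiction, using weak convergence and compactness of the set of affine subspaces of fixed dimension. For (i), suppose no such $h>0$ exists. Then there is a sequence $\theta_n\in\Theta_{c_1,c_2}$ with
\[
\mathbb{E}_P\Big[\sum_{g=1}^G z_g(\cdot,\theta_n)\,\norm{\operatorname{Pr}_{\mathcal{B}_g^n}(\cdot)-\cdot}^2\Big]\to 0 .
\]
The quantity above depends on $\theta_n$ only through the affine subspaces $\mathcal{B}_g^n$ and the untrimmed sets $A_g^n=\{x: z_g(x,\theta_n)=1\}$. It is therefore bounded below by
\[
\inf\Big\{\mathbb{E}_P\big[\tau(\cdot)\,\min_{g}\norm{\operatorname{Pr}_{\mathcal{B}_g}(\cdot)-\cdot}^2\big]\Big\},
\]
where the infimum runs over all affine subspaces $\mathcal{B}_g$ of dimension $q_g$ and all $\tau:\mathbb{R}^p\to[0,1]$ with $\mathbb{E}_P[\tau]=1-\alpha$. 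Here we use that each point is assigned to one subspace, and at most the nearest one. It thus suffices to show that this relaxed infimum, which is the population RLG trimmed objective, is strictly positive.

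To prove this, I would parametrize each $\mathcal{B}_g$ by a point $b_g$ (its foot from the origin) and an orthonormal $q_g$-frame, which lives in a compact Stiefel manifold. I would then take a minimizing sequence $(\mathcal{B}_g^n,\tau_n)$ with objective tending to $0$. Trimming functions $\tau_n$ with $0\le\tau_n\le 1$ form a weak-$*$ compact set in $L^\infty(P)$, so a subsequence converges to some $\tau$ with $\mathbb{E}_P[\tau]=1-\alpha$. The frames converge along a further subsequence.

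The delicate point, which I expect to be the main obstacle, is subspaces escaping to infinity ($\|b_g^n\|\to\infty$). I handle it as in Section~\ref{sec_A1}. Fix $R$ with $P[B(0,R)]>1-\alpha/2$, say. On $B(0,R)$ the distance to a subspace with $\|b_g^n\|\to\infty$ tends to infinity uniformly. Hence the mass that $\tau_n$ places near such subspaces inside the ball must vanish. Outside the ball the available mass is small, but it can still be positive. So I would also allow the trimming mass to be taken from the complement, and argue that the escaping components can only carry mass lying at infinity. Tightness of $P$ then forces that mass to be negligible, so those components can be dropped or replaced by any bounded subspace.

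Once all remaining $b_g^n$ are bounded, a subsequence converges to limiting subspaces $\mathcal{B}_g$. The map $x\mapsto\min_g\norm{\operatorname{Pr}_{\mathcal{B}_g^n}(x)-x}^2$ converges to its limit locally uniformly, and all these maps are nonnegative. Combining this with the weak-$*$ convergence of $\tau_n$ (applied to truncations $\min\{\cdot,K\}$ on balls) gives lower semicontinuity, and hence
\[
\mathbb{E}_P\big[\tau\min_g\norm{\operatorname{Pr}_{\mathcal{B}_g}(\cdot)-\cdot}^2\big]=0 .
\]
This means $P$ restricted to $\{\tau>0\}$ is concentrated on $\bigcup_g\mathcal{B}_g$, a set of probability at least $1-\alpha$. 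That contradicts (\ref{PR}). A bathtub-type argument shows $\tau$ can be taken to be an indicator if one wants to match the exact formulation of (\ref{PR}).

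For (ii), I would apply the same relaxed functional to $P_n$. Let $h_0>0$ be the population infimum from (i), computed with a slightly larger untrimmed mass requirement if needed. The aim is to show that the empirical infimum is at least $h_0/2$ eventually, almost surely. The function class $\{x\mapsto\min\{\min_g\norm{\operatorname{Pr}_{\mathcal{B}_g}(x)-x}^2,K\}\}$, indexed by bounded subspaces, is Glivenko--Cantelli, since it is built from finitely many polynomial functions of bounded parameters. The trimmed functional is a continuous functional of the distribution of these distances (an integral of a lower quantile). Together these give uniform almost-sure convergence of the empirical trimmed objective to the population one over subspaces meeting a fixed ball. Absolute continuity of $P$ makes the quantile map continuous, so replacing the empirical trimming level by the population one costs nothing asymptotically. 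Subspaces far from the origin are excluded by the same ball argument, using $P_n(B(0,R))\to P(B(0,R))$ almost surely.
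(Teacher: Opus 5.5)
Your proof is correct and follows the paper's skeleton: contradiction, reduction to an RLG-type trimmed orthogonal-distance functional, compactness of the orthonormal frames, a ball/tightness argument for components with $\|b_g^n\|\to\infty$, and a final contradiction with (\ref{PR}). The compactness device is different, though.

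\textbf{Part (i).} The paper keeps hard trimming through strips $S(\mathcal{B}_g^n,r_n)$. It must therefore show by hand that $J\neq\emptyset$, that the radii $r_n$ stay bounded, and that divergent subspaces capture no strip mass, before it can pass strip indicators to the limit. You instead relax to fractional trimming functions $0\le\tau\le 1$ with $\mathbb{E}_P[\tau]=1-\alpha$ and use weak-$*$ sequential compactness in $L^\infty(P)$, which is legitimate because $L^1(P)$ is separable. This buys several things:
\begin{itemize}
\item the relaxed infimum is a valid lower bound with no tie-breaking issues when $P$ charges the strip boundaries;
\item testing against $f\equiv 1$ shows that no untrimmed mass is lost in the limit, so radii never need to be controlled;
\item the lower-semicontinuity step is clean, because on $B(0,R)$ one eventually has $\min_g=\min_{g\in J}$ with uniformly convergent distances.
\end{itemize}
In particular, your somewhat muddled ``take trimming mass from the complement'' step can simply be replaced by integrating over balls. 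The paper's route, in exchange, is more elementary and geometric and mirrors the RLG existence proof.

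Two small fixes are needed. First, state the case $J=\emptyset$ explicitly: there the objective is at least $(\min_g\|b_g^n\|-R)^2\,\mathbb{E}_P[\tau_n I_{B(0,R)}]\to\infty$. Second, choose $R$ with $P[B(0,R)^c]<(1-\alpha)/2$. Your choice $P[B(0,R)]>1-\alpha/2$ only leaves positive $\tau_n$-mass in the ball when $\alpha<2/3$; the paper's bound $1-3\alpha/2$ has the same restriction.

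\textbf{Part (ii).} The paper reruns the compactness argument for $P_n$ and transfers back via Glivenko--Cantelli. You instead transfer the population bound through uniform convergence of the trimmed functional. This can be made fully explicit. For the distances truncated at $K$, the bathtub principle gives $T(Q)=\int_0^K(1-\alpha-F_Q(t))^+\,dt$, so $|T(P_n)-T(P)|\le K\sup_t|F_{P_n}(t)-F_P(t)|$. The relevant sets are unions of $G$ sublevel sets of quadratic polynomials, hence a VC class over \emph{all} affine subspaces. So neither your ball restriction nor continuity of quantiles is needed.

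Your hedge about a ``slightly larger untrimmed mass requirement'' points the wrong way, since restricting to a ball lowers the available untrimmed mass. It is harmless here: under absolute continuity every union of $G$ proper affine subspaces is $P$-null, so the population bound is positive at every trimming level.
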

\begin{proof}
The proof of both parts relies on a contradiction argument based on the regularity condition \eqref{PR}, which prevents the distribution $P$ from being concentrated on $G$ subspaces of dimensions $q_1, \ldots, q_G$ after trimming a fraction $\alpha$ of the probability mass.

\textsl{(i)} We reason by contradiction. Let us assume that no such $h > 0$ exists. Then, there exists a sequence of parameters $\{\theta_{n}\}_{n=1}^{\infty} \subset \Theta_{c_1, c_2}$ such that
\begin{equation}\label{lim_inf_z}
\lim_{n\rightarrow\infty} \mathbb{E}_{P} \left[ \sum_{g=1}^{G} z_{g}(\cdot; \theta_{n}) \| \text{Pr}_{\mathcal{B}_{g}^{n}}(\cdot) - \cdot \|^{2} \right] = 0,
\end{equation}
where $z_{g}(\cdot, \theta_{n})$ are the optimal assignments for the objective function of the tHDDC algorithm for the $\theta_n$ parameters.

Given an affine subspace $\mathcal{B}$ and $r>0$, let $S(\mathcal{B},r)=\{x\in \mathbb{R}^p: \|x-\text{Pr}_{\mathcal{B}}(x)\| \leq r\}$ denote the ``strip'' of radius $r$ centered at the affine subspace $\mathcal{B}$. Taking into account the arguments in \cite{GarciaEscudero_et_al_2009_RobustLinearClustering} for the RLG, we may consider radii $r_n$ defined as
$$r_n =\inf \{ r \geq 0 : P[\cup_{g=1}^G S(\mathcal{B}_g^n,r)] \geq 1-\alpha\},
$$
such that
\begin{equation*}
 \mathbb{E}_{P} \left[
 \sum_{g=1}^{G} z_{g}(\cdot; \theta_{n})
 \| \text{Pr}_{\mathcal{B}_{g}^n}(\cdot) - \cdot \|^{2} \right]
 \geq \mathbb{E}_{P}
 \left[ I\left\{\cup_{g=1}^G S(\mathcal{B}_g^n,r_n)\right\}(\cdot) \min_{g=1,...,G}\| \text{Pr}_{\mathcal{B}_{g}^n}(\cdot) - \cdot \|^{2} 
 \right],
\end{equation*}
because the $z_g$ are 0-1 functions (as indicator functions) such that $P[x\in \mathbb{R}^p:\sum_{g=1}^G z_g(x; \theta_{n})=1] \geq 1-\alpha$. Consequently, (\ref{lim_inf_z}) would then imply
\begin{equation}\label{contradiction_w}
\lim_{n\rightarrow\infty} \mathbb{E}_{P} \left[  I\left\{\cup_{g=1}^G S(\mathcal{B}_g^n,r_n)\right\}(\cdot) \min_{g=1,...,G}\| \text{Pr}_{\mathcal{B}_{g}^n}(\cdot) - \cdot \|^{2} \right] = 0.
\end{equation}

Let us also consider
$b_{g}^n = \operatorname*{arg\,min}_{x \in \mathcal{B}_{g}^n} \|x\|$, so that the affine subspace $\mathcal{B}_{g}^{n}$ can be written as the translation of the linear subspace generated by $q_g$ orthonormal vectors, passing through $b_{g}^n$.
Like the orthogonal bases belong to a compact group we can extract a convergent subsequence for each index $g$. So working with the corresponding subsequences (denoted again with the same notation $\{\mathcal{B}_{g}^n\}_{n=1}^{\infty}$), the potential lack of convergence of subsequences of the affine subspaces arises only if the sequence of $\|b_{g}^n\|_{n=1}^{\infty}$ diverges to $\infty$.

Taking into account that $P$ is a tight probability, we may consider a sequence $\lbrace \frac{\alpha}{2n}\rbrace_{k=1}^\infty$ and an increasing sequence $\{ M_n\}_{n=1}^\infty$ such that
\begin{equation*}
P(B(0,M_n))>1-\frac{\alpha}{2n}\text{ for every }n.
\end{equation*}
Let us now consider
$J = \{ g \in \{1,\ldots,G\} : \mathcal{B}_{g}^{n} \to \mathcal{B}_{g}^{0} \}$, as
the set of indices $g$ such that the sequence of affine subspaces $\mathcal{B}_{g}^{n}$ converges to limiting affine subspaces $\mathcal{B}_{g}^{0}$. Let us first prove that $J$ is non-empty, i.e., that it is not possible that all the $\{b_{g}^n\}_{n=1}^{\infty}$ sequences diverge. In that case, given that the sets $S_n=\cup_{g=1}^G S(\mathcal{B}_g^n,r_n)$ have probability equal or greater than $1-\alpha$, we would have that $P\big[S_n \cap B(0, M_1)\big] > 1- 3\alpha/2$, and  that for all points $x\in S_n \cap B(0, M_1)$ it holds that
\begin{equation*}
\|x- \text{Pr}_{\mathcal{B}_{g}^n}(x)\|\ge \min_{ 1\leq g\leq G} \|b_g^n\|-M_1 \xrightarrow{n \to \infty} \infty,
\end{equation*}
yielding a contradiction with  \eqref{contradiction_w}. 
Additionally, for the indices $g \notin J$, we can extract further subsequences such that $\|b_g^n\|\geq 2M_n$, for every $n$. We will use this subsequence to show that the sequence of radii $\{ r_{n}\}_{n=1}^\infty$ can not diverge. Suppose, to the contrary, that $r_n \rightarrow \infty$ and let any $g_{0} \in J$, with $b_{g_0}^n \to b_{g_0}^0$ for $b_{g_0}^0\in \mathbb{R}^p$, and take any $R>\|b_{g_0}^0\|+M_1$. For any $x \in B(0, M_1)$, 
we have $\|x- \text{Pr}_{\mathcal{B}_{g_0}^n}(x)\|\leq \|b_{g_0}^n\|+M_1\rightarrow \|b_{g_0}^0\|+M_1 < R$, and thus $B(0,M_1) \subset S(\mathcal{B}_{g_0}^n,R)$, for $n$ large enough, such that $P[S(\mathcal{B}_{g_0}^n,R)]>1-\alpha$ and, consequently, by the definition of $r_n$ we should have $r_n < R$, so contradicting the unboundedness of $\{r_n\}_{n=1}^{\infty}$.

Finally, we will show that divergent subspaces cannot attract mass in the limit, in the sense that $P[S(\mathcal{B}_{g_0}^n,r_n)] \to 0$, as $n \to \infty$, for any $g \notin J$. In that case, due to the divergence of $\|b_g^n\|$ for $g \notin J$, we can take a subsequence satisfying $\|b_g^n\| \geq 2M_n +R$, for some $R<\infty$ satisfying $r_n \leq R$ uniformly. Therefore, if $x$ fulfill that $\|x- \text{Pr}_{\mathcal{B}_{g}^n}(x)\|\leq r_n$ then, the triangular inequality, would tell us that $x\notin B(0,M_n)$ for a further appropriate subsequence, and, consequently, $P[S(\mathcal{B}_{g}^n,r_n)] \leq   P[B(0,M_n)^{\text{c}}] \leq \alpha/2n \to 0$.

To conclude the proof, note that we have just seen that  $\mathbb{E}_{P} \big[\cup_{g\in J}^G S(\mathcal{B}_g^n,r_n)\big] \rightarrow 1 - \alpha$, as $n\to\infty$. Consequently, the fact that there exist subsequences such that $\mathcal{B}_g^n \to \mathcal{B}_g^0$, for some $g \in J$ indices (with $J\neq\emptyset$), and $r_n \to r_0$ for some $0\leq r_0 < \infty$, implies that the convergence in \eqref{contradiction_w} can only hold if the probability mass of $P$ within the set
$\cup_{g\in J}^{G} S(\mathcal{B}_g^0,r_0)$,
whose probability mass is at least $1-\alpha$, is in fact concentrated on the affine subspaces $\mathcal{B}_g^0$, for $g\in J$ (note that $\| \text{Pr}_{\mathcal{B}_{g}^0}(x) - x \|^{2}>0$ when $x\notin \cup_{g\in J}\mathcal{B}_g^0$). This finally implies that the distribution $P$ is concentrated on $G$ subspaces of dimensions
$q_1,\ldots,q_G$.

\textsl{(ii)} To prove the empirical version of (i), we also proceed by contradiction, showing that it is not possible that
$$
\liminf_{n\to\infty}\mathbb{E}_{P_n}\left[\sum_{g=1}^{G}z_g(\cdot;\theta_n)\|\text{Pr}_{\mathcal{B}_g^n}(\cdot)-\cdot\|^2\right]=0,
$$ $P$-almost surely.
Analogously to the previous case, we can replace the optimal assignments in the objective function of tHDDC by the assignment functions associated with the (empirical) RLG, which are based on $G$ ``strips'' around the empirical affine subspaces $\mathcal{B}_g^n$ with the same empirical radius $r_n$. The proof technically follows analogous steps to those shown in the proof of part (i), but relies on the tightness of the sequence of empirical measures $\{P_n\}_{n=1}^{\infty}$. 
Take also into account that the family of functions representing the truncated orthogonal distances to a finite set of affine subspaces forms a Glivenko-Cantelli class, so that uniform convergence transfers this empirical limit back to the population measure.
\end{proof}

%Analogously to the previous case, we can change the optimal assignments for the objective function of the tHDDC by the assignment functions $w_g^k$ described in the same way as in part (i). Since the sequence of empirical measures $\{P_n\}_{n=1}^{\infty}$ converges weakly to $P$ almost surely, it is uniformly tight. This allows us to find a fixed compact ball $B(0,M)$ such that $P_n(B(0,M))>1-\alpha/2$ for all sufficiently large $n$. Consequently, the lower-bounding and compactness arguments from part (i) apply directly to the empirical setting: the empirical trimming radii $r_n$ are bounded almost surely, and we can identify a subset of indices $J\subseteq\{1,\dots,G\}$ such that the non-divergent empirical subspaces $\{B_g^n\}_{g\in J}$ converge to a set of limit subspaces $\{B_g^0\}_{g\in J}$, while divergent subspaces asymptotically attract zero mass. Finally, because the family of functions representing the truncated orthogonal distances to a finite set of affine subspaces forms a Glivenko-Cantelli class, uniform convergence transfers this empirical limit back to the population measure. This implies that $P$ must concentrate at least a mass of $1-\alpha$ on the union $\bigcup_{g\in J}B_g^0$, which directly contradicts the regularity condition and proves the existence of a lower bound $h'>0$ with probability 1 for $n$ sufficiently large.

Other technical results necessary for the proof of consistency are stated below. The complete proofs can be found in \cite{garcia2008general}.

\begin{lemma}
    Given a compact set $K$, the class of functions 
    \begin{equation}\label{class_H}
        \mathcal{H}=\lbrace I_{\left[u,\infty\right)}\left(D^H(\cdot;\theta)\right)\sum_{g=1}^Gz^\ast_g\left(\cdot;\theta\right)\log(D_g(\cdot;\theta))\quad : \quad \theta\in K,\:u\geq0\rbrace
    \end{equation}
    is a Glivenko-Cantelli class, with $z_g^\ast(x;\theta)=I\lbrace x\::\:D^H(x;\theta)=D_g(x;\theta)\rbrace$, i.e. all the points in $\mathbb{R}^p$ are assigned to some class through the functions $z_g^\ast$.
\end{lemma}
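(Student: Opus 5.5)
The plan is to derive the Glivenko--Cantelli property of $\mathcal{H}$ from two standard facts. First, a class with a finite Vapnik--Chervonenkis (or bracketing) structure and an integrable envelope is Glivenko--Cantelli. Second, this property is preserved under products and sums of finitely many such classes, provided the resulting class still has an integrable envelope (Theorem 2.4.3 and the permanence results in \cite{van1996weak}). Following \cite{garcia2008general}, each element of $\mathcal{H}$ is written as a finite sum over $g=1,\ldots,G$ of products of three factors: the indicator $I_{[u,\infty)}(D^H(\cdot;\theta))$, the assignment indicator $z_g^\ast(\cdot;\theta)$, and the function $D_g(\cdot;\theta)$. Here I read the ``$\log(D_g)$'' in \eqref{class_H} as $D_g$ itself, which is already a log-density. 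I would treat the three factors separately.

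\emph{Step 1 (the functions $D_g$).} Using \eqref{bayes_hd_first}, I would expand $D_g(x;\theta)$ as a quadratic polynomial in $x$:
\[
D_g(x;\theta)=a_g(\theta)+b_g(\theta)^\top x-\tfrac12 x^\top A_g(\theta)x,
\]
where $A_g(\theta)=\Sigma_g^{-1}$. The coefficients depend continuously on $\theta$, so they are uniformly bounded on the compact set $K$. Because $K\subset\Theta_{c_1,c_2}$ is compact, the eigenvalues $\lambda_g$ and $\lambda_{gl}$ are bounded away from $0$ and from $\infty$ on $K$, and the weights $\pi_g$ appearing in $a_g(\theta)$ stay bounded away from $0$. (Components with $\pi_g=0$ are handled by the convention used in the existence proof, where they receive no mass.) The family $\{D_g(\cdot;\theta):\theta\in K\}$ therefore lies in a finite-dimensional vector space of functions. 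It is VC-subgraph, and it has the envelope $C(1+\|x\|^2)$.

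\emph{Step 2 (the indicators).} Each set $\{D_g\ge D_{g'}\}$ and $\{D_g\ge u\}$ is of the form $\{x: q(x)\ge 0\}$ for a polynomial $q$ of degree at most $2$. Hence these sets form a VC class. Finite intersections and unions of VC classes are again VC, so the classes $\{z_g^\ast(\cdot;\theta)\}$ and $\{I\{D^H(\cdot;\theta)\ge u\}=I\{\max_g D_g\ge u\}\}$ are VC as well. Ties in the argmax only contribute a $P$-null boundary set and do not matter. Products of VC indicators with a VC-subgraph class having a common envelope yield a class with finite uniform entropy. Sums over $g$ preserve this.

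\emph{Step 3 (the envelope, which I expect to be the main obstacle).} The entropy theorem requires $\int F\,dP<\infty$, but no moments of $P$ are assumed. Here the indicator $I_{[u,\infty)}(D^H)$ is essential. On its support $D^H\ge u\ge0$, and since $D_g\le D^H\le \sup_{K}\max_g(\log\pi_g-\tfrac12\log|\Sigma_g|-\tfrac p2\log 2\pi)<\infty$, the integrand is bounded in absolute value by a constant $C_K$ uniformly in $\theta\in K$. So the envelope is bounded. (If one wants to allow $u$ negative, the same reasoning with $u$ restricted to a compact range gives the bound $\max(|u|,C_K)$, which suffices for the application where $R(\theta;P)$ is bounded on $K$.) With a bounded envelope, the finite-entropy class is Glivenko--Cantelli, which completes the argument.
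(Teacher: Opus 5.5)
Your argument is correct and follows the same route as the proof the paper defers to in \cite{garcia2008general}: the $D_g$ are quadratic polynomials in $x$, so the comparison and threshold sets are VC classes, the products and finite sums have finite uniform entropy, and the Glivenko--Cantelli property follows once an envelope is available. Reading $\log(D_g)$ as $D_g$, using the trimming indicator to get a bounded envelope without moment conditions, and letting $u$ range over a set bounded below (where $R(\theta;P)$ lives) are the right adjustments to the present notation. The one imprecision is your claim that the $\pi_g$ stay away from $0$ on $K$, which is not automatic; but components with $\pi_g=0$ simply drop out, leaving a finite union of classes of the same type, so the argument is unaffected.
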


\begin{lemma}\label{converg_emp}
    Let $P$ be an absolutely continuous distribution with a strictly positive density function. Then, for every compact subset $K$, we have that 
    \begin{equation}
        \sup_{\theta\in K}\vert R(\theta,P_n)-R(\theta,P)\vert\rightarrow 0,\quad P\textup{-a.e.},
    \end{equation}  
where $R(\theta,P)$ is as defined in Section \ref{sec_32}.
\end{lemma}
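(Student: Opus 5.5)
The statement concerns $R(\theta,P)$, which is the $\alpha$-quantile of the law of $D^H(X;\theta)$ when $X\sim P$. I would prove it with the usual scheme for uniform convergence of quantiles. First, the empirical distribution functions $H(u;\theta,P_n)$ converge to $H(u;\theta,P)$ uniformly in $(u,\theta)\in\mathbb{R}\times K$. Second, $H(\cdot;\theta,P)$ increases strictly across $R(\theta,P)$, and it does so at a rate that is uniform over the compact set $K$. Combining the two transfers the uniform convergence of distribution functions to their quantiles.

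\emph{Step 1 (uniform Glivenko--Cantelli for the distribution functions).} Since $D^H=\max_g D_g$, every sublevel set is an intersection,
\[
\{x: D^H(x;\theta)\le u\}=\bigcap_{g=1}^G\{x: D_g(x;\theta)\le u\}.
\]
By (\ref{posterior}), each $D_g(\cdot;\theta)$ is a quadratic polynomial in $x$. Hence each set $\{D_g\le u\}$ is the nonpositivity set of a function lying in a fixed finite-dimensional vector space, namely the polynomials of degree at most $2$ in $p$ variables. Such sets form a VC class, and finite intersections of VC classes are again VC. Therefore the class $\{\{D^H(\cdot;\theta)\le u\}:\theta\in K,u\in\mathbb{R}\}$ is Glivenko--Cantelli, which gives $\sup_{\theta\in K,u}|H(u;\theta,P_n)-H(u;\theta,P)|\to 0$ $P$-a.e.

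\emph{Step 2 (pointwise regularity).} Fix $\theta\in K$; each $\Sigma_g$ is positive definite, so each $D_g(\cdot;\theta)$ is a nondegenerate concave quadratic. Consequently the level set $\{D^H=u\}$ is contained in a finite union of quadric hypersurfaces, which has Lebesgue measure zero. Since $P$ is absolutely continuous, $H(\cdot;\theta,P)$ is continuous. The function $D^H(\cdot;\theta)$ is continuous, bounded above, and tends to $-\infty$ at infinity. Hence for every $u$ with $0<H(u;\theta,P)<1$ and every $\varepsilon>0$, the set $\{u-\varepsilon<D^H\le u+\varepsilon\}$ contains a nonempty open set. Because the density of $P$ is strictly positive, this set has positive $P$-mass, so $H(\cdot;\theta,P)$ is strictly increasing across $R(\theta,P)$ and $H(R(\theta,P);\theta,P)=\alpha$.

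\emph{Step 3 (uniformity over $K$; the main obstacle).} For each $\varepsilon>0$, I need
\[
\delta(\varepsilon):=\inf_{\theta\in K}\min\{\alpha-H(R(\theta)-\varepsilon;\theta,P),\;H(R(\theta)+\varepsilon;\theta,P)-\alpha\}>0.
\]
\begin{itemize}
\item The map $(u,\theta)\mapsto H(u;\theta,P)$ is jointly continuous. This follows by dominated convergence, since $D^H(x;\theta)$ is continuous in $(x,\theta)$ and the level sets are $P$-null by Step 2.
\item The quantiles $R(\theta,P)$ are bounded on $K$. Indeed, $D^H$ is bounded above uniformly on compacts in $(x,\theta)$, and joint continuity together with the compactness of $K$ bounds the quantiles from below.
\item Suppose $\delta(\varepsilon)=0$. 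Then there exist $\theta_k\to\theta^*$ with, say, $H(R(\theta_k)+\varepsilon;\theta_k,P)-\alpha\to 0$, and by boundedness we may assume $R(\theta_k)\to r$.
\item Joint continuity then gives $H(r;\theta^*,P)=\alpha$ and $H(r+\varepsilon;\theta^*,P)=\alpha$. This contradicts the strict increase established in Step 2.
\end{itemize}
The case with $R(\theta)-\varepsilon$ is handled symmetrically.

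\emph{Step 4 (conclusion).} On the almost sure event of Step 1, for $n$ large we have $\sup_{\theta,u}|H(u;\theta,P_n)-H(u;\theta,P)|<\delta(\varepsilon)$. Then for every $\theta\in K$,
\[
H(R(\theta)-\varepsilon;\theta,P_n)<\alpha\le H(R(\theta)+\varepsilon;\theta,P_n).
\]
By the definition $R(\theta;P_n)=\inf\{u:H(u;\theta,P_n)\ge\alpha\}$, this forces $|R(\theta,P_n)-R(\theta,P)|\le\varepsilon$ uniformly over $K$.

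The same argument applies verbatim to $D^M$, with one change in Step 1: $\{D^M\le u\}$ is no longer a polynomial sublevel set. There I would instead use the continuity of $\theta\mapsto D^M(\cdot;\theta)$ together with a bracketing argument over compact $K$.
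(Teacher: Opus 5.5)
Your proposal is correct and follows essentially the route of the proof that the paper defers to in \cite{garcia2008general}. That route combines uniform Glivenko--Cantelli convergence of $H(u;\theta,P_n)$ over the VC class of sets $\{D^H(\cdot;\theta)\le u\}$ with continuity and uniform (over compact $K$) strict increase of $H(\cdot;\theta,P)$ around $R(\theta,P)$, coming from the strictly positive density; the only step that is not literally ``verbatim'' in the mixture case is Step 2, where the fact that $\{D^M(\cdot;\theta)=u\}$ is Lebesgue-null should be justified by $D^M(\cdot;\theta)$ being a non-constant real-analytic function rather than by quadric hypersurfaces.
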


\bibliography{references}
\bibliographystyle{apalike}

\end{document}